\providecommand{\tabularnewline}{\\}
\newcommand{\lyxdot}{.}
\theoremstyle{plain}
\newtheorem{thm}{\protect\theoremname}
  \theoremstyle{remark}
  \newtheorem{rem}{\protect\remarkname}
  \theoremstyle{plain}
  \newtheorem{prop}{\protect\propositionname}
  \theoremstyle{plain}
  \newtheorem{cor}{\protect\corollaryname}
  \theoremstyle{plain}
  \newtheorem{lem}{\protect\lemmaname}
\theoremstyle{definition}\newtheorem{assumption}{Assumption}
\providecommand{\tabularnewline}{\\}
\numberwithin{equation}{section}
\numberwithin{figure}{section}
\theoremstyle{remark}
\date{\today. We are grateful to Frank Diebold, James MacKinnon, Alex Tetenov, Daniel Wilhelm, 
participants to the Bristol Econometric Study Group 2018 
and the 13th GNYMA Econometrics Colloquium, Princeton, 
for useful comments and discussions, 
and to Richard Blundell, Joel Horowitz and Matthias Parey 
for sharing the data used in the demand for gasoline empirical illustration. The Supplementary Material is available at \href{https://samistouli.com/}{https://samistouli.com/}.}
\thanks{\noindent $^\dag$ Nuffield College, Oxford, and Department of Economics, Johns Hopkins University, 
rspady@jhu.edu.}
\thanks{ $\S$ Department of Economics, University of Bristol, 
s.stouli@bristol.ac.uk.}
  \providecommand{\lemmaname}{Lemma}
  \providecommand{\propositionname}{Proposition}
  \providecommand{\remarkname}{Remark}
\providecommand{\corollaryname}{Corollary}
\providecommand{\theoremname}{Theorem}
\begin{document}

\title{Simultaneous Mean-Variance Regression}

\author{Richard H. Spady$^\dag$ and Sami Stouli$^\S$}
\begin{abstract}
We propose simultaneous mean-variance regression for the linear estimation
and approximation of conditional mean functions. In the presence of
heteroskedasticity of unknown form, our method accounts for varying
dispersion in the regression outcome across the support of conditioning
variables by using weights that are jointly determined with the mean
regression parameters. Simultaneity generates outcome predictions
that are guaranteed to improve over ordinary least-squares prediction
error, with corresponding parameter standard errors that are automatically
valid. Under shape misspecification of the conditional mean and variance
functions, we establish existence and uniqueness of the resulting
approximations and characterize their formal interpretation and robustness
properties. In particular, we show that the corresponding mean-variance
regression location-scale model weakly dominates the ordinary least-squares
location model under a Kullback-Leibler measure of divergence, with
strict improvement in the presence of heteroskedasticity. The simultaneous
mean-variance regression loss function is globally convex and the
corresponding estimator is easy to implement. We establish its consistency
and asymptotic normality under misspecification, provide robust inference
methods, and present numerical simulations that show large improvements
over ordinary and weighted least-squares in terms of\textbf{ }estimation
and inference in finite samples. We further  illustrate our method
with two empirical applications to the estimation of the relationship
between economic prosperity in 1500 and today, and demand for gasoline
in the United States.
\end{abstract}

\maketitle
\textsc{Keywords}: Conditional mean and variance functions, simultaneous
approximation, heteroskedasticity, robust inference, misspecification,
convexity, influence function, ordinary least-squares, linear regression,
dual regression.

\section{Introduction}

Ordinary least-squares (OLS) is the method of choice for the linear
estimation and approximation of the conditional mean function (CMF).
However, in the presence of heteroskedasticity the standard errors
of OLS are inconsistent, and subsequent inference is therefore unreliable.
As a way of achieving valid inference, practitioners instead often
use the heteroskedasticity-corrected standard errors of Eicker (\citeyear{Eicker:1963},
\citeyear{Eicker:1967}), \citet{Huber:1967} and \citet{White:1980a}.
Although valid asymptotically, numerous limitations of this approach
have been highlighted in the literature such as bias and sensitivity
to outliers, incorrect size and low power of robust tests in finite
samples (\citealp{MKW:1985}; \citealp{CJ:1987}; \citealp{Chesher:1989};
\citealp{CA1991}). These findings in turn generated a large number
of proposals in order to reconcile the large-sample validity of the
approach and its observed finite-sample limitations, surveyed in \citet{MacKinnon:2013}.

The finite-sample limitations of OLS-based inference essentially originate
from the fact that OLS assigns a constant weight to each observation
in fitting the best linear predictor for the regression outcome. Hence
the least-squares criterion does not account for the varying accuracy
of the information available about the outcome across the covariate
space. This yields point estimates and linear approximations that
are sensitive to high-leverage points and outliers, which in turn
generate biased estimates of the residuals' second moments used in
the calculation of the robust variance-covariance matrix of OLS parameters.
In finite samples, uniform weighting not only compromises the validity
of OLS-based statistical inference in the presence of heteroskedasticity,
but also the reliability of OLS point estimates.

In this paper, we propose simultaneous mean-variance regression (MVR)
as an alternative to OLS for the linear estimation and approximation
of CMFs. MVR characterizes the conditional mean and variance functions
jointly, thereby providing a solution to the problems of estimation,
approximation and inference in the presence of heteroskedasticity
of unknown form with five main features. First, it incorporates information
from the second conditional moment in the determination of the first
conditional moment parameters. Second, simultaneity generates approximations
with improved robustness properties relative to OLS. Third, the resulting
approximations have a formal interpretation under shape misspecification
of the conditional mean and variance functions. Fourth, MVR solutions
are well-defined, i.e., exist and are unique. Fifth, standard errors
of the corresponding estimator are automatically valid in the presence
of heteroskedasticity of unknown form, and reduce to those of OLS
under homoskedasticity.

The MVR criterion can be interpreted as a penalized weighted least-squares
(WLS) loss function. The presence and the form of the penalty ensure
global convexity of the objective function, so that MVR conditional
mean and variance approximations are jointly well-defined. This differs
from the usual WLS approach where a sequential procedure is followed,
obtaining the weights first, and then implementing a weighted regression
to determine the parameters of the linear specification. Our simultaneous
approach allows us to give theoretical guarantees on the relative
approximation properties of MVR and OLS. We use MVR to construct and
estimate a new class of approximations of the conditional mean and
variance functions, with improved robustness and precision in finite
samples. We establish the interpretation of MVR approximations, we
derive the asymptotic properties of the corresponding MVR estimator,
and we give tools for robust inference. We also illustrate the practical
benefits of the MVR estimator with extensive numerical simulations,
and find very large finite-sample improvements over both OLS and WLS
in terms of estimation performance and heteroskedasticity-robust inference.

This paper generalizes the results of \citet{SS:2018} for the primal
problem of the dual regression estimator of linear location-scale
models. We provide a unified theory allowing for a large class of
scale functions. This paper is also related to the interpretation
of OLS under misspecification of the shape of the CMF. OLS gives the
minimum mean squared error linear approximation to the CMF, an important
motivation for its use in empirical work (\citealp{White:1980b};
\citealp{Chamberlain:1984}; \citealp{Angrist:Krueger:1999}; \citealp{AP:2008}).
MVR introduces a class of WLS approximations accounting for potential
variation in the outcome across the support of conditioning variables,
and with weights that have a clear interpretation under misspecification.
Our approach thus complements the textbook WLS proposal of \citet{CT:2005}
and Wooldridge (\citeyear{Wooldridge:2010}, \citeyear{Wooldridge:2012})
(see also \citealp{RW:2017}), who advocate the reweighting of OLS
with generalized least-squares (GLS) weights and further correcting
the standard errors for heteroskedasticity.

This paper makes three main contributions.\textbf{ }First, we show
existence and uniqueness of MVR solutions under general misspecification,
thereby introducing a new class of location-scale models corresponding
to MVR approximations. The results in \citet{SS:2018} did not cover
the case of misspecified conditional mean and variance functions.
Second, we establish favorable approximation and robustness properties
of MVR relative to OLS. We show that MVR is a minimum WLS linear approximation
to the CMF, with weights determined such that the MVR approximation
improves over OLS in the presence of heteroskedasticity under the
MVR loss. For our main specifications of the scale function, we further
show that OLS root mean squared prediction error is an upper bound
for the MVR weighted mean squared prediction error. We then extend
this result to show that under a Kullback-Leibler information criterion
(KLIC) the proposed MVR location-scale models weakly dominate the
OLS location model, with strict improvement in the presence of heteroskedasticity.
These results provide theoretical guarantees motivating the use of
MVR over OLS, and are not shared by alternative WLS proposals. Third,
we derive the asymptotic distribution of the MVR estimator under misspecification
and provide robust inference methods. In particular we propose a robust
one-step heteroskedasticity test that complements existing OLS-based
tests (e.g., \citealp{BP:1979}; \citealp{White:1980a}; \citealp{Koenker:1981}).

The rest of the paper is organized as follows. Section \ref{sec:SMVR}
introduces MVR under correct specification of conditional mean and
variance functions. Section \ref{sec:Approximation-Properties} establishes
the main approximation properties of MVR under misspecification, including
existence and uniqueness. Section \ref{sec:Estimation-and-Inference}
gives asymptotic theory. Section \ref{sec:Numerical-Illustrations}
reports the results of an empirical application to the relationship
between economic prosperity in 1500 and today, and illustrates the
finite-sample performance of MVR with numerical simulations. All proofs
of the main results are given in the Appendix. The online Appendix
Spady and Stouli (2018) contains supplemental material, including
an additional empirical application to demand for gasoline in the
United States.

\section{Simultaneous Mean-Variance Regression\label{sec:SMVR}}

\subsection{The Mean-Variance Regression Problem\label{subsec:MVR problem}}

Given a scalar random variable $Y$ and a random $k\times1$ vector
$X$ that includes an intercept, i.e., has first component $1$, denote
the mean and standard deviation functions of $Y$ conditional on $X$
by $\mu(X):=E[Y\mid X]$ and $\sigma(X):=E[(Y-E[Y\mid X])^{2}\mid X]^{1/2}$,
respectively. We start with a simplified setting where the conditional
mean and variance functions take the parametric forms
\begin{equation}
\mu(X)=X'\beta_{0},\quad\quad\sigma(X)^{2}=s(X'\gamma_{0})^{2},\label{eq:Model}
\end{equation}
for some positive scale function $t\mapsto s(t)$, and where the parameters
$\beta_{0}$ and $\gamma_{0}$ belong to the parameter space $\Theta=\mathbb{R}^{k}\times\Theta_{\gamma}$,
with $\Theta_{\gamma}=\{\gamma\in\mathbb{R}^{k}\,:\,\Pr[s(X'\gamma)>0]=1\}$.
Two leading examples for the scale function are the linear and exponential
specifications $s(t)=t$ and $s(t)=\exp(t)$, with domains $(0,\infty)$
and $\mathbb{R}$, respectively.

The parameter vector $\theta_{0}:=(\beta_{0},\gamma_{0})'$ is uniquely
determined as the solution to the globally convex MVR population problem
\begin{equation}
\min_{\theta\in\Theta}E\left[\frac{1}{2}\left\{ \left(\frac{Y-X'\beta}{s(X'\gamma)}\right)^{2}+1\right\} s(X'\gamma)\right].\label{eq:MVR}
\end{equation}
When the functions $x\mapsto\mu(x)$ and $x\mapsto\sigma(x)^{2}$
satisfy model (\ref{eq:Model}), they are simultaneously characterized
by problem (\ref{eq:MVR}). As a consequence, MVR incorporates information
on the dispersion of $Y$ across the support of $X$ in the determination
of the mean parameter $\beta$. We show below that problem (\ref{eq:MVR})
is formally equivalent to an infeasible sequential least-squares estimator
of the conditional mean and variance functions for model (\ref{eq:Model}).
Problem (\ref{eq:MVR}) is a generalization of the dual regression
primal problem introduced in \citet{SS:2018}, for which the scale
function is linear. Considering scale functions with domain the real
line, such as the exponential function, allows the transformation
of the dual regression primal problem into an unconstrained convex
problem over $\Theta=\mathbb{R}^{2\times k}$.

Inspection of the first-order conditions confirms that $\theta_{0}$
is indeed a valid solution to problem (\ref{eq:MVR}). Denoting the
derivative of the scale function by $s_{1}(t):=\partial s(t)/\partial t$
and letting $e\left(Y,X,\theta\right):=(Y-X'\beta)/s(X'\gamma)$,
the first-order conditions of (\ref{eq:MVR}) are
\begin{align}
E\left[Xe\left(Y,X,\theta\right)\right] & =0\label{eq:FOC1}\\
E\left[Xs_{1}(X'\gamma)\{e\left(Y,X,\theta\right)^{2}-1\}\right] & =0.\label{eq:FOC2}
\end{align}
These conditions are satisfied by $\theta_{0}$ since specification
(\ref{eq:Model}) is equivalent to the location-scale model
\begin{equation}
Y=X'\beta_{0}+s(X'\gamma_{0})\varepsilon,\quad E[\varepsilon\mid X]=0,\quad E[\varepsilon^{2}\mid X]=1.\label{eq:loc-scale model}
\end{equation}
Therefore, the parameter vector $\theta_{0}$ also satisfies the relations
\begin{align*}
E\left[e(Y,X,\theta_{0})\mid X\right]=E\left[\varepsilon\mid X\right] & =0\\
E\left[e(Y,X,\theta_{0})^{2}-1\mid X\right]=E\left[\varepsilon^{2}-1\mid X\right] & =0,
\end{align*}
which imply that $E[h(X)e(Y,X,\theta_{0})]=0$ and $E[h(X)\{e(Y,X,\theta_{0})^{2}-1\}]=0$
hold for any measurable function $x\mapsto h(x)$, and in particular
for $h(X)=X$ and $h(X)=Xs_{1}(X'\gamma)$.

\subsection{Formal Framework}

Let $\mathcal{X}$ denote the support of $X$, and for a vector $u=(u_{1},\ldots,u_{k})'\in\mathbb{R}^{k}$,
let $||\cdot||$ denote the Euclidean norm, i.e., $||u||=(u_{1}^{2}+\ldots+u_{k}^{2})^{1/2}$;
we define a compact subset $\Theta^{c}\subset\Theta$ as
\[
\Theta^{c}:=\left\{ \theta\in\Theta\,:\,||\theta||\leq C_{\theta}\:\textrm{and}\:\inf_{x\in\mathcal{X}}s(x'\gamma)\geq C_{s}\right\} ,
\]
for some finite constant $C_{\theta}$ and some constant $C_{s}>0$,
with interior set denoted $\textrm{int}(\Theta^{c})$. The second
and third derivatives of the scale function $t\mapsto s(t)$ are denoted
by $s_{j}(t):=\partial^{j}s(t)/\partial t^{j}$, $j=2,3$. We also
denote the MVR objective function in (\ref{eq:MVR}) by $Q(\theta):=E[\{e\left(Y,X,\theta\right)^{2}+1\}s(X'\gamma)/2]$. 

Our first assumption specifies the class of scale functions we consider.

\begin{assumption}For $a=0$ or $-\infty$, the scale function $s:(a,\infty)\rightarrow(0,\infty)$
is a three times differentiable strictly increasing convex function
that satisfies $\lim_{t\rightarrow a}s(t)=0$ and $\lim_{t\rightarrow\infty}s(t)=\infty$.\label{ass:ScaleSpec}\end{assumption}

Assumption \ref{ass:ScaleSpec} encompasses several types of scale
functions such as polynomial specifications $s(x'\gamma)=(x'\gamma)^{\alpha}$
with $a=0$ and $\Pr[X'\gamma>0]=1$, or exponential-polynomial specifications
$s(x'\gamma)=\exp(x'\gamma)^{\alpha}$ with $a=-\infty$, for some
$\alpha>0$. For $\alpha=1$, we recover the linear and exponential
scale leading cases. 

The next assumptions complete our formal framework.

\begin{assumption}The conditional variance function $x\mapsto\sigma(x)^{2}$
is bounded away from $0$ uniformly in $\mathcal{X}$.\label{ass:Var(Y|X)}\end{assumption}

\begin{assumption}We have (i) $E[Y^{4}]<\infty$ and $E||X||^{4}<\infty$,
and, (ii) for all $\gamma\in\Theta_{\gamma}$, $E[\left\Vert X\right\Vert ^{4}s_{2}(X'\gamma)^{2}]<\infty$,
$E[\left\Vert X\right\Vert ^{6}s_{3}(X'\gamma)^{2}]<\infty$ and $E[\left\Vert X\right\Vert ^{6}s_{1}(X'\gamma)^{2}s_{2}(X'\gamma)^{2}]<\infty$.\label{ass:Moments}\end{assumption}

\begin{assumption}For all $\gamma\in\Theta_{\gamma}$, $E[XX'/s(X'\gamma)]$
is nonsingular.\label{ass:FullRank}\end{assumption}

Assumptions \ref{ass:ScaleSpec}-\ref{ass:FullRank} are sufficient
conditions for global convexity of the MVR criterion over the parameter
space $\Theta$, and therefore for problem (\ref{eq:MVR}) to have
a unique solution.
\begin{thm}
\label{thm:Uniqueness}If Assumptions \ref{ass:ScaleSpec}-\ref{ass:FullRank}
hold, and the conditional mean and variance functions of $Y$ given
$X$ satisfy model (\ref{eq:Model}) a.s. with $\theta_{0}\in\textrm{int}(\Theta^{c})$,
then $\theta_{0}$ is the unique minimizer of $Q(\theta)$ over $\Theta$.
\end{thm}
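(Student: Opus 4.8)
The plan is to deduce the uniqueness claim from two ingredients: convexity of $Q$ on the convex set $\Theta$, and the fact that $\theta_{0}$ is a stationary point. A differentiable convex function attains its global minimum precisely at its stationary points, and strict convexity collapses that set to a single point, so it suffices to show that $Q$ is (strictly) convex and that $\theta_{0}$, which lies in $\textrm{int}(\Theta^{c})\subset\Theta$, satisfies $\nabla Q(\theta_{0})=0$. The convexity half is exactly what the discussion preceding the theorem asserts under Assumptions \ref{ass:ScaleSpec}--\ref{ass:FullRank}, and I would organize the proof around establishing it and then invoking stationarity.

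The stationarity half is essentially already in hand. Under model (\ref{eq:Model}), equivalently the location-scale representation (\ref{eq:loc-scale model}), the relations $E[\varepsilon\mid X]=0$ and $E[\varepsilon^{2}\mid X]=1$ give $E[e(Y,X,\theta_{0})\mid X]=0$ and $E[e(Y,X,\theta_{0})^{2}-1\mid X]=0$; choosing $h(X)=X$ and $h(X)=Xs_{1}(X'\gamma)$ reproduces the first-order conditions (\ref{eq:FOC1})--(\ref{eq:FOC2}), i.e.\ $\nabla_{\beta}Q(\theta_{0})=0$ and $\nabla_{\gamma}Q(\theta_{0})=0$. So the substantive work is to certify that this interior stationary point is the unique global minimizer, which reduces to the convexity analysis.

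For convexity I would compute $\nabla^{2}Q(\theta)$ directly, justifying differentiation under the expectation by the moment bounds in Assumption \ref{ass:Moments}. Writing $e=e(Y,X,\theta)$, $s=s(X'\gamma)$, $s_{j}=s_{j}(X'\gamma)$, the blocks are $\nabla_{\beta\beta}Q=E[XX'/s]$, $\nabla_{\beta\gamma}Q=E[XX'(s_{1}/s)e]$, and $\nabla_{\gamma\gamma}Q=E[XX'\{\tfrac{1}{2}s_{2}(1-e^{2})+s_{1}^{2}e^{2}/s\}]$. Evaluating the quadratic form in a direction $(a,b)$ and setting $p=X'a$, $q=X'b$, completing the square yields
\[
(a,b)'\nabla^{2}Q(\theta)(a,b)=E\!\left[\frac{1}{s}\bigl(p+s_{1}eq\bigr)^{2}+\frac{s_{2}}{2}\bigl(1-e^{2}\bigr)q^{2}\right].
\]
The first term is nonnegative and, by the full-rank Assumption \ref{ass:FullRank}, strictly positive whenever $a\neq0$, so the $\beta$-direction is strictly controlled; Assumption \ref{ass:Var(Y|X)} and the moment conditions keep all entries finite.

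The main obstacle is the second term $\tfrac{1}{2}s_{2}(1-e^{2})q^{2}$, whose sign is indefinite as soon as the scale function is strictly convex ($s_{2}>0$) and $e^{2}$ can exceed one. This is where Assumption \ref{ass:ScaleSpec} must do the real work: the admissible families (positive, increasing, convex, with the stated boundary behavior) tie $s_{2}$ to $s_{1}^{2}/s$, and the plan is to show that after taking the expectation the positive mass from $s_{1}^{2}e^{2}/s$ together with the $\beta$-block is enough to offset the negative contribution, so the form stays nonnegative and vanishes only at $q=0$. If a clean global pointwise bound on the Hessian proves elusive, the fallback is to concentrate out $\beta$ using Assumption \ref{ass:FullRank}, which gives a unique $\beta(\gamma)=(E[XX'/s])^{-1}E[XY/s]$ for each $\gamma$, and then analyze the resulting scalar-curvature problem in the $\gamma$-direction. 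Either route, once completed, combines with $\nabla Q(\theta_{0})=0$ to identify $\theta_{0}$ as the unique minimizer of $Q$ over $\Theta$, as claimed.
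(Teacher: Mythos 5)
Your setup --- verifying the first-order conditions (\ref{eq:FOC1})--(\ref{eq:FOC2}) at $\theta_{0}$ and then reducing everything to strict convexity of $Q$ --- is exactly the paper's architecture, and your Hessian computation, including the completed square
\[
(a,b)'\nabla^{2}Q(\theta)(a,b)=E\left[\frac{1}{s}\left(p+s_{1}eq\right)^{2}+\frac{s_{2}}{2}\left(1-e^{2}\right)q^{2}\right],
\]
coincides with the paper's decomposition of the Hessian into $H_{1}(\theta)+H_{2}(\theta)$ in Lemma \ref{lem:Convexity}. But your argument stops exactly where the theorem's difficulty is concentrated: you never establish that this expectation is positive for $(a,b)\neq0$, you only announce a ``plan,'' and neither route you sketch can be carried out as described. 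First, Assumption \ref{ass:ScaleSpec} does not ``tie $s_{2}$ to $s_{1}^{2}/s$'': the inequality $s_{2}\leq s_{1}^{2}/s$ is log-concavity of $s$, which is not assumed (it happens to hold for the linear, power and exponential examples, but is not implied by positivity, monotonicity, convexity and the limit conditions). Second, even granting such a bound, no pointwise or conditional-on-$X$ offset exists: taking $E[\cdot\mid X]$ and minimizing over $p$ leaves the residual $\frac{s_{1}^{2}q^{2}}{s}\textrm{Var}(e\mid X)+\frac{s_{2}}{2}q^{2}\left(1-E[e^{2}\mid X]\right)$, and since $\textrm{Var}(e\mid X)=\sigma(X)^{2}/s(X'\gamma)^{2}$ stays bounded while $E[e^{2}\mid X]$ grows without bound as $\beta$ moves away from $\beta_{0}$, this residual is negative on whole regions of the parameter space whenever $s_{2}>0$. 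So positive definiteness cannot be obtained term by term, and the convexity half of your proof does not close.

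What the paper does instead is treat the two pieces at the level of expectations: it proves $E[h_{1}]\succ0$ by a Schur-complement argument combined with the matrix Cauchy--Schwarz inequality (Lemma \ref{thm:posdef}, which is where Assumptions \ref{ass:Var(Y|X)} and \ref{ass:FullRank} enter through $\Pr[\textrm{Var}(s_{1}(X'\gamma)e\mid X)>0]=1$), and then asserts separately that $H_{2}(\theta)=E[h_{2}]$ is positive semidefinite, so that the sum is positive definite; Theorem \ref{thm:Uniqueness} then also invokes compactness of the level sets (Lemma \ref{lem:CompactLevSets}). To complete your proof you must supply the analogue of that second step --- and note it is genuinely delicate, since $-\tfrac{1}{2}E[XX's_{2}(X'\gamma)\{e^{2}-1\}]$ has no evident sign away from $\theta_{0}$ --- so your instinct that this is where the real work lies is correct, but the work is not done. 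Your concentration fallback inherits the same indefinite curvature term in the profiled objective. Finally, your claim that the first term is ``strictly positive whenever $a\neq0$'' does not by itself give strictness of the full form: strict positivity is also needed in pure $\gamma$-directions ($a=0$, $b\neq0$) and in mixed directions where $p$ partially cancels $s_{1}eq$, which is precisely what the paper's Schur-complement lemma is there to deliver.
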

Theorem \ref{thm:Uniqueness} applies when the conditional mean and
variance functions are well-specified, and thus provides primitive
conditions for identification of $\theta_{0}$ in the location-scale
model (\ref{eq:loc-scale model}). This extends the uniqueness result
in \citet{SS:2018} for objective (\ref{eq:MVR}) with a linear scale
function to the class of scale functions defined in Assumption \ref{ass:ScaleSpec}.
\begin{rem}
In the linear scale case, $s_{1}(t)=1$ and $s_{j}(t)=0$, $j=2,3$,
so that Assumption \ref{ass:Moments} reduces to Assumption \ref{ass:Moments}(i).
In the exponential scale case, $s_{j}(t)=\exp(t)$, $j=1,2,3$, so
that Assumption \ref{ass:Moments}(ii) reduces to the requirement
that $E[\left\Vert X\right\Vert ^{6}\exp(X'\gamma)^{4}]$ be finite.
This is satisfied for instance if $X$ is bounded.\qed
\end{rem}

\subsection{Simultaneous Mean-Variance Regression Interpretation}

Problem (\ref{eq:MVR}) is equivalent to an infeasible sequential
least-squares estimator of conditional mean and variance functions.
The first-order conditions of (\ref{eq:MVR}) can also be written
as
\begin{eqnarray}
E\left[\frac{X}{s(X'\gamma)}\left(Y-X'\beta\right)\right] & = & 0\label{eq:FOC1-2}\\
E\left[X\frac{s_{1}(X'\gamma)}{s(X'\gamma)^{2}}\left\{ (Y-X'\beta)^{2}-s(X'\gamma)^{2}\right\} \right] & = & 0.\label{eq:FOC2-2}
\end{eqnarray}
Given knowledge of $\gamma_{0}$, WLS regression of $Y$ on $X$ with
weights $1/s(X'\gamma_{0})$ has first-order conditions (\ref{eq:FOC1-2}),
with solution $\beta_{0}$. Moreover, given knowledge of $\beta_{0}$,
nonlinear WLS regression of $(Y-X'\beta_{0})^{2}$ on $X$ with weights
$1/s(X'\gamma_{0})^{3}$ and quadratic link function has first-order
conditions (\ref{eq:FOC2-2}), and therefore solution $\gamma_{0}$.
\begin{prop}
\label{prop:Equivalence}If Assumptions \ref{ass:ScaleSpec}-\ref{ass:FullRank}
hold, and (i) $E[Y^{4}]<\infty$, $E[\left\Vert X\right\Vert ^{4}]<\infty$
and $E[s(X'\gamma)^{4}]<\infty$ for all $\gamma\in\Theta_{\gamma}$,
and (ii) the conditional mean and variance functions of $Y$ given
$X$ satisfy model (\ref{eq:Model}) a.s., then the MVR population
problem (\ref{eq:MVR}) is equivalent to the infeasible sequential
estimator with first step
\begin{equation}
\beta_{0}=\arg\min_{\beta\in\Theta_{\beta}}E\left[\frac{1}{\sigma(X)}(Y-X'\beta)^{2}\right],\label{eq:InfeasibleWLS}
\end{equation}
and second step
\begin{equation}
\gamma_{0}=\arg\min_{\gamma\in\Theta_{\gamma}}E\left[\frac{1}{\sigma(X)^{3}}\left\{ (Y-X'\beta_{0})^{2}-s(X'\gamma)^{2}\right\} ^{2}\right].\label{eq:InfeasibleWNLS}
\end{equation}
\end{prop}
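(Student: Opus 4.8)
The plan is to show that the true parameter $\theta_0 = (\beta_0, \gamma_0)'$ uniquely solves each stage of the sequential procedure, so that it coincides with the unique solution of the MVR problem (\ref{eq:MVR}) guaranteed by Theorem \ref{thm:Uniqueness}. The integrability conditions in (i), together with Assumptions \ref{ass:ScaleSpec}--\ref{ass:FullRank}, ensure that both sequential objectives are finite and that differentiation under the expectation is valid.

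First I would treat the first stage (\ref{eq:InfeasibleWLS}). Its objective is quadratic in $\beta$ with Hessian $2E[XX'/\sigma(X)]$, which is positive definite by Assumption \ref{ass:FullRank} at $\gamma_0$ since $\sigma(X) = s(X'\gamma_0)$ under (\ref{eq:Model}); hence the problem is strictly convex and its first-order condition $E[(X/\sigma(X))(Y-X'\beta)] = 0$ characterizes the unique minimizer. Substituting $\sigma(X) = s(X'\gamma_0)$ shows this is exactly (\ref{eq:FOC1-2}) at $\gamma_0$, and correct specification of the mean, $E[Y - X'\beta_0 \mid X] = 0$, gives via iterated expectations that $\beta_0$ satisfies it. Thus the first stage returns $\beta_0$.

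Next I would turn to the second stage (\ref{eq:InfeasibleWNLS}). Differentiating with respect to $\gamma$, using $\partial s(X'\gamma)^2/\partial\gamma = 2 s(X'\gamma) s_1(X'\gamma) X$, yields the first-order condition
\begin{equation*}
E\left[\frac{s(X'\gamma) s_1(X'\gamma)}{\sigma(X)^3}\left\{(Y-X'\beta_0)^2 - s(X'\gamma)^2\right\} X\right] = 0.
\end{equation*}
Evaluated at $\gamma_0$, where $s(X'\gamma_0) = \sigma(X)$, this collapses to $E[(s_1(X'\gamma_0)/s(X'\gamma_0)^2)\{(Y-X'\beta_0)^2 - s(X'\gamma_0)^2\} X] = 0$, i.e., precisely (\ref{eq:FOC2-2}) at $\theta_0$; correct specification of the variance, $E[(Y-X'\beta_0)^2 - \sigma(X)^2 \mid X] = 0$, then makes $\gamma_0$ a stationary point. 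This confirms that the weights $1/\sigma(X)$ and $1/\sigma(X)^3$ are designed so that the sequential first-order conditions reproduce the MVR conditions (\ref{eq:FOC1-2})--(\ref{eq:FOC2-2}) at $\theta_0$.

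The main obstacle is uniqueness of the second-stage solution, since that objective is a nonlinear least-squares criterion and need not be globally convex in $\gamma$. I would resolve this through a decomposition: setting $m(X) := E[(Y-X'\beta_0)^2 \mid X] = s(X'\gamma_0)^2$ under (\ref{eq:Model}), the cross term vanishes by iterated expectations, so the objective equals a $\gamma$-free constant plus $E[\sigma(X)^{-3}\{s(X'\gamma_0)^2 - s(X'\gamma)^2\}^2]$. This remaining term is nonnegative and zero at $\gamma_0$, so $\gamma_0$ is a global minimizer; any minimizer must satisfy $s(X'\gamma)^2 = s(X'\gamma_0)^2$ a.s., and strict monotonicity of $s$ (Assumption \ref{ass:ScaleSpec}) together with the full-rank condition (Assumption \ref{ass:FullRank}) forces $\gamma = \gamma_0$. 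Hence both stages have unique solutions equal to the components of $\theta_0$, and the sequential estimator is equivalent to the MVR problem (\ref{eq:MVR}).
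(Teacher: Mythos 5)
Your proposal is correct, and its core is the same as the paper's: verify that the first-order conditions of the two sequential problems, evaluated at $(\beta_0,\gamma_0)$ with $\sigma(X)=s(X'\gamma_0)$ substituted, reproduce the MVR first-order conditions (\ref{eq:FOC1-2})--(\ref{eq:FOC2-2}), which $\theta_0$ satisfies under correct specification. The paper's proof stops essentially there -- it notes that (\ref{eq:FOC_IWLS})--(\ref{eq:FOC_IWNLS}) hold at $(\beta_0,\gamma_0)$ and are then equivalent to the MVR conditions at the solution -- so it establishes stationarity of $\gamma_0$ for the second-stage criterion but does not explicitly argue that $\gamma_0$ is the \emph{global and unique} minimizer of that nonconvex nonlinear least-squares problem, which is what the $\arg\min$ in (\ref{eq:InfeasibleWNLS}) asserts. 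Your additional step closes exactly that gap: the orthogonal decomposition of the second-stage objective into a $\gamma$-free term plus $E[\sigma(X)^{-3}\{s(X'\gamma_0)^2-s(X'\gamma)^2\}^2]$ (the cross term vanishing by iterated expectations since $E[(Y-X'\beta_0)^2\mid X]=s(X'\gamma_0)^2$) shows directly that any minimizer must satisfy $s(X'\gamma)=s(X'\gamma_0)$ a.s., and injectivity of $s$ together with the full-rank condition (nonsingularity of $E[XX'/s(X'\gamma)]$ rules out $X'(\gamma-\gamma_0)=0$ a.s.\ for $\gamma\neq\gamma_0$) then pins down $\gamma=\gamma_0$. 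Your first-stage argument (strict convexity of the quadratic with Hessian $2E[XX'/\sigma(X)]$) likewise makes the uniqueness of the first-stage solution explicit. In short: same route, but your version is more complete on the point of why the sequential problems are actually \emph{solved} by, rather than merely stationary at, the components of $\theta_0$.
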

An immediate implication of the Law of Iterated Expectations and Proposition
\ref{prop:Equivalence} is that MVR implements simultaneous weighted
linear regression of $\mu(X)$ on $X$ and weighted nonlinear regression
of $\sigma(X)^{2}$ on $X$ by solving for $\beta$ and $\gamma$
such that the weighted residuals $(\mu(X)-X'\beta)/s(X'\gamma)$ and
$\{\sigma(X)^{2}-s(X'\gamma)^{2}\}/s(X'\gamma)^{2}$ are simultaneously
orthogonal to $X$ and $Xs_{1}(X'\gamma)$, respectively. Proposition
\ref{prop:Equivalence} thus establishes the simultaneous mean and
variance regression interpretation of problem (\ref{eq:MVR}).

\section{Approximation Properties of MVR under Misspecification\label{sec:Approximation-Properties}}

Under misspecification, OLS provides the minimum mean squared error
linear approximation to the CMF. For the proposed MVR criterion, existence
of an approximating solution and the nature of the approximation are
nontrivial when the shapes of the conditional mean and variance functions
are misspecified. In this section, we first establish existence and
uniqueness of a solution to the MVR problem under misspecification,
and then characterize the interpretation and properties of the corresponding
MVR approximations.

\subsection{Existence and Uniqueness of an MVR Solution}

Assumptions \ref{ass:ScaleSpec}-\ref{ass:FullRank} are sufficient
for characterizing the smoothness properties, shape, and behaviour
on the boundaries of the parameter space of the MVR criterion $Q(\theta)$.
Under these assumptions $\theta\mapsto Q(\theta)$\textbf{ }is continuous
and its level sets are compact. Compactness of the level sets is a
sufficient condition for existence of a minimizer in $\Theta$, and
is a consequence of the explosive behaviour of the objective function
at the boundaries of the parameter space. The objective $Q(\theta)$
is a \textit{coercive} function over the open set $\Theta$, i.e.,
it satisfies 
\[
\lim_{||\theta||\rightarrow\infty}Q(\theta)=\infty,\quad\quad\lim_{\theta\rightarrow\partial\Theta}Q(\theta)=\infty,
\]
where $\partial\Theta$ is the boundary set of $\Theta$. Thus the
MVR criterion is infinity at infinity, and for any sequence of parameter
values in $\Theta$ approaching the boundary set $\partial\Theta$,
the value of the objective is also driven towards infinity. Therefore,
the level sets of the objective function have no limit point on their
boundary, ruling out existence of a boundary solution, and continuity
of $\theta\mapsto Q(\theta)$ is then sufficient to conclude that
it admits a minimizer. Continuity and coercivity of the objective
function are the two properties that guarantee existence of at least
one minimizer in $\Theta$.\footnote{The boundary set of $\Theta$ may be empty, for instance for the exponential
scale specification. In that case the coercivity property reduces
to $\lim_{||\theta||\rightarrow\infty}Q(\theta)=\infty$.} Assumptions \ref{ass:ScaleSpec}-\ref{ass:FullRank} are also sufficient
for $\theta\mapsto Q(\theta)$ to be strictly convex, and therefore
further ensure that $Q(\theta)$ admits at most one minimizer in $\Theta$.
\begin{thm}
\label{thm:Existence}If Assumptions \ref{ass:ScaleSpec}-\ref{ass:FullRank}
hold, then there exists a unique solution $\theta^{*}\in\Theta$ to
the MVR population problem (\ref{eq:MVR}).
\end{thm}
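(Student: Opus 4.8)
The plan is to prove existence and uniqueness separately, exploiting the two properties the authors have already flagged in the surrounding text: coercivity plus continuity for existence, and strict convexity for uniqueness. Throughout I would work with the objective $Q(\theta)=E[\{e(Y,X,\theta)^{2}+1\}s(X'\gamma)/2]$ and use Assumption \ref{ass:ScaleSpec} to control the behaviour of $s$ and its derivatives, Assumption \ref{ass:Moments} for the integrability needed to differentiate under the expectation, and Assumption \ref{ass:FullRank} to get the definiteness of the Hessian.

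First I would establish continuity of $\theta\mapsto Q(\theta)$ on $\Theta$. Since $s$ is three times differentiable and $\Theta_{\gamma}$ forces $s(X'\gamma)>0$ a.s., the integrand is continuous in $\theta$ pointwise; I would then invoke the dominated convergence theorem, using the moment bounds in Assumption \ref{ass:Moments}(i) together with $E[s(X'\gamma)^{2}]<\infty$-type controls to produce an integrable envelope on a neighbourhood of any fixed $\theta$. Next, for existence, I would verify coercivity as stated: $\lim_{||\theta||\to\infty}Q(\theta)=\infty$ and $\lim_{\theta\to\partial\Theta}Q(\theta)=\infty$. The term $\tfrac12 E[(Y-X'\beta)^{2}/s(X'\gamma)]$ blows up as $||\beta||\to\infty$ because of the quadratic numerator (using nonsingularity of $E[XX'/s(X'\gamma)]$ from Assumption \ref{ass:FullRank} to rule out cancellation), while the term $\tfrac12 E[s(X'\gamma)]$ blows up as $X'\gamma\to\infty$ since $\lim_{t\to\infty}s(t)=\infty$; on the boundary $\partial\Theta$ where $s(X'\gamma)\to 0$ on a set of positive probability, the numerator term $E[(Y-X'\beta)^{2}/s(X'\gamma)]$ diverges. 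Coercivity guarantees that some sublevel set $\{\theta:Q(\theta)\le Q(\theta_{1})\}$ for a fixed $\theta_{1}$ is bounded away from $\partial\Theta$ and bounded in norm, hence contained in a compact subset of $\Theta$; continuity on this compact set then yields a minimizer by Weierstrass.

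For uniqueness I would show $Q$ is strictly convex on $\Theta$ by proving its Hessian is positive definite at every $\theta\in\Theta$. Differentiating the first-order conditions (\ref{eq:FOC1})--(\ref{eq:FOC2}) once more, I would assemble the $2k\times2k$ Hessian in block form and show that for any nonzero direction $d=(d_{\beta},d_{\gamma})$ the quadratic form $d'\nabla^{2}Q(\theta)d$ is strictly positive. The diagonal $\beta\beta$ block is essentially $E[XX'/s(X'\gamma)]$, which is positive definite by Assumption \ref{ass:FullRank}; the convexity of $s$ (so $s_{2}\ge0$) from Assumption \ref{ass:ScaleSpec} is what keeps the $\gamma\gamma$ block and the cross terms from destroying positivity, and the moment conditions in Assumption \ref{ass:Moments}(ii) justify the second differentiation under the integral. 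I would complete the argument by showing the quadratic form can vanish only if $d_{\beta}=0$ and $d_{\gamma}=0$. Combining strict convexity with the existence of a minimizer gives the unique solution $\theta^{*}\in\Theta$.

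The main obstacle I anticipate is the strict convexity verification, specifically showing the Hessian is positive definite globally rather than just positive semidefinite. The cross derivatives between $\beta$ and $\gamma$ couple the two blocks in a way that is not obviously sign-definite, and ruling out a nonzero null direction requires carefully exploiting both the strict monotonicity and convexity of $s$ in Assumption \ref{ass:ScaleSpec} together with the full-rank condition in Assumption \ref{ass:FullRank}; this is where the explicit form of the integrand, and the fact that the penalty term $s(X'\gamma)$ was chosen precisely to secure convexity, must be used in an essential way.
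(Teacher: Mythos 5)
Your overall architecture matches the paper's: continuity plus coercivity of $Q$ to obtain compact level sets and hence a minimizer by Weierstrass, then strict convexity for uniqueness. The existence half is essentially the paper's argument, modulo two small points. First, as $\|\gamma\|\to\infty$ the index $X'\gamma$ may diverge to $-\infty$ (e.g.\ for the exponential scale), in which case $E[s(X'\gamma)]$ does not blow up and the divergence must instead come from $E[(Y-X'\beta)^2/s(X'\gamma)]$; the paper treats both directions at once by extracting, via Lemma \ref{lem:linindpdce}, a support point $x^*$ with $|\gamma_n'x^*|\to\infty$ and noting that $s(\gamma_n'x^*)\to 0$ or $\infty$ forces $\widetilde{L}\to\infty$ either way. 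Second, the divergence of $E[(Y-X'\beta)^2/s(X'\gamma)]$ both near $\partial\Theta$ and along unbounded $\beta$ sequences needs $E[(Y-X'\beta)^2\mid X]$ bounded away from zero, which is exactly Assumption \ref{ass:Var(Y|X)} --- an assumption your proposal never invokes.

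The genuine gap is in the uniqueness half. You correctly single out global positive definiteness of the Hessian as the main obstacle, but you leave it unresolved, and the mechanism you point to --- convexity of $s$, i.e.\ $s_2\ge 0$ --- is not what controls the $\beta$--$\gamma$ coupling: the $s_2$ contribution to the $\gamma\gamma$ block is $-\tfrac{1}{2}E[XX's_2(X'\gamma)\{e(Y,X,\theta)^2-1\}]$, whose sign depends on the unknown distribution of $e(Y,X,\theta)^2$. The paper's argument (Lemmas \ref{thm:posdef} and \ref{lem:Convexity}) splits the Hessian as $H_1(\theta)+H_2(\theta)$, where $H_1(\theta)=\varPsi(\theta)$ is the Gram-type matrix with blocks $E[XX'/s]$, $E[XX's_1e/s]$ and $E[XX'(s_1e)^2/s]$, and establishes positive definiteness of $H_1$ by taking the Schur complement with respect to $E[XX'/s]$ and applying a matrix Cauchy--Schwarz inequality; the decisive input there is that $\mathrm{Var}\bigl(s_1(X'\gamma)e(Y,X,\theta)\mid X\bigr)>0$ a.s., which again follows from Assumption \ref{ass:Var(Y|X)} together with $s_1>0$, and which rules out the degenerate direction that would make the quadratic form vanish. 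Without this step (or an equivalent one), your quadratic form $d'\nabla^2Q(\theta)d$ could in principle vanish along a direction mixing $d_\beta$ and $d_\gamma$, so strict convexity --- and hence uniqueness --- is not yet established by your outline.
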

Theorem \ref{thm:Existence} is the second main result of the paper.
It establishes that the MVR problem (\ref{eq:MVR}) has a well-defined
solution, and an immediate corollary is the existence and uniqueness
of the MVR location-scale representation
\[
Y=X'\beta^{*}+s(X'\gamma^{*})e,\quad E[Xe]=0,\quad E[Xs_{1}(X'\gamma^{*})(e^{2}-1)]=0.
\]
This result clarifies further how MVR generalizes OLS by establishing
the existence and the form of the MVR location-scale model when no
shape restrictions are imposed on the conditional mean and variance
functions. The OLS location model is a particular case with the scale
function restricted to be a constant function. 

\begin{sloppy}Although a unique MVR approximation exists irrespective
of the nature of the misspecification, the interpretation of the MVR
approximating functions $x\mapsto(x'\beta^{*},s(x'\gamma^{*})^{2})$
depends on which of the conditional moment functions is misspecified.
We distinguish two types of shape misspecification:
\begin{enumerate}
\item Mean misspecification: the CMF $x\mapsto\mu(x)$ is misspecified.
\item Variance misspecification: only the conditional variance function
$x\mapsto\sigma(x)^{2}$ is misspecified.
\end{enumerate}
The case when both the conditional mean and variance functions are
misspecified is a particular case of mean misspecification.\par\end{sloppy}

\subsection{Interpretation Under Mean Misspecification}

The location-scale representation
\begin{equation}
Y=\mu(X)+\sigma(X)\varepsilon,\quad E[\varepsilon\mid X]=0,\quad E[\varepsilon^{2}\mid X]=1,\label{eq:loc-scale rep}
\end{equation}
provides a general expression for $Y$ in terms of its conditional
mean and standard deviation functions, and is always valid, as long
as first and second conditional moments exist. Substituting expression
(\ref{eq:loc-scale rep}) for $Y$ into the MVR objective function
$Q(\theta)$ gives rise to a criterion for the joint approximation
of $x\mapsto(\mu(x),\sigma(x)^{2})$.

The criterion $Q(\theta)$ can also be appropriately restricted in
order to define the corresponding OLS approximations. Letting $\Theta_{\gamma,\textrm{LS}}=\{\gamma\in\mathbb{R}\,:\,s(\gamma)>0\}$,
define $\Theta_{\textrm{LS}}=\mathbb{R}^{k}\times\Theta_{\gamma,\textrm{LS}}$.
Upon setting $s(X'\gamma)=s(\gamma)$ in the MVR problem (\ref{eq:MVR}),
\[
(\beta_{\textrm{LS}},\gamma_{\textrm{LS}}):=\arg\min_{(\beta,\gamma)\in\Theta_{\textrm{LS}}}E\left[\frac{1}{2}\left\{ \left(\frac{Y-X'\beta}{s(\gamma)}\right)^{2}+1\right\} s(\gamma)\right]
\]
is a particular case of MVR. Since the OLS solution $\theta_{\textrm{LS}}:=(\beta_{\textrm{LS}},\gamma_{\textrm{LS}},0_{k-1})'$
belongs to the parameter space $\Theta$, uniqueness of $\theta^{*}$
implies that the OLS approximation of the conditional moment functions
$x\mapsto(\mu(x),\sigma(x)^{2})$ cannot improve upon the MVR approximation,
according to the MVR loss.
\begin{thm}
\label{thm:Interpretation}If Assumptions \ref{ass:ScaleSpec}-\ref{ass:FullRank}
hold, then the MVR population problem (\ref{eq:MVR}) has the following
properties.

(i) Problem (\ref{eq:MVR}) is equivalent to the infeasible problem
\begin{align}
\min_{\theta\in\Theta} & \;\frac{1}{2}E\left[\left\{ \left(\frac{\mu(X)-X'\beta}{s(X'\gamma)}\right)^{2}+1\right\} s(X'\gamma)\right]+\frac{1}{2}E\left[\frac{\sigma(X)^{2}}{s(X'\gamma)}\right],\label{eq:theta*}
\end{align}
with first-order conditions
\begin{eqnarray}
E\left[X\left(\frac{\mu(X)-X'\beta}{s(X'\gamma)}\right)\right] & = & 0\label{eq:FOCmiss1}\\
E\left[Xs_{1}(X'\gamma)\left\{ \left(\frac{\mu(X)-X'\beta}{s(X'\gamma)}\right)^{2}+\left(\frac{\sigma(X)^{2}}{s(X'\gamma){}^{2}}-1\right)\right\} \right] & = & 0.\label{eq:FOCmiss2}
\end{eqnarray}
(ii) The optimal value of problem (\ref{eq:MVR}) satisfies $Q(\theta^{*})\leq Q(\theta_{\textrm{LS}})$,
with equality if and only if $\theta^{*}=\theta_{\textrm{LS}}$.
\end{thm}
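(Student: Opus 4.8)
The plan is to prove part (i) by substituting the always-valid location-scale representation (\ref{eq:loc-scale rep}) into the MVR objective $Q(\theta)$ and then applying the Law of Iterated Expectations. Writing $Y-X'\beta=(\mu(X)-X'\beta)+\sigma(X)\varepsilon$ and squaring, the numerator of $e(Y,X,\theta)^{2}$ becomes $(\mu(X)-X'\beta)^{2}+2(\mu(X)-X'\beta)\sigma(X)\varepsilon+\sigma(X)^{2}\varepsilon^{2}$. Conditioning on $X$ first and using $E[\varepsilon\mid X]=0$ and $E[\varepsilon^{2}\mid X]=1$, the cross term vanishes and the last term reduces to $\sigma(X)^{2}$. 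Since $Q(\theta)=\tfrac{1}{2}E[e(Y,X,\theta)^{2}s(X'\gamma)]+\tfrac{1}{2}E[s(X'\gamma)]$ and $e(Y,X,\theta)^{2}s(X'\gamma)=(Y-X'\beta)^{2}/s(X'\gamma)$, this produces the identity
\[
Q(\theta)=\frac{1}{2}E\left[\left\{\left(\frac{\mu(X)-X'\beta}{s(X'\gamma)}\right)^{2}+1\right\}s(X'\gamma)\right]+\frac{1}{2}E\left[\frac{\sigma(X)^{2}}{s(X'\gamma)}\right],
\]
holding pointwise in $\theta$ (as an equality in the extended reals), which is precisely the infeasible objective (\ref{eq:theta*}). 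The two problems therefore share the same minimizers. Finiteness of the expectations in the relevant region—so that the rearrangement is an honest equality of finite quantities—follows from the moment conditions in Assumption \ref{ass:Moments} together with Assumption \ref{ass:Var(Y|X)}.

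To obtain the first-order conditions (\ref{eq:FOCmiss1})--(\ref{eq:FOCmiss2}), I would differentiate the infeasible objective (\ref{eq:theta*}) under the expectation. Differentiating in $\beta$ affects only the term $(\mu(X)-X'\beta)^{2}/s(X'\gamma)$ and yields $-E[X(\mu(X)-X'\beta)/s(X'\gamma)]$, giving (\ref{eq:FOCmiss1}) upon setting it to zero. Differentiating in $\gamma$, note that $\partial s(X'\gamma)/\partial\gamma=s_{1}(X'\gamma)X$; collecting the contributions of the three $\gamma$-dependent pieces $(\mu(X)-X'\beta)^{2}/s(X'\gamma)$, $s(X'\gamma)$, and $\sigma(X)^{2}/s(X'\gamma)$ produces a score proportional to $E[Xs_{1}(X'\gamma)\{1-((\mu(X)-X'\beta)/s(X'\gamma))^{2}-\sigma(X)^{2}/s(X'\gamma)^{2}\}]$, which rearranges to (\ref{eq:FOCmiss2}). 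The interchange of differentiation and expectation is justified by the moment bounds in Assumption \ref{ass:Moments} via a standard dominated-convergence argument.

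Part (ii) then follows from the global uniqueness in Theorem \ref{thm:Existence}. As observed before the statement, the OLS point $\theta_{\textrm{LS}}=(\beta_{\textrm{LS}},\gamma_{\textrm{LS}},0_{k-1})'$ lies in $\Theta$, since fixing the non-intercept components of $\gamma$ at zero makes $s(X'\gamma)=s(\gamma_{\textrm{LS}})>0$ constant, so $\gamma\in\Theta_{\gamma}$. Because $\theta^{*}$ is the unique minimizer of $Q$ over $\Theta$, we immediately obtain $Q(\theta^{*})\leq Q(\theta_{\textrm{LS}})$; and if equality holds then $\theta_{\textrm{LS}}$ is also a minimizer, whence $\theta_{\textrm{LS}}=\theta^{*}$ by uniqueness, while the converse is trivial.

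The only genuinely technical step is in part (i): verifying finiteness of the expectations in the decomposition and justifying differentiation under the integral sign for the scores. Both reduce to the moment conditions of Assumption \ref{ass:Moments} and are routine. The conceptual content—substituting the universally valid representation (\ref{eq:loc-scale rep}) and exploiting $E[\varepsilon\mid X]=0$ and $E[\varepsilon^{2}\mid X]=1$ to separate the $(\mu(X)-X'\beta)^{2}$ and $\sigma(X)^{2}$ terms—is straightforward, and part (ii) is essentially a one-line consequence of Theorem \ref{thm:Existence}, so no real obstacle remains once existence and uniqueness of $\theta^{*}$ are in hand.
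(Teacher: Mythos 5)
Your proposal is correct and follows essentially the same route as the paper: part (i) is proved by substituting the location-scale representation $Y=\mu(X)+\sigma(X)\varepsilon$ into $Q(\theta)$ and using $E[\varepsilon\mid X]=0$, $E[\varepsilon^{2}\mid X]=1$ via iterated expectations to kill the cross term (the paper phrases this with an explicit change of variables in the conditional density, which is cosmetically different but mathematically the same), and part (ii) is the same one-line consequence of $\Theta_{\textrm{LS}}\subset\Theta$ and uniqueness of $\theta^{*}$ from Theorem \ref{thm:Existence}. Your explicit derivation of the first-order conditions (\ref{eq:FOCmiss1})--(\ref{eq:FOCmiss2}) is a correct routine computation that the paper leaves implicit.
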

Theorem \ref{thm:Interpretation}(i) shows that under misspecification
the function $x\mapsto x'\beta^{*}$ is an infeasible MVR approximation
of the true CMF penalized by the  mean ratio of the true variance
over its standard deviation approximation. An equivalent formulation
is
\begin{equation}
\min_{\theta\in\Theta}\;\frac{1}{2}E\left[\frac{1}{s(X'\gamma)}(\mu(X)-X'\beta)^{2}\right]+\frac{1}{2}E\left[\left\{ \frac{\sigma(X)^{2}}{s(X'\gamma)^{2}}+1\right\} s(X'\gamma)\right],\label{eq:PenWLS}
\end{equation}
the penalized WLS interpretation of the MVR problem (\ref{eq:theta*}).

The penalty term in (\ref{eq:PenWLS}) is a functional of a weighted
mean variance ratio of the true variance over its approximation. The
first-order conditions (\ref{eq:FOCmiss1})-(\ref{eq:FOCmiss2}) shed
additional light on how the weights are determined as well as on the
form of the penalty, by characterizing the optimality properties of
MVR approximations. Because $X$ includes an intercept, when both
functions $x\mapsto\mu(x)$ and $x\mapsto\sigma(x)^{2}$ are misspecified,
$\beta^{*}$ and $\gamma^{*}$ are chosen such that the sum of the
weighted mean squared error for the conditional mean and the mean
variance ratio error is zero, balancing the two approximation errors.
When the scale function is linear the two types of approximation error
are equalized. For the exponential specification, the two types of
approximation error weighted by $\exp(X'\gamma)$ are equalized. The
MVR solution is thus determined by minimizing the weighted mean squared
error for the conditional mean, while simultaneously setting the weighted
mean variance ratio as close as possible to one.

Theorem \ref{thm:Interpretation}(ii) formalizes the approximation
guarantee of MVR in terms of the MVR criterion. For the linear and
exponential scale function specifications, the improvement of the
MVR solution relative to the OLS solution in MVR loss further guarantees
that optimal weights are selected such that the weighted mean squared
MVR prediction error for $Y$ is not larger than the root mean squared
OLS prediction error.
\begin{cor}
\label{cor:Approx Bound}If the scale function $t\mapsto s(t)$ is
specified as $s(t)=t$ or $s(t)=\exp(t)$, then
\[
E\left[\frac{1}{s(X'\gamma^{*})}(Y-X'\beta^{*})^{2}\right]\leq E\left[(Y-X'\beta_{\textrm{LS}})^{2}\right]^{\frac{1}{2}},
\]
with equality if and only if $\theta^{*}=\theta_{\textrm{LS}}$.
\end{cor}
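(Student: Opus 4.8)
The plan is to show that for both the linear and exponential scale specifications the optimal MVR value $Q(\theta^{*})$ coincides exactly with the weighted mean squared prediction error $W^{*}:=E[(Y-X'\beta^{*})^{2}/s(X'\gamma^{*})]$ appearing on the left-hand side, and that the optimal OLS value $Q(\theta_{\textrm{LS}})$ equals the right-hand side $E[(Y-X'\beta_{\textrm{LS}})^{2}]^{1/2}$. The inequality and its equality case then follow at once from Theorem \ref{thm:Interpretation}(ii).

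First I would rewrite the criterion in penalized form $Q(\theta)=\tfrac{1}{2}E[(Y-X'\beta)^{2}/s(X'\gamma)]+\tfrac{1}{2}E[s(X'\gamma)]$, so that $Q(\theta^{*})=\tfrac{1}{2}W^{*}+\tfrac{1}{2}E[s(X'\gamma^{*})]$. The key step is to use the second first-order condition \eqref{eq:FOC2} to prove the identity $W^{*}=E[s(X'\gamma^{*})]$, which collapses this to $Q(\theta^{*})=W^{*}$. Writing $e^{*}:=e(Y,X,\theta^{*})$ and noting $(Y-X'\beta^{*})^{2}/s(X'\gamma^{*})=e^{*2}s(X'\gamma^{*})$, it suffices to establish $E[s(X'\gamma^{*})\{e^{*2}-1\}]=0$. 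For the exponential scale, $s_{1}=s$, so this is exactly the intercept component of \eqref{eq:FOC2} (recall the first entry of $X$ is $1$). For the linear scale, $s_{1}=1$ and \eqref{eq:FOC2} reads $E[X\{e^{*2}-1\}]=0$; contracting this vector identity with $\gamma^{*}$ gives $E[(X'\gamma^{*})\{e^{*2}-1\}]=E[s(X'\gamma^{*})\{e^{*2}-1\}]=0$, as required.

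Next I would evaluate $Q(\theta_{\textrm{LS}})$ explicitly. At $\theta_{\textrm{LS}}=(\beta_{\textrm{LS}},\gamma_{\textrm{LS}},0_{k-1})'$ the scale reduces to the positive constant $c:=s(\gamma_{\textrm{LS}})$, so that $Q(\theta_{\textrm{LS}})=\tfrac{1}{2c}E[(Y-X'\beta_{\textrm{LS}})^{2}]+\tfrac{c}{2}$. Since a constant weight does not affect the minimizing $\beta$, $\beta_{\textrm{LS}}$ is the OLS coefficient and $M:=E[(Y-X'\beta_{\textrm{LS}})^{2}]$ is the OLS mean squared error; minimizing $\tfrac{M}{2c}+\tfrac{c}{2}$ over $c\in(0,\infty)$ (the range of $s$ under Assumption \ref{ass:ScaleSpec}) yields $c=M^{1/2}$ and optimal value $M^{1/2}$. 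Hence $Q(\theta_{\textrm{LS}})=E[(Y-X'\beta_{\textrm{LS}})^{2}]^{1/2}$, the right-hand side of the claim.

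Finally, combining the two computations with Theorem \ref{thm:Interpretation}(ii) gives $W^{*}=Q(\theta^{*})\leq Q(\theta_{\textrm{LS}})=E[(Y-X'\beta_{\textrm{LS}})^{2}]^{1/2}$, with equality precisely when $\theta^{*}=\theta_{\textrm{LS}}$. I expect the main obstacle to be the identity $Q(\theta^{*})=W^{*}$ in the linear case: unlike the exponential case it does not follow from a single scalar first-order condition but requires contracting the full vector condition \eqref{eq:FOC2} with the optimal scale parameter $\gamma^{*}$, which is where the linearity of the scale and the presence of an intercept in $X$ are both essential.
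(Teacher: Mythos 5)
Your proposal is correct and follows essentially the same route as the paper: use the second set of first-order conditions (contracted with $\gamma^{*}$ in the linear case, via the intercept component in the exponential case) to show $Q(\theta^{*})=E[(Y-X'\beta^{*})^{2}/s(X'\gamma^{*})]$, show $Q(\theta_{\textrm{LS}})=E[(Y-X'\beta_{\textrm{LS}})^{2}]^{1/2}$, and conclude by Theorem \ref{thm:Interpretation}(ii). The only cosmetic difference is that you obtain $s(\gamma_{\textrm{LS}})=E[(Y-X'\beta_{\textrm{LS}})^{2}]^{1/2}$ by explicitly minimizing over the constant scale, while the paper reads it off the OLS first-order condition $E[e(Y,X,\theta_{\textrm{LS}})^{2}-1]=0$; these are equivalent.
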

Compared to OLS, improvement in MVR loss is also related to a key
robustness property of MVR under a Kullback-Leibler measure of divergence.
Define the scaled Gaussian density function
\begin{equation}
f_{\theta}\left(Y,X\right):=\frac{1}{s(X'\gamma)}\phi\left(\frac{Y-X'\beta}{s(X'\gamma)}\right),\label{eq:Gausdens}
\end{equation}
where $\phi(z)=\left(2\pi\right)^{-1/2}\exp(-z^{2}/2)$. The OLS solution
maximizes a restricted version, $E[\log f_{\theta}\left(Y,X\right)]|_{\gamma_{-1}=0}$,
of the  expected log-likelihood $E[\log f_{\theta}\left(Y,X\right)]$
over $\Theta_{\textrm{LS}}$, where the components of $\gamma$ except
the first are set to zero. The corresponding expected log-likelihood
value $E[\log f_{\theta_{\textrm{LS}}}\left(Y,X\right)]$ is no greater
than the value of the expected log-likelihood at the MVR solution:
\begin{equation}
E\left[\log f_{\theta_{\textrm{LS}}}\left(Y,X\right)\right]\leq E\left[\log f_{\theta^{*}}\left(Y,X\right)\right].\label{eq:ineq1}
\end{equation}
The MVR solution $\theta^{*}$ formally corresponds to an improvement
of the expected log-likelihood value over OLS, and therefore corresponds
to a probability distribution that is KLIC closer to the true data
probability distribution $f_{Y\mid X}(Y\mid X)$.

Define the quantity $\epsilon:=E\left[\log\left(\sigma_{\textrm{LS}}^{2}/s(X'\gamma^{*})^{2}\right)\right]$,
which is positive as shown in Appendix \ref{subsec:Proof-of-TheoremKLIC}.
Our next result summarises the key implication of (\ref{eq:ineq1}).
\begin{thm}
\label{thm:MLequivalence}Suppose that $E[|\log f_{Y\mid X}(Y\mid X)|]<\infty$,
$E\left[e(Y,X,\theta^{*})^{2}\right]\leq1+\epsilon$, and the scale
function $t\mapsto s(t)$ is specified as $s(t)=t$ or $s(t)=\exp(t)$.
Then the probability distribution $f_{\theta^{*}}$ corresponding
to $\theta^{*}$ satisfies
\begin{equation}
E\left[\log\left(\frac{f_{Y\mid X}(Y\mid X)}{f_{\theta^{*}}(Y,X)}\right)\right]\leq E\left[\log\left(\frac{f_{Y\mid X}(Y\mid X)}{f_{\theta_{\textrm{LS}}}(Y,X)}\right)\right],\label{eq:-1}
\end{equation}
with equality if and only if $\theta^{*}=\theta_{\textrm{LS}}$. 
\end{thm}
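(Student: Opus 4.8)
The plan is to obtain the Kullback--Leibler ranking (\ref{eq:-1}) from the expected log-likelihood ranking (\ref{eq:ineq1}) by a one-line decomposition, and then to prove (\ref{eq:ineq1}) itself by evaluating the two expected log-likelihoods in closed form. First I would use the hypothesis $E[|\log f_{Y\mid X}(Y\mid X)|]<\infty$ to write, for each $\theta\in\{\theta^{*},\theta_{\textrm{LS}}\}$, $E[\log(f_{Y\mid X}(Y\mid X)/f_{\theta}(Y,X))]=E[\log f_{Y\mid X}(Y\mid X)]-E[\log f_{\theta}(Y,X)]$, the first term being finite and common to both sides. Subtracting the two instances shows that (\ref{eq:-1}) is equivalent to $E[\log f_{\theta_{\textrm{LS}}}(Y,X)]\le E[\log f_{\theta^{*}}(Y,X)]$, which is (\ref{eq:ineq1}), and that the two equality cases coincide. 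This is the routine half; the integrability assumption is used only to justify cancelling the true-density term.

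For the substantive half I would compute, from the definition (\ref{eq:Gausdens}) and with $e(Y,X,\theta)=(Y-X'\beta)/s(X'\gamma)$,
\[
E[\log f_{\theta}(Y,X)]=-\tfrac{1}{2}\log(2\pi)-E[\log s(X'\gamma)]-\tfrac{1}{2}E[e(Y,X,\theta)^{2}].
\]
At the OLS solution the scale is the constant $\sigma_{\textrm{LS}}=s(X'\gamma_{\textrm{LS}})$ with $\sigma_{\textrm{LS}}^{2}=E[(Y-X'\beta_{\textrm{LS}})^{2}]$, so that $E[e(Y,X,\theta_{\textrm{LS}})^{2}]=1$ and $E[\log f_{\theta_{\textrm{LS}}}]=-\tfrac{1}{2}\log(2\pi)-\log\sigma_{\textrm{LS}}-\tfrac{1}{2}$. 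Subtracting, and using $\log\sigma_{\textrm{LS}}-E[\log s(X'\gamma^{*})]=\epsilon/2$ straight from the definition of $\epsilon$, I obtain the exact identity
\[
E[\log f_{\theta^{*}}]-E[\log f_{\theta_{\textrm{LS}}}]=\tfrac{1}{2}\big(1+\epsilon-E[e(Y,X,\theta^{*})^{2}]\big).
\]
Inequality (\ref{eq:ineq1}) is then immediate from the maintained bound $E[e(Y,X,\theta^{*})^{2}]\le1+\epsilon$. For the linear scale the intercept component of the scale first-order condition (\ref{eq:FOC2}) gives $E[e(Y,X,\theta^{*})^{2}]=1$ outright, so the bound reduces to $\epsilon\ge0$ and the improvement equals $\epsilon/2$; for the exponential scale the corresponding condition is a weighted second moment, and $E[e(Y,X,\theta^{*})^{2}]\le1+\epsilon$ is a genuine but mild restriction.

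It remains to handle the equality statement, and here I expect the real work. The identity above shows that equality in (\ref{eq:-1}) holds exactly when $E[e(Y,X,\theta^{*})^{2}]=1+\epsilon$. The direction $\theta^{*}=\theta_{\textrm{LS}}\Rightarrow$ equality is a direct check, since then $s(X'\gamma^{*})\equiv\sigma_{\textrm{LS}}$ gives $\epsilon=0$ and $E[e^{2}]=1$. For the converse I would first establish $\epsilon\ge0$, with equality if and only if $s(X'\gamma^{*})$ is almost surely constant: writing $\epsilon=\log\sigma_{\textrm{LS}}^{2}-E[\log s(X'\gamma^{*})^{2}]$ as the gap between the OLS mean squared error and the geometric mean of the squared MVR scale, this is an arithmetic--geometric (Jensen) phenomenon, which I would make rigorous using the strict concavity of the logarithm together with the first-order conditions and the prediction-error bound of Corollary \ref{cor:Approx Bound} to control the relevant moments of $s(X'\gamma^{*})$. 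Granting this, for the linear scale equality forces $\epsilon=0$, hence $s(X'\gamma^{*})$ constant, hence $\theta^{*}\in\Theta_{\textrm{LS}}$; since $\theta_{\textrm{LS}}$ minimizes $Q$ over $\Theta_{\textrm{LS}}$ while $\theta^{*}$ minimizes it over the larger set $\Theta$, membership $\theta^{*}\in\Theta_{\textrm{LS}}$ forces $Q(\theta^{*})=Q(\theta_{\textrm{LS}})$, and Theorem \ref{thm:Interpretation}(ii) (equivalently, uniqueness in Theorem \ref{thm:Existence}) gives $\theta^{*}=\theta_{\textrm{LS}}$; the exponential case follows by the same route after accounting for the weighted form of the relevant moment condition. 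Proving the positivity of $\epsilon$ with its sharp equality case, and transferring it cleanly to the exponential scale, is the step I anticipate being the main obstacle.
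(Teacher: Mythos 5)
Your proposal is correct and follows essentially the same route as the paper: both reduce (\ref{eq:-1}) to the log-likelihood comparison (\ref{eq:ineq1}) by cancelling $E[\log f_{Y\mid X}(Y\mid X)]$, evaluate $E[\log f_{\theta}(Y,X)]$ in closed form from (\ref{eq:Gausdens}) using $E[e(Y,X,\theta_{\textrm{LS}})^{2}]=1$ and $Q(\theta^{*})=E[s(X'\gamma^{*})]\leq Q(\theta_{\textrm{LS}})=\sigma_{\textrm{LS}}$ with Jensen's inequality to get $\epsilon\geq0$, and then invoke the maintained bound $E[e(Y,X,\theta^{*})^{2}]\leq1+\epsilon$; your exact identity $E[\log f_{\theta^{*}}]-E[\log f_{\theta_{\textrm{LS}}}]=\tfrac{1}{2}(1+\epsilon-E[e(Y,X,\theta^{*})^{2}])$ is just the paper's inequality chain written with equality tracked. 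Your additional analysis of the equality case goes beyond what the paper's proof actually writes down (it omits that part), and your sketch for the linear scale is sound, though as you note the exponential case is left incomplete.
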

For the linear scale specification, the MVR first-order conditions
include the constraint $E\left[e(Y,X,\theta^{*})^{2}-1\right]=0$
so that the bound on $E\left[e(Y,X,\theta^{*})^{2}\right]$ is satisfied
by construction. For the exponential specification, the corresponding
constraint is $E\left[\exp(X'\gamma^{*})\left\{ e(Y,X,\theta^{*})^{2}-1\right\} \right]=0$
which can result in a value for $E\left[e(Y,X,\theta^{*})^{2}\right]$
that differs from one under misspecification. The bound then characterizes
the deviations from unit variance that preserve the validity of (\ref{eq:-1}).\footnote{For the numerical simulations in Section \ref{subsec:Numerical-Simulations}
the typical sample mean of an estimate $e(Y,X,\hat{\theta})^{2}$
we observe is smaller than or equal to one, when the scale function
is specified as $s(t)=\exp(t)$. It is an open question whether the
bound $E\left[e(Y,X,\theta^{*})^{2}\right]\leq1+\epsilon$ can be
binding.}

The approximation guarantee (\ref{eq:-1}) is a general result that
holds under misspecification of the conditional mean and/or variance
functions. When the mean is misspecified, it formally establishes
that the MVR approximation for the mean corresponds to a better model
than the OLS location model according to the classical KLIC for model
selection (e.g., \citealp{Akaike:1973}; \citealp{Sawa:1978}). Similarly
to the classical argument motivating the use of maximum likelihood
(ML) under misspecification, Theorem \ref{thm:MLequivalence} thus
provides an information-theoretic justification for the use of MVR
and a formal characterization of the robustness to misspecification
of MVR relative to OLS.

\subsection{Interpretation Under Variance Misspecification}

If the CMF is linear, Theorem \ref{thm:Interpretation} has important
additional implications for the robustness and optimality properties
of MVR solutions. The $k$ orthogonality conditions (\ref{eq:FOCmiss2})
are then sufficient to determine the scale parameter $\gamma^{*}$
since condition (\ref{eq:FOCmiss1}) is uniquely satisfied by $\beta=\beta_{0}$.
Thus in the classical particular case of the linear conditional mean
model, the MVR solution for $\beta$ is fully robust to misspecification
of the scale function. Consequently, when the CMF is correctly specified
the OLS and MVR solutions for $\beta$ coincide. In the special case
of linear scale specification, $Xs_{1}(X'\gamma)$ reduces to $X$.
Because $X$ includes an intercept, the scale parameter $\gamma^{*}$
is then chosen such that the MVR conditional variance approximation
also satisfies the remarkable property of zero mean variance ratio
error. 
\begin{cor}
\label{cor:CLM}If Assumptions \ref{ass:ScaleSpec}-\ref{ass:FullRank}
hold and $\mu(X)=X'\beta_{0}$ a.s., then $\beta^{*}=\beta_{0}$ and
$\gamma^{*}$ is solely determined by  the $k$ orthogonality conditions
\begin{equation}
E\left[Xs_{1}(X'\gamma)\left\{ \frac{\sigma(X)^{2}}{s(X'\gamma){}^{2}}-1\right\} \right]=0.\label{eq:optimality}
\end{equation}
In particular, for the linear specification $s(t)=t$, the conditional
variance approximating function $x\mapsto(x'\gamma^{*}){}^{2}$ satisfies
the optimality property $E[\{\sigma(X)^{2}/(X'\gamma^{*}){}^{2}\}-1]=0.$
\end{cor}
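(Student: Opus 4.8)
The plan is to read the result off directly from the first-order characterization in Theorem \ref{thm:Interpretation}(i). Since Assumptions \ref{ass:ScaleSpec}--\ref{ass:FullRank} are maintained, Theorem \ref{thm:Existence} guarantees that a unique minimizer $\theta^*=(\beta^*,\gamma^*)\in\Theta$ exists, and Theorem \ref{thm:Interpretation}(i) guarantees that it satisfies the first-order conditions (\ref{eq:FOCmiss1})--(\ref{eq:FOCmiss2}). The entire argument is then a matter of substituting the linear-CMF hypothesis into these two conditions and invoking the rank condition.

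First I would impose $\mu(X)=X'\beta_0$ a.s.\ in (\ref{eq:FOCmiss1}). Substituting yields $E[X(X'\beta_0-X'\beta^*)/s(X'\gamma^*)]=E[XX'/s(X'\gamma^*)](\beta_0-\beta^*)=0$. Assumption \ref{ass:FullRank} asserts that $E[XX'/s(X'\gamma)]$ is nonsingular for every $\gamma\in\Theta_\gamma$, so evaluating at $\gamma=\gamma^*$ (which lies in $\Theta_\gamma$ because $\theta^*\in\Theta=\mathbb{R}^k\times\Theta_\gamma$) forces the displayed linear system to have the unique solution $\beta^*=\beta_0$. This is the step delivering full robustness of the mean parameter to scale misspecification, and it holds irrespective of the value of $\gamma^*$.

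Second I would substitute $\beta^*=\beta_0$ back into (\ref{eq:FOCmiss2}). With the mean correctly specified, the weighted residual $(\mu(X)-X'\beta^*)/s(X'\gamma^*)$ vanishes identically, so its square drops out and (\ref{eq:FOCmiss2}) collapses to exactly the $k$ orthogonality conditions (\ref{eq:optimality}) evaluated at $\gamma=\gamma^*$. Because the mean condition fixed $\beta^*=\beta_0$ for \emph{any} $\gamma$, the scale parameter is pinned down by (\ref{eq:optimality}) alone; uniqueness of $\theta^*$ from Theorem \ref{thm:Existence} then shows these $k$ equations admit the unique solution $\gamma^*$, which is what ``solely determined'' means. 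Finally, for $s(t)=t$ I would use $s_1(t)\equiv 1$, so that (\ref{eq:optimality}) reads $E[X\{\sigma(X)^2/(X'\gamma^*)^2-1\}]=0$; taking the first coordinate of this $k$-vector and using that the leading component of $X$ is the constant $1$ gives the scalar identity $E[\{\sigma(X)^2/(X'\gamma^*)^2\}-1]=0$.

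The only point that genuinely requires care is verifying that Assumption \ref{ass:FullRank} may be applied at the \emph{solution} value $\gamma^*$, which I would justify by noting $\gamma^*\in\Theta_\gamma$; beyond this the proof is pure substitution into first-order conditions already established upstream, so I do not anticipate a substantive obstacle.
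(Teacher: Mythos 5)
Your proposal is correct and follows essentially the same route as the paper: both arguments read the corollary off the first-order conditions (\ref{eq:FOCmiss1})--(\ref{eq:FOCmiss2}) of Theorem \ref{thm:Interpretation}(i) after imposing $\mu(X)=X'\beta_{0}$. The only (immaterial) difference is that you obtain $\beta^{*}=\beta_{0}$ directly from the nonsingularity of $E[XX'/s(X'\gamma^{*})]$ at the solution, whereas the paper observes that $\beta=\beta_{0}$ satisfies (\ref{eq:FOCmiss1}) for every $\gamma$ and then pins down $\gamma^{*}$ as the unique minimizer of the restricted, strictly convex objective $q(\gamma):=Q(\theta)|_{\beta=\beta_{0}}$, invoking joint uniqueness of $\theta^{*}$ to conclude.
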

When the CMF is correctly specified an optimal characterization of
$\beta_{0}$ that will lead to an efficient estimator can be formulated
by GLS. Define
\[
f_{\beta}^{\dagger}\left(Y,X\right):=\frac{1}{\sigma(X)}\phi\left(\frac{Y-X'\beta}{\sigma(X)}\right).
\]
In the population, GLS maximizes the expected log-likelihood $E[\log f_{\beta}^{\dagger}\left(Y,X\right)]$
with respect to $\beta$, with solution $\beta_{0}$. Then we further
have
\begin{equation}
E\left[\log f_{\theta^{*}}\left(Y,X\right)\right]\leq E\left[\log f_{\beta_{0}}^{\dagger}\left(Y,X\right)\right],\label{eq:ineq2}
\end{equation}
and inequalities (\ref{eq:ineq1}) and (\ref{eq:ineq2}) together
imply that, compared to OLS, the MVR solution $\theta^{*}$ formally
corresponds to a probability distribution that is KLIC closer to the
reference probability distribution $f_{\beta_{0}}^{\dagger}\left(Y,X\right)$
associated to the GLS model.
\begin{thm}
\label{thm:MLequivalence2}Suppose that $E[|\log f_{Y\mid X}(Y\mid X)|]<\infty$,
$E\left[e(Y,X,\theta^{*})^{2}\right]\leq1+\epsilon$, and the scale
function $t\mapsto s(t)$ is specified as $s(t)=t$ or $s(t)=\exp(t)$.
If Assumptions \ref{ass:ScaleSpec}-\ref{ass:FullRank} hold and $\mu(X)=X'\beta_{0}$
a.s., then $f_{\theta^{*}}$ also satisfies
\begin{equation}
E\left[\log\left(\frac{f_{\beta_{0}}^{\dagger}(Y,X)}{f_{\theta^{*}}(Y,X)}\right)\right]\leq E\left[\log\left(\frac{f_{\beta_{0}}^{\dagger}(Y,X)}{f_{\theta_{\textrm{LS}}}(Y,X)}\right)\right],\label{eq:-2}
\end{equation}
with equality if and only if $\theta^{*}=\theta_{\textrm{LS}}$. 
\end{thm}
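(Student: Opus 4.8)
The plan is to exploit the fact that the reference density $f_{\beta_{0}}^{\dagger}$ enters both sides of (\ref{eq:-2}) in exactly the same way, so that after cancellation the entire content of the claim collapses onto the log-likelihood ordering (\ref{eq:ineq1}) already secured in Theorem \ref{thm:MLequivalence}. First I would record the role of the extra hypothesis $\mu(X)=X'\beta_{0}$ a.s.: it is what makes $f_{\beta_{0}}^{\dagger}$ a legitimate reference. By Corollary \ref{cor:CLM} correct mean specification yields $\beta^{*}=\beta_{0}$, so $f_{\theta^{*}}$ and $f_{\beta_{0}}^{\dagger}$ share the common location $X'\beta_{0}$ and differ only through their scales, $s(X'\gamma^{*})$ versus $\sigma(X)$. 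This is precisely the configuration under which (\ref{eq:ineq2}) holds, guaranteeing that both sides of (\ref{eq:-2}) are genuine nonnegative Kullback--Leibler divergences relative to the GLS model.

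Next I would carry out the cancellation directly. Subtracting the left-hand side of (\ref{eq:-2}) from the right-hand side and using linearity of expectation, the term $E[\log f_{\beta_{0}}^{\dagger}(Y,X)]$ drops out, leaving
\[
E\left[\log\left(\frac{f_{\beta_{0}}^{\dagger}(Y,X)}{f_{\theta_{\textrm{LS}}}(Y,X)}\right)\right]-E\left[\log\left(\frac{f_{\beta_{0}}^{\dagger}(Y,X)}{f_{\theta^{*}}(Y,X)}\right)\right]=E\left[\log\left(\frac{f_{\theta^{*}}(Y,X)}{f_{\theta_{\textrm{LS}}}(Y,X)}\right)\right].
\]
The right-hand quantity equals $E[\log f_{\theta^{*}}(Y,X)]-E[\log f_{\theta_{\textrm{LS}}}(Y,X)]$, which is nonnegative by (\ref{eq:ineq1}); hence (\ref{eq:-2}) follows. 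A convenient feature of working with this difference is that its integrand involves only the two scaled Gaussian log-densities and never the true conditional density $f_{Y\mid X}$: it is a quadratic form in the scaled residuals $e(Y,X,\cdot)$ together with $\log s(\cdot)$ terms, so finiteness of each expectation is immediate from Assumption \ref{ass:Moments} and the lower bound on the scale over $\Theta^{c}$, and the cancellation is therefore never of the indeterminate form $\infty-\infty$.

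Finally I would transfer the equality condition. Since the displayed difference is exactly the gap in (\ref{eq:ineq1}), equality in (\ref{eq:-2}) holds precisely when equality holds there, and Theorem \ref{thm:MLequivalence} identifies this with $\theta^{*}=\theta_{\textrm{LS}}$. The remaining hypotheses---$E[|\log f_{Y\mid X}(Y\mid X)|]<\infty$, the bound $E[e(Y,X,\theta^{*})^{2}]\leq1+\epsilon$, and the restriction to $s(t)=t$ or $s(t)=\exp(t)$---are inherited verbatim from Theorem \ref{thm:MLequivalence} and serve only to guarantee (\ref{eq:ineq1}); once that inequality is in hand they play no further role. I do not anticipate a substantive obstacle: the theorem is essentially a re-expression of (\ref{eq:ineq1}) measured against the GLS reference $f_{\beta_{0}}^{\dagger}$ rather than the true density, and the only point requiring care is the mild integrability bookkeeping that legitimizes the cancellation, which the subtract-first argument disposes of automatically.
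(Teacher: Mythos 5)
Your proof is correct for the inequality as displayed, but it takes a genuinely shorter route than the paper's, and it is worth being precise about what each buys. You observe that $E[\log f_{\beta_{0}}^{\dagger}(Y,X)]$ appears additively on both sides of (\ref{eq:-2}) and cancels, so the claim collapses to $E[\log f_{\theta_{\textrm{LS}}}(Y,X)]\leq E[\log f_{\theta^{*}}(Y,X)]$, which is (\ref{eq:ineq1}), established as (\ref{eq:KLineq1}) in the proof of Theorem \ref{thm:MLequivalence} under exactly the hypotheses you inherit; the equality case transfers the same way. This is valid provided $E[\log f_{\beta_{0}}^{\dagger}(Y,X)]$ is finite so that the cancellation is legitimate; for that you need $E[\log\sigma(X)]$ finite, which follows from Assumption \ref{ass:Var(Y|X)} (lower bound) together with $E[Y^{4}]<\infty$ and Jensen (upper bound) --- not from the lower bound on $s(\cdot)$ over $\Theta^{c}$, which controls the approximating scale rather than the true one. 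The paper instead proves the stronger intermediate inequality (\ref{eq:ineq2}), $E[\log f_{\theta^{*}}]\leq E[\log f_{\beta_{0}}^{\dagger}]$, by a variational argument: it shows that $\sigma(\cdot)$ is the unique minimizer of the MVR criterion $Q_{0}(\psi)$ over all admissible scale functions, deduces $E[\sigma(X)]\leq E[s(X'\gamma^{*})]$, and then applies Jensen and the elementary bound $z\geq1+\log z$ to $z=(\sigma(X)/s(X'\gamma^{*}))^{2}$, before combining with (\ref{eq:KLineq1}). Strictly speaking (\ref{eq:ineq2}) is not needed for the displayed inequality --- your cancellation makes that plain --- but it is what makes the left-hand side of (\ref{eq:-2}) a genuine nonnegative divergence and underpins the ``KLIC closer to the GLS model'' interpretation in the text. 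You invoke (\ref{eq:ineq2}) for that nonnegativity as if it were already available, yet the paper only proves it inside this very proof; so your route, while sufficient for the theorem as stated, leaves that companion claim unestablished.
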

When the mean is correctly specified, all of the likelihood improvement
comes from selecting a better approximation for the standard deviation
function than OLS. Relative to the efficient GLS model for the mean,
inequality (\ref{eq:-2}) formally establishes that the OLS location
model is rejected against the MVR location-scale model according to
a likelihood ratio criterion (e.g., \citealp{V:1989}; \citealp{SW:2017}).
If the true conditional variance is not constant then the improvement
in (\ref{eq:-2}) is strict and the MVR model is closer to the efficient
GLS model than the OLS location model.

In the presence of heteroskedasticity, MVR optimality and approximation
properties (\ref{eq:optimality}) and (\ref{eq:-2}) under correct
mean specification provide a theoretical justification for the largely
improved MVR-based inference relative to OLS-based inference in the
numerical simulations of Section \ref{sec:Numerical-Illustrations}
and the Supplementary Material. In view of its interpretation and
since it always admits a well-defined minimizer, the MVR criterion
thus offers a natural generalization of OLS for the estimation of
linear models.

\subsection{Connection with Gaussian Maximum Likelihood}

MVR provides one criterion for the simultaneous approximation of conditional
mean and variance functions. A related criterion is the KLIC of the
scaled Gaussian density $f_{\theta}\left(Y,X\right)$ defined in (\ref{eq:Gausdens})
from the true conditional density function $f_{Y\mid X}(Y\mid X)$,
which is minimized at a ML  pseudo-true value (\citealp{White:1982}).
Define for $\theta\in\Theta$,
\begin{equation}
\mathcal{L}\left(\theta\right):=-E\left[\log f_{\theta}\left(Y,X\right)\right]=\frac{1}{2}\log\left(2\pi\right)+E\left[\log s(X'\gamma)+\frac{1}{2}e\left(Y,X,\theta\right)^{2}\right],\label{eq:ML}
\end{equation}
with first-order conditions
\begin{equation}
E\left[\frac{X}{s(X'\gamma)}e\left(Y,X,\theta\right)\right]=0,\quad E\left[X\frac{s_{1}(X'\gamma)}{s(X'\gamma)}\left\{ e\left(Y,X,\theta\right)^{2}-1\right\} \right]=0.\label{eq:FOC-ML}
\end{equation}
In general the MVR solution $\theta^{*}$ need not satisfy equations
(\ref{eq:FOC-ML}), and therefore cannot be interpreted as a ML pseudo-true
value. Compared with the MVR criterion, an important limitation of
criterion (\ref{eq:ML}) is its lack of convexity. The second-order
derivative of $\mathcal{L}\left(\theta\right)$ with respect to the
first component $\gamma_{1}$ of $\gamma$, i.e., for fixed $\beta$,
$\gamma_{-1}$, is
\[
\frac{\partial^{\textrm{2}}\mathcal{L}\left(\theta\right)}{\partial\gamma_{1}^{2}}=E\left[\frac{1}{s(X'\gamma)^{2}}\left\{ 3e(Y,X,\theta)^{2}-1\right\} \right],
\]
which is strictly negative for all $\theta\in\Theta$ such that $e\left(Y,X,\theta\right)^{2}\leq1/3$
a.s. The non convexity of (\ref{eq:ML}) in\textit{ }$\gamma_{1}$
implies that $\mathcal{L}\left(\theta\right)$ is not jointly convex\footnote{\citet{Owen:2007} also noted the lack of joint convexity of the negative
Gaussian log-likelihood when the scale function is specified to a
constant, i.e., for the case $s(X'\gamma)=\sigma\in(0,\infty)$ in
(\ref{eq:ML}).}, and that a ML pseudo-true value might not exist; even if there exists
one, it need not be unique. In the latter case, some solutions may
only be local minima of (\ref{eq:ML}), not endowed with a KLIC-closest
interpretation and thus no longer guaranteed to improve over OLS and
MVR in a meaningful way.

These observations together with Theorems \ref{thm:MLequivalence}
and \ref{thm:MLequivalence2} clarify the relationship between ML,
OLS and MVR approximating properties. The ML pseudo-true value is
the parameter value associated with the distribution which is KLIC
closest to the true data generating process, but is not well-defined
due to the objective's lack of convexity. The OLS pseudo-true value
is well-defined, but it maximizes a restricted version of the Gaussian
expected log-likelihood resulting in a relatively lower likelihood.
The MVR loss function strikes a compromise by providing a well-defined
convex alternative to Gaussian ML, and relative to OLS by selecting
a pseudo-true value that corresponds to a distribution which is KLIC
closer to the true data generating process, and KLIC closer to the
efficient GLS model under correct mean specification.

\section{Estimation and Inference\label{sec:Estimation-and-Inference}}

We use the sample analog of the MVR population problem (\ref{eq:MVR})
for estimation of its solution $\theta^{*}$ in finite samples. We
establish existence, uniqueness and consistency of the MVR estimator.
We also derive its asymptotic distribution allowing for misspecification
of the shapes of the conditional mean and variance functions, and
discuss the robustness properties of its influence function. Finally,
we provide corresponding tools for robust inference and introduce
a one-step MVR-based test for heteroskedasticity.

We assume that we observe a sample of $n$ independent and identically
distributed realizations $\{(y_{i},x_{i})\}_{i=1}^{n}$ of the random
vector $(Y,X)$. We denote the $n\times k$ matrix of explanatory
variables values by $X_{n}$. We define $\Theta_{n}=\mathbb{R}^{k}\times\Theta_{\gamma,n}$,
with $\Theta_{\gamma,n}=\{\gamma\in\mathbb{R}^{k}:s(x_{i}'\gamma)>0,i=1,\ldots,n\}$,
the sample analog of the parameter space $\Theta$. For $\gamma\in\Theta_{\gamma,n}$,
we let $\Omega_{n}(\gamma)=\textrm{diag}(s(x_{i}'\gamma))$, an $n\times n$
diagonal matrix with diagonal elements $s(x_{1}'\gamma),\dots,s(x_{n}'\gamma)$.
We also define the MVR moment functions
\[
m_{1}(y_{i},x_{i},\theta):=x_{i}e(y_{i},x_{i},\theta),\quad m_{2}(y_{i},x_{i},\theta):=\frac{1}{2}x_{i}s_{1}(x_{i}'\gamma)\{e(y_{i},x_{i},\theta)^{2}-1\},
\]
and the corresponding vector $m(y_{i},x_{i},\theta):=(m_{1}(y_{i},x_{i},\theta),m_{2}(y_{i},x_{i},\theta))'$.

\subsection{The MVR Estimator}

The solution to the finite-sample analog of problem (\ref{eq:MVR})
is the MVR estimator
\begin{equation}
\hat{\theta}:=\arg\min_{\theta\in\Theta_{n}}\frac{1}{n}\sum_{i=1}^{n}\frac{1}{2}\left\{ \left(\frac{y_{i}-x_{i}'\beta}{s(x_{i}'\gamma)}\right)^{2}+1\right\} s(x_{i}'\gamma).\label{eq:MVR estimator}
\end{equation}
For $a=0$ in Assumption \ref{ass:ScaleSpec}, the sample objective
in (\ref{eq:MVR estimator}) is minimized subject to the $n$ inequality
constraints $s(x_{i}'\gamma)>0$, $i=1,\ldots,n$. For $a=-\infty$,
the parameter space simplifies to $\Theta_{n}=\mathbb{R}^{2\times k}$
and problem (\ref{eq:MVR estimator}) is unconstrained. In terms of
implementation, this constitutes an attractive feature of the exponential
scale specification.

We derive the asymptotic properties of $\hat{\theta}$ under the following
assumptions stated for a scale function in the class defined by Assumption
\ref{ass:ScaleSpec}.

\begin{assumption}(i) $\{(y_{i},x_{i})\}_{i=1}^{n}$ are identically
and independently distributed, and (ii) for all $\gamma\in\Theta_{\gamma,n}$,
the matrix $X_{n}'\Omega_{n}^{-1}(\gamma)X_{n}$ is finite and positive
definite.\label{ass:Sample}\end{assumption}

\begin{assumption}$E[Y^{6}]<\infty$, $E[||X||^{6}]<\infty$, and
for all $\gamma\in\Theta_{\gamma}$, $E[\left\Vert X\right\Vert ^{6}s_{2}(X'\gamma)^{6}]<\infty$.\label{ass:ANorm}\end{assumption}

Assumption \ref{ass:Sample}(i) can be replaced with the condition
that $\{(y_{i},x_{i})\}_{i=1}^{n}$ is stationary and ergodic \citep{Newey:McFadden:1994}.
Assumption \ref{ass:ANorm} is needed for asymptotic normality of
estimates of $\theta^{*}$. When the scale function $t\mapsto s(t)$
is specified to be linear, this assumption simplifies to the requirement
that $E[Y^{6}]$ and $E[\left\Vert X\right\Vert ^{6}]$ be finite.

Letting $e=e(Y,X,\theta^{*})$, the variance-covariance matrix of
the MVR estimator $\hat{\theta}$ is $G^{-1}SG^{-1}/n$, where
\[
G:=\left[\begin{array}{cc}
G_{11} & G_{12}\\
G_{21} & G_{22}
\end{array}\right]:=E\left[\begin{array}{cc}
\frac{XX'}{s(X'\gamma^{*})} & \frac{XX'}{s(X'\gamma^{*})}s_{1}(X'\gamma^{*})e\\
\frac{XX'}{s(X'\gamma^{*})}s_{1}(X'\gamma^{*})e & XX'\left\{ \frac{(s_{1}(X'\gamma^{*})e)^{2}}{s(X'\gamma^{*})}-\frac{1}{2}s_{2}(X'\gamma^{*})(e{}^{2}-1)\right\} 
\end{array}\right]
\]
and
\[
S:=\left[\begin{array}{cc}
S_{11} & S_{12}\\
S_{21} & S_{22}
\end{array}\right]:=E\left[\begin{array}{cc}
XX'e^{2} & \frac{1}{2}XX's_{1}(X'\gamma^{*})e(e^{2}-1)\\
\frac{1}{2}XX's_{1}(X'\gamma^{*})e(e^{2}-1) & \frac{1}{4}XX'\{s_{1}(X'\gamma^{*})(e^{2}-1)\}^{2}
\end{array}\right].
\]
The exact form of each component of matrices $G$ and $S$ depends
on the specification of the conditional mean and variance functions,
and simplifications of the variance-covariance matrix occur according
to the type of misspecification. Under mean misspecification, the
form of the variance-covariance matrix of the MVR estimator is not
affected by the specification of the conditional variance function.

\begin{sloppy}Define estimates of $G$ and $S$ by $\hat{G}:=n^{-1}\sum_{i=1}^{n}\partial m(y_{i},x_{i},\hat{\theta})/\partial\theta$
and $\hat{S}:=n^{-1}\sum_{i=1}^{n}m(y_{i},x_{i},\hat{\theta})m(y_{i},x_{i},\hat{\theta})'$,
respectively. The next theorem states the asymptotic properties of
the MVR estimator.\par\end{sloppy}
\begin{thm}
If Assumptions \ref{ass:ScaleSpec}-\ref{ass:ANorm} hold, then (i)
there exists $\hat{\theta}$ in $\Theta$ with probability approaching
one; (ii) $\hat{\theta}\rightarrow^{p}\theta^{*}$; and (iii)
\begin{equation}
n^{1/2}(\hat{\theta}-\theta^{*})\rightarrow_{d}\mathcal{N}(0,G^{-1}SG^{-1}).\label{eq:AsyDist_theta}
\end{equation}
If $\mu(X)=X'\beta^{*}$ a.s., then the following simplifications
occur
\begin{equation}
G_{12}=G_{21}=0_{k\times k},\quad S_{12}=S_{21}=\frac{1}{2}E[XX's_{1}(X'\gamma^{*})e^{3}].\label{eq:Var_miss}
\end{equation}
If $\mu(X)=X'\beta^{*}$ a.s. and $\sigma(X)^{2}=s(X'\gamma^{*})^{2}$
a.s., then the following additional simplifications occur
\begin{equation}
G_{22}=E\left[\frac{XX'}{s(X'\gamma^{*})}s_{1}(X'\gamma^{*})\right],\quad S_{11}=E[XX'],\quad S_{22}=\frac{1}{4}E[XX's_{1}(X'\gamma^{*})^{2}(e^{4}-1)].\label{eq:Correct_spec}
\end{equation}
Moreover, $\hat{G}^{-1}\hat{S}\hat{G}^{-1}\rightarrow^{p}G^{-1}SG^{-1}$.
\label{thm:ANorm}
\end{thm}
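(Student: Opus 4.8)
The plan is to treat Theorem \ref{thm:ANorm} as a standard convex $M$-estimation problem, exploiting that the sample criterion inherits the strict convexity and coercivity of its population counterpart $Q(\theta)$, whose unique minimizer $\theta^*$ is delivered by Theorem \ref{thm:Existence}. Parts (i) and (ii) will follow from a convexity argument combined with a pointwise law of large numbers, and part (iii) from a mean-value expansion of the sample first-order conditions. Throughout, $\hat\theta$ is characterized by the sample moment conditions $n^{-1}\sum_{i=1}^n m(y_i,x_i,\hat\theta)=0$, where $m=(m_1,m_2)'$ collects the per-observation gradients, and $E[m(Y,X,\theta^*)]=0$ holds by the population first-order conditions.

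For (i) and (ii), each summand of the objective in (\ref{eq:MVR estimator}) is convex in $\theta$, so the sample criterion $\hat Q_n(\theta)$ is convex on $\Theta_n$, and strictly convex with probability approaching one by Assumption \ref{ass:Sample}(ii), which guarantees positive definiteness of $X_n'\Omega_n^{-1}(\gamma)X_n$. By the law of large numbers and the moment bounds in Assumption \ref{ass:Moments}, $\hat Q_n(\theta)\to^p Q(\theta)$ for each fixed $\theta\in\Theta$. Since $Q$ is convex with unique minimizer $\theta^*$, the standard convexity argument for $M$-estimation (\citealp{Newey:McFadden:1994}) upgrades pointwise to uniform convergence on compact subsets and yields, with probability approaching one, existence of a minimizer $\hat\theta$ together with $\hat\theta\to^p\theta^*$. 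Because $\theta^*\in\mathrm{int}(\Theta^c)\subset\mathrm{int}(\Theta)$, consistency places $\hat\theta$ in the interior of $\Theta$ with probability approaching one, so the constraints $s(x_i'\gamma)>0$ are slack and $\hat\theta$ solves the unconstrained sample first-order conditions. This simultaneously delivers (i) and (ii).

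For (iii), I would expand the sample moment conditions about $\theta^*$ by the mean-value theorem, writing $0=n^{-1}\sum_i m(y_i,x_i,\theta^*)+[n^{-1}\sum_i \partial m(y_i,x_i,\bar\theta)/\partial\theta](\hat\theta-\theta^*)$ for some $\bar\theta$ on the segment joining $\hat\theta$ and $\theta^*$. The leading term satisfies $n^{1/2}\,n^{-1}\sum_i m(y_i,x_i,\theta^*)\to_d\mathcal{N}(0,S)$ by the Lindeberg--L\'evy central limit theorem, using $E[m(\theta^*)]=0$ and finiteness of $S=E[mm']$, ensured by the sixth-moment conditions of Assumption \ref{ass:ANorm}. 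Consistency forces $\bar\theta\to^p\theta^*$, and a uniform law of large numbers over a neighborhood of $\theta^*$ gives $n^{-1}\sum_i\partial m(y_i,x_i,\bar\theta)/\partial\theta\to^p G$; the dominance and equicontinuity conditions needed here are exactly the higher-order bounds in Assumptions \ref{ass:Moments} and \ref{ass:ANorm}, which control $\|X\|$ together with $s_1,s_2,s_3$ and hence the Jacobian entries and their modulus of continuity. Solving for $n^{1/2}(\hat\theta-\theta^*)$ and applying Slutsky's theorem yields (\ref{eq:AsyDist_theta}), with $G$ nonsingular by Assumption \ref{ass:FullRank}. Consistency of $\hat G^{-1}\hat S\hat G^{-1}$ for $G^{-1}SG^{-1}$ then follows from the same uniform law of large numbers evaluated at $\hat\theta$, together with $\hat\theta\to^p\theta^*$ and the continuous mapping theorem applied to matrix inversion.

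The simplifications (\ref{eq:Var_miss}) and (\ref{eq:Correct_spec}) follow from the law of iterated expectations. Correct mean specification, $\mu(X)=X'\beta^*$ a.s., gives $E[e\mid X]=0$, annihilating the terms odd in $e$: this forces $G_{12}=G_{21}=0$ and reduces $S_{12}=S_{21}$ to $\tfrac12 E[XX's_1(X'\gamma^*)e^3]$, the $-e$ part of $e(e^2-1)$ integrating to zero. When additionally $\sigma(X)^2=s(X'\gamma^*)^2$ a.s. one has $E[e^2\mid X]=1$, so the remaining centerings in $G_{22}$, $S_{11}$ and $S_{22}$ collapse to (\ref{eq:Correct_spec}). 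I expect the main obstacle to be the uniform convergence of the Jacobian in a neighborhood of $\theta^*$: because the scale function enters nonlinearly and its derivatives $s_1,s_2,s_3$ multiply powers of $\|X\|$ and of the unbounded residual $e$, constructing an integrable envelope and verifying equicontinuity there is the step that genuinely consumes the moment hypotheses of Assumptions \ref{ass:Moments} and \ref{ass:ANorm}; the remainder is bookkeeping around the convex-optimization and central-limit templates.
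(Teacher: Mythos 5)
Your proposal is correct and follows essentially the same route as the paper: consistency via the convexity argument for $M$-estimators (the paper invokes Theorem 2.7 of \citet{Newey:McFadden:1994} after establishing strict convexity of the sample criterion), asymptotic normality via the standard expansion of the sample moment conditions (the paper packages this by casting $\hat\theta$ as a method-of-moments estimator and verifying the conditions of their Theorem 3.4, which is exactly the mean-value/CLT/ULLN argument you spell out), and the simplifications by iterated expectations. The only cosmetic difference is that you open the black box of the cited theorems rather than verifying their hypotheses by number.
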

Theorem \ref{thm:ANorm} allows the construction of confidence intervals
and the implementation of hypothesis tests for $\theta$ under each
type of model specification using standard errors constructed from
the corresponding variance-covariance matrix. The various forms of
the variance-covariance matrix in Theorem \ref{thm:ANorm} provide
a basis for the construction of a range of specification tests, similarly
to the information matrix equality test in ML theory (\citealp{White:1982};
\citealp{CS:1991}). Inference using the general asymptotic variance
formula in (\ref{eq:AsyDist_theta}) will automatically be robust
to all forms of misspecification, and therefore to the presence of
heteroskedasticity of unknown form. 

For the linear homoskedastic model $Y=X'\beta_{0}+U$ with $E[U\mid X]=0$
and $\textrm{Var}(U\mid X)=\sigma_{0}^{2}$, the MVR and OLS variance-covariance
matrices coincide asymptotically, and MVR is efficient. Our numerical
simulations in Section \ref{sec:Numerical-Illustrations} and the
Supplementary Material illustrate that there is close to no finite-sample
loss in estimating linear homoskedastic models using MVR instead of
OLS. If both the conditional mean and variance functions are correctly
specified, then GLS with weights $1/\sigma^{2}(x)$ is an efficient
estimator for $\beta_{0}$. Letting $\check{Y}:=Y/s(X'\gamma^{*})$,
$\check{X}:=X/s(X'\gamma^{*})$ and $\check{s}(X'\gamma):=s(X'\gamma)/s(X'\gamma^{*})$,
define the weighted MVR objective
\[
Q^{\textrm{WMVR}}(\theta):=E\left[\frac{1}{2}\left\{ \left(\frac{\check{Y}-\check{X}'\beta}{\check{s}(X'\gamma)}\right)^{2}+1\right\} \check{s}(X'\gamma)\right]=E\left[\frac{1}{2}\left\{ e\left(Y,X,\theta\right)^{2}+1\right\} \frac{s(X'\gamma)}{s(X'\gamma^{*})}\right].
\]
If $\sigma^{2}(X)=s(X'\gamma_{0})^{2}$, then $\gamma^{*}=\gamma_{0}$
and $Q^{\textrm{WMVR}}(\theta)$ has first-order conditions for $\beta$
\[
\frac{\partial Q^{\textrm{WMVR}}(\theta)}{\partial\beta}=-E\left[\frac{X}{s(X'\gamma_{0})}e\left(Y,X,\theta\right)\right]=0,
\]
which are satisfied by $\theta=\theta_{0}$ and coincide with the
GLS (and ML) first-order conditions for $\beta$ at a solution. In
general, the functional form of the conditional variance function
is unknown, and the MVR and weighted MVR solutions will differ.

An important implication of Theorem \ref{thm:ANorm} is that the influence
function of the MVR estimator for $\beta$ is proportional to both
moment functions $m_{1}$ and $m_{2}$:
\[
IF_{\beta}(y,x,\theta)=-(G_{11}-G_{12}G_{22}^{-1}G_{21})^{-1}[m_{1}(y,x,\theta)-G_{12}G_{22}^{-1}m_{2}(y,x,\theta)].
\]
The quadratic term $m_{2}$ dominates and an influential observation
is defined as having $(y_{i}-x_{i}'\beta)^{2}$ large enough for $e(y_{i},x_{i},\theta)^{2}$
to be large. Observations that are influential for $\beta$ are influential
relative to the dispersion of $Y$, accounting for mean misspecification.

When the CMF is well-specified, the variance-covariance matrix takes
the form
\[
G^{-1}SG^{-1}=\left[\begin{array}{cc}
G_{11}^{-1}S_{11}G_{11}^{-1} & G_{11}^{-1}S_{12}G_{22}^{-1}\\
G_{22}^{-1}S_{21}G_{11}^{-1} & G_{22}^{-1}S_{22}G_{22}^{-1}
\end{array}\right].
\]
The influence function of $\beta$ is then proportional to $m_{1}$
only, and the influence function of $\gamma$ is proportional to $m_{2}$
only, since the off-diagonal blocks of $G$ are then $0_{k\times k}$.
For the mean parameter $\beta$, an observation $(y_{i},x_{i})$ with
large influence will be such that $y_{i}$ is large enough for the
standardized residual $e(y_{i},x_{i},\theta)$ to be large. Because
$\hat{\beta}$ and $\hat{\gamma}$ are determined simultaneously,
the influence of outliers on the mean parameter is limited by the
restriction that the sample second moment of $e(y_{i},x_{i},\theta)$
must remain close to one, and be exactly one if the scale is linear.
In sharp contrast with OLS, the scale parameter will simultaneously
compensate an increase in $Y$ dispersion so as to keep the variance
of $e(y_{i},x_{i},\theta)$ constant. The MVR influence function although
unbounded for a fixed value of $\gamma$ thus robustifies OLS through
the simultaneous reweighting of the residuals, downweighting regions
in the covariate space where the information on $Y$ is imprecise,
as measured by $s(x'\gamma)$, in the calculation of the regression
fit. 

In summary, the MVR estimator does not robustify OLS through the bounding
of the influence function (\citealp{Koenker:2005}), but  by incorporating
information  about the dispersion of $Y$ across the covariate space
in the definition of an influential outlier.
\begin{rem}
(Implementation.) Under our assumptions, the MVR objective is globally
convex in $\theta$, and therefore in $\beta$ for any $\gamma\in\Theta_{\gamma,n}$.
This implies that for any $\gamma\in\Theta_{\gamma,n}$ there exists
a unique corresponding minimizer $\hat{\beta}(\gamma)$. This observation
forms the basis of our implementation, and letting $y=(y_{1},\ldots,y_{n})'$,
we first obtain $\hat{\gamma}$ by solving
\begin{align*}
\min_{\gamma\in\mathbb{R}^{k}}\,\frac{1}{n}\sum_{i=1}^{n}\frac{1}{2}\left\{ \left(\frac{y_{i}-x_{i}'\hat{\beta}(\gamma)}{s(x_{i}'\gamma)}\right)^{2}+1\right\} s(x_{i}'\gamma), & \quad\hat{\beta}(\gamma):=[X_{n}'\varOmega_{n}^{-1}(\gamma)X_{n}]^{-1}X_{n}'\varOmega_{n}^{-1}(\gamma)y,\\
\textrm{s.t.}\quad s(x_{i}'\gamma)>0,\quad i=1,\ldots,n, & \quad\textrm{if }s(t)\leq0\textrm{ for some }t\in\mathbb{R}.
\end{align*}
Concentrating out $\beta$ for each $\gamma$ provides a convenient
implementation of the MVR estimator $\hat{\gamma}$, with the final
estimate for $\beta$ defined as $\hat{\beta}:=\hat{\beta}(\hat{\gamma})$.\qed
\end{rem}

\subsection{Inference}

Given the MVR estimator $\hat{\theta}=(\hat{\beta},\hat{\gamma})'$,
inference is performed based on the estimated asymptotic variance-covariance
matrix $\hat{V}:=\hat{G}^{-1}\hat{S}\hat{G}^{-1}$, which can be partitioned
into 4 blocks
\[
\hat{V}=\left[\begin{array}{cc}
\hat{V}_{11} & \hat{V}_{12}\\
\hat{V}_{21} & \hat{V}_{22}
\end{array}\right].
\]
The specific form of $\hat{V}$ depends on the specification assumptions
made on the conditional mean and variance functions. For $\hat{\beta}_{j}$
and $\hat{\gamma}_{j}$ the $j$th components of $\hat{\beta}$ and
$\hat{\gamma}$, respectively, MVR standard errors are obtained as
\[
\textrm{s.e.}(\hat{\beta}_{j}):=\left(\frac{1}{n}\left[\hat{V}_{11}\right]_{j,j}\right)^{\frac{1}{2}},\quad\textrm{s.e.}(\hat{\gamma}_{j}):=\left(\frac{1}{n}\left[\hat{V}_{22}\right]_{j,j}\right)^{\frac{1}{2}},
\]
with resulting two-sided confidence intervals with nominal level $1-\alpha$,
\[
\hat{\beta}_{j}\pm\Phi^{-1}(1-\alpha/2)\times\textrm{s.e.}(\hat{\beta}_{j}),\quad\hat{\gamma}_{j}\pm\Phi^{-1}(1-\alpha/2)\times\textrm{s.e.}(\hat{\gamma}_{j}),
\]
where $\Phi^{-1}(1-\alpha/2)$ denotes the $1-\alpha/2$ quantile
of the Gaussian distribution. A significance test of the null $\beta_{j}=0$
and $\gamma_{j}=0$ can then be performed using the test statistics
$\hat{\beta}_{j}/\textrm{s.e.}(\hat{\beta}_{j})$ and $\hat{\gamma}_{j}/\textrm{s.e.}(\hat{\gamma}_{j})$.

Simultaneous significance testing or hypothesis tests on linear combination
of multiple parameters can be implemented by a Wald test. For $h\leq2\times k$,
letting \textbf{$R$ }be an $h\times(2\times k)$ matrix of constants
of full rank $h$ and $r$ be an $h\times1$ vector of constants,
define
\[
H_{0}:R\theta^{*}-r=0,\quad H_{1}:R\theta^{*}-r\neq0,
\]
the null and alternative hypotheses for a two-sided tests of linear
restrictions on the location-scale model $Y=X'\beta^{*}+s(X'\gamma^{*})e$.
It follows from asymptotic normality of $\hat{\theta}$ in (\ref{eq:AsyDist_theta})
that the corresponding MVR Wald statistic $W_{\textrm{MVR}}$ satisfies
\[
W_{\textrm{MVR}}:=(R\hat{\theta}-r)'[R(\hat{V}/n)R']^{-1}(R\hat{\theta}-r)\sim\chi_{(h)}^{2},
\]
under the null $H_{0}$.

The Wald statistic $W_{\textrm{MVR}}$ can be specialized to formulate
a one-step robust MVR-based test for heteroskedasticity. Letting 
\[
h=k-1,\quad R=\left[\begin{array}{cc}
0_{k-1,k+1} & I_{k-1}\end{array}\right],\quad r=0_{k-1},
\]
the statistic $W_{\textrm{MVR}}$ gives a robust test of the null
hypothesis $H_{0}:\gamma_{2}^{*}=\ldots=\gamma_{k}^{*}=0$.
\begin{rem}
When the CMF is linear, robust MVR inference on $\hat{\beta}$ uses
the closed-form variance formula
\[
\widehat{\textrm{Var}}(\hat{\beta})=n^{-1}(X_{n}'\Omega_{n}^{-1}(\hat{\gamma})X_{n})^{-1}(X_{n}'\hat{\varPsi}_{e}X_{n})(X_{n}'\Omega_{n}^{-1}(\hat{\gamma})X_{n})^{-1},
\]
where $\hat{\varPsi}_{e}=\textrm{diag}(\hat{e}_{i}^{2})$.\qed
\end{rem}

\section{Numerical Illustrations\label{sec:Numerical-Illustrations}}

All computational procedures can be implemented in the software R
(\citealp{R:2017}) using open source software packages for nonlinear
optimization such as Nlopt, and its R interface Nloptr (\citet{YBE:2018}).

\subsection{Empirical Application: Reversal of Fortune}

We apply our methods to the study of the effect of European colonialism
on today's relative wealth of former colonies, as in \citet{AJR:2002}.
They show that former colonies that were relatively rich in 1500 are
now relatively poor, and provide ample empirical evidence of this
reversal of fortune. In particular, they study the relationship between
urbanization in 1500 and GDP per capita in 1995 (PPP basis), using
OLS regression analysis. The sample size ranges from 17 to 41 former
colonies, allowing the illustration of MVR properties in small samples.

We take the outcome $Y$ to be log GDP per capita in 1995 and in the
baseline specification $X$ includes an intercept and a measure of
urbanization in 1500, a proxy for economic development. We implement
MVR with both linear ($\ell$-MVR) and exponential ($e$-MVR) scale
functions, and report estimated standard errors robust to mean misspecification
according to (\ref{eq:AsyDist_theta}). We also report OLS estimates,
with heteroskedasticity-robust standard errors. In the Supplementary
Material we also provide results including standard errors with finite-sample
adjustments suggested by \citet{MKW:1985}, and we also report MVR
standard errors calculated under correct mean misspecification. Our
findings are robust to these modifications.

\begin{sidewaystable}
\vspace*{+16cm} %
\begin{tabular}{lcccccccccccc}
\toprule 
 &  & \multicolumn{11}{c}{Dependent variable is log GDP per capita (PPP) in 1995}\tabularnewline
 &  &  &  &  &  &  &  &  &  &  &  & \tabularnewline
\cmidrule{3-13} 
 &  & OLS  & $\ell$-MVR  & \textit{e}-MVR  &  & OLS  & $\ell$-MVR  & \textit{e}-MVR  &  & OLS  & $\ell$-MVR  & \textit{e}-MVR \tabularnewline
\cmidrule{3-5} \cmidrule{7-9} \cmidrule{11-13} 
 &  & \multicolumn{3}{c}{(1) Base sample} &  & \multicolumn{3}{c}{(2) Without North Africa} &  & \multicolumn{3}{c}{(3) Without the Americas}\tabularnewline
 &  & \multicolumn{3}{c}{($n=41$)} &  & \multicolumn{3}{c}{($n=37$)} &  & \multicolumn{3}{c}{($n=17$)}\tabularnewline
Urbanization in 1500  &  & -0.078  & -0.067  & -0.069  &  & -0.101  & -0.099  & -0.099  &  & -0.115  & -0.064  & -0.077 \tabularnewline
 &  & (0.023) & (0.028) & (0.026) &  & (0.032) & (0.034) & (0.034) &  & (0.043) & (0.127) & (0.113)\tabularnewline
 &  &  &  &  &  &  &  &  &  &  &  & \tabularnewline
\cmidrule{3-5} \cmidrule{7-9} \cmidrule{11-13} 
 &  & \multicolumn{3}{c}{(4) Just the Americas} &  & \multicolumn{3}{c}{(5) With the continent} &  & \multicolumn{3}{c}{(6) Without neo-Europes}\tabularnewline
 &  & \multicolumn{3}{c}{($n=24$)} &  & \multicolumn{3}{c}{dummies ($n=41$)} &  & \multicolumn{3}{c}{($n=37$)}\tabularnewline
Urbanization in 1500  &  & -0.053  & -0.045  & -0.044  &  & -0.082  & -0.063  & -0.060  &  & -0.046  & -0.036  & -0.038 \tabularnewline
 &  & (0.029) & (0.032) & (0.032) &  & (0.031) & (0.029) & (0.030) &  & (0.021) & (0.023) & (0.023)\tabularnewline
 &  &  &  &  &  &  &  &  &  &  &  & \tabularnewline
\cmidrule{3-5} \cmidrule{7-9} \cmidrule{11-13} 
\textbf{ } &  & \multicolumn{3}{c}{(7) Controlling for Latitude} &  & \multicolumn{3}{c}{(8) Controlling for colonial} &  & \multicolumn{3}{c}{(9) Controlling for religion}\tabularnewline
 &  & \multicolumn{3}{c}{($n=41$)} &  & \multicolumn{3}{c}{origin ($n=41$)} &  & \multicolumn{3}{c}{($n=41$)}\tabularnewline
Urbanization in 1500  &  & -0.072  & -0.069  & -0.070  &  & -0.071  & -0.063  & -0.062  &  & -0.060  & -0.042  & -0.040 \tabularnewline
 &  & (0.020) & (0.022) & (0.021) &  & (0.025) & (0.026) & (0.027) &  & (0.027) & (0.029) & (0.029)\tabularnewline
 &  &  &  &  &  &  &  &  &  &  &  & \tabularnewline
\bottomrule
\end{tabular}\medskip{}

\caption{Reversal of fortune. Asymptotic heteroskedasticity-robust OLS standard
errors and MVR standard errors are in parenthesis.\label{tab:Reversal-of-fortuneCoefs}}
\end{sidewaystable}
Table \ref{tab:Reversal-of-fortuneCoefs} reports our results for
urbanization in the baseline specification across 5 different sets
of countries, and for 4 additional specifications\footnote{We exclude two specifications of Table III in \citet{AJR:2002} for
which not all types of OLS and MVR standard errors are well-defined.} including continent dummies, and controlling for latitude, colonial
origin and religion\footnote{See \citet{AJR:2002} for a detailed description of the data.}.
A striking feature of the results is the robustness to scale specification
of MVR point estimates and standard errors. They are nearly identical
across all specifications, except for Panel (3). Compared to OLS,
MVR point estimates are all smaller in magnitude, suggesting a negative
bias of OLS away from zero while standard errors are of similar magnitude,
making it more likely to find a significant relationship with OLS
estimates in this empirical application. The urbanization coefficient
loses significance with MVR in 4 specifications, mainly as a result
of the change in point estimates. 

Specifically, we find that MVR provides supporting evidence of a significant
statistical relationship between urbanization in 1500 and GDP per
capita in 1995 in the whole sample, but also dropping North Africa,
including continent dummies, and controlling for latitude and for
colonial origin. However, the relationship between urbanization in
1500 and GDP per capita in 1995 is not statistically significant in
the four remaining specifications. When the Americas are dropped (Panel
(3)), when only former colonies from the Americas are considered (Panel
(4)), and when controlling for religion (Panel (9)), the urbanization
coefficient is no longer significant with MVR. These results are robust
to implementing finite-sample adjustments. Specification (6) drops
observations for neo-Europes (United States, Canada, New Zealand,
and Australia), and only the $e$-MVR estimate is significant at the
10 percent level when no finite-sample adjustments are implemented,
and loses significance otherwise.

MVR results provide renewed empirical support for a subset of the
specifications, but overall show that the mean relationship in this
empirical application is weaker and less robust than first suggested
by the OLS-based analysis\footnote{We also implemented WLS and found that the magnitude of most WLS coefficients
is smaller than MVR point estimates. In addition to specifications
(3), (4), (6) and (9), specification (5) is also found to be not statistically
significant. We report the results in Section 3 of the Supplementary
Material.}. These findings illustrate that MVR can substantially alter the conclusions
obtained using OLS in practice.

\subsection{Numerical Simulations\label{subsec:Numerical-Simulations}}

We investigate the properties of MVR in small samples and compare
its performance with OLS and WLS by implementing numerical simulations
based on the experimental setup in \citet{MacKinnon:2013}. In the
Supplementary Material, we provide additional results for models featuring
a nonlinear CMF and report simulation results from an experiment calibrated
to a second empirical example. We find that using MVR approximations
does not result in a loss in the quality of approximation of nonlinear
CMFs compared to OLS, and MVR estimation and inference finite-sample
properties compare favorably to both OLS and WLS. 

\subsubsection{Design of Experiments}

The data generating process is
\begin{align*}
Y & =\beta_{0}+X_{1}\beta_{1}+X_{2}\beta_{2}+X_{3}\beta_{3}+X_{4}\beta_{4}+\sigma\varepsilon,\quad\varepsilon\sim\mathcal{N}(0,1)\\
\sigma & =z(\alpha)\left(\gamma_{0}+X_{1}\gamma_{1}+X_{2}\gamma_{2}+X_{3}\gamma_{3}+X_{4}\gamma_{4}\right)^{\alpha},\quad\alpha\in\{0,0.5,1,1.5,2\},
\end{align*}
where all regressors are drawn from the standard log-normal distribution,
and $z(\alpha)$ is chosen such that the expected variance of $\sigma\varepsilon$
is equal to 1. The log-normal regressors ensure that many samples
will include high-leverage points with a few observations taking extreme
values. This feature of the design distorts the distribution of test
statistics based on heteroskedasticity-robust estimators of OLS standard
errors. The parameter coefficient values are set to $\beta_{j}=\gamma_{j}=1$
for $j=0,\ldots,4$.\footnote{This departs slightly from the original \citet{MacKinnon:2013} design
where $\beta_{4}=\gamma_{4}=0$. We are grateful to James MacKinnon
for suggesting this modification that preserves heteroskedasticity
in $X_{4}$.} The index $\alpha$ measures the degree of heteroskedasticity in
the model, with $\alpha=0$ corresponding to homoskedasticity, and
$\alpha=2$ corresponding to high heteroskedasticity. The numerical
simulations are implemented for sample sizes $n=20,40,80,160,320,640$
and $1280$. 

For each $\alpha$ and sample size, we generate 10000 samples, and
implement OLS, WLS and MVR. We implement MVR with both linear ($\ell$-MVR)
and exponential ($e$-MVR) scale functions. For WLS we follow the
implementations proposed by Romano and Wolf (\citeyear{RW:2017},
cf. equation (3.4) and (3.5), p. 4). Denote the OLS estimator by $\hat{\beta}_{\textrm{LS}}$
and let $\tilde{x}_{i}=(x_{1i},x_{2i},x_{3i},x_{4i})'$. We form the
OLS residuals $\hat{u}_{i}:=y_{i}-x_{i}'\hat{\beta}_{\textrm{LS}}$,
$i=1,\ldots,n$, and perform the OLS regressions
\[
\log(\max(\delta^{2},\hat{u}_{i}^{2}))=\nu+\pi'|\tilde{x}_{i}|+\eta_{i},
\]
for WLS with linear scale ($\ell$-WLS)\footnote{This regression is performed imposing the $n$ constraints $\nu+\pi|\tilde{x}_{i}|\geq\delta$,
$i=1,\ldots,n$, using the $\mathtt{lsei}$ R package (\citealp{WLH:2017}). }, and
\[
\log(\max(\delta^{2},\hat{u}_{i}^{2}))=\nu+\pi'\log(|\tilde{x}_{i}|)+\eta_{i},
\]
for WLS with exponential scale ($e$-WLS), where $\delta=0.1$ as
in the implementation of \citet{RW:2017}, and with estimates $(\hat{\nu},\hat{\pi})$.
The WLS weights are formed as $\hat{w}_{i}^{\ell}:=\hat{\nu}+\hat{\pi}'|\tilde{x}_{i}|$
and $\hat{w}_{i}^{e}:=\exp(\hat{\nu}+\hat{\pi}'\log(|\tilde{x}_{i}|))$,
and the WLS estimators are
\[
\hat{\beta}_{\textrm{WLS}}^{m}:=[X'_{n}(W_{n}^{m})^{-1}X{}_{n}]^{-1}X'_{n}(W_{n}^{m})^{-1}y,\quad W_{n}^{m}:=\textrm{diag}(\hat{w}_{i}^{m}),\quad m=\ell,e,
\]
where $y=(y_{1},\ldots,y_{n})'$, $X_{n}$ is the $n\times5$ matrix
of explanatory variables values, and $\textrm{diag}(\hat{w}_{i}^{\ell})$
and $\textrm{diag}(\hat{w}_{i}^{e})$ denote the $n\times n$ diagonal
matrices with diagonal elements $\hat{w}_{1}^{\ell},\dots,\hat{w}_{n}^{\ell}$
and $\hat{w}_{1}^{e},\dots,\hat{w}_{n}^{e}$, respectively.

In all experiments the results for $\beta_{1}$ to $\beta_{4}$ are
similar and we thus only report the results for $\beta_{4}$ for brevity.
Also, throughout the relative performance of MVR and WLS is assessed
by comparing $\ell$-MVR to $\ell$-WLS, and $e$-MVR to $e$-WLS.

\subsubsection{Estimation results}

Tables \ref{tab:ARMSE_OLS} and \ref{tab:ARMSE_WLS} report the ratio
of MVR root mean squared errors (RMSE) across simulations over the
OLS and WLS RMSEs for the coefficient parameter $\beta_{4}$, each
sample size and value of heteroskedasticity parameter $\alpha$, in
percentage terms. Denoting an estimator $\tilde{\beta}_{4}^{(s)}$
of $\beta_{4}$ for the $s$th simulation, the RMSE is computed as
$\{\frac{1}{S}\sum_{s=1}^{S}(\tilde{\beta}_{4}^{(s)}-\beta_{4})^{2}\}^{1/2}$,
for $S=10000$. 

Table \ref{tab:ARMSE_OLS} shows that the performance of both MVR
estimators relative to OLS improves as $n$ and $\alpha$ increase.
As expected, for the homoskedastic case $\alpha=0$ the performance
of MVR and OLS estimators is very similar, and the ratios converge
to 100 from above, reflecting the efficiency of the OLS estimator
in that case. The performance of MVR then becomes markedly superior
as $n$ and $\alpha$ increase, with ratios that reach $31.7$ for
$\ell$-MVR and $22.9$ for \textit{e}-MVR. The estimator $\ell$-MVR
dominates \textit{e}-MVR slightly for $n=20$. The performance of
the estimator \textit{e}-MVR then becomes superior as the degree of
heteroskedasticity and sample size increase, showing higher robustness
of the exponential scale specification in more extreme designs in
these simulations.

In Table \ref{tab:ARMSE_WLS}, we find that the relative performance
of both MVR estimators relative to WLS also improves as $n$ increases
and as $\alpha$ increases from 0.5 to 2. For the homoskedastic case
$\alpha=0$, an interesting feature of the simulation results is that
the relative performance of MVR and WLS estimators now converges to
100 from below. This reflects the fact that for homoskedastic designs
MVR weights are better able to mitigate the cost of reweighting in
small samples compared to WLS weights. For other designs with $\alpha>0$,
the relative performance of both MVR estimators dominates the performance
of WLS with ratios that reach $76.2$ for $\ell$-MVR and $43.5$
for \textit{e}-MVR. For $\alpha=1$, WLS with linear scale is efficient
and dominates slightly $\ell$-MVR as $n$ increases. Compared to
OLS and the results of Table \ref{tab:ARMSE_OLS}, these results show
that in this experiment WLS also improves over OLS, that $\ell$-MVR
improves over WLS as $n$ increases and $\alpha$ deviates from $1$,
with little loss for $\alpha\leq1$, and $e$-MVR yields substantial
additional gains over WLS as both $n$ and $\alpha$ increase.

\begin{table}[t]
\centering %
\begin{tabular}{ccccccccccccc}
\toprule 
 &  & \multicolumn{5}{c}{$\ell$-MVR} &  & \multicolumn{5}{c}{\textit{e}-MVR}\tabularnewline
\midrule 
$\alpha$ &  & $0$ & $0.5$ & $1$ & $1.5$ & $2$ &  & $0$ & $0.5$ & $1$ & $1.5$ & $2$\tabularnewline
\midrule
$n=20$ &  & 105.5  & 104.1  & 98.9  & 91.9  & 84.9  &  & 106.7  & 105.2  & 99.7  & 92.4  & 84.4 \tabularnewline
$n=40$ &  & 103.7  & 100.1  & 90.9  & 79.6  & 69.2  &  & 103.4  & 100.0  & 91.1  & 79.5  & 67.2 \tabularnewline
$n=80$ &  & 102.3  & 96.7  & 84.0  & 69.7  & 58.5  &  & 102.1  & 96.9  & 84.3  & 69.1  & 54.6 \tabularnewline
$n=160$ &  & 101.7  & 94.0  & 77.4  & 60.5  & 49.5  &  & 101.5  & 94.2  & 77.4  & 58.9  & 43.2 \tabularnewline
$n=320$ &  & 101.1  & 91.3  & 71.5  & 53.0  & 42.6  &  & 100.9  & 91.4  & 70.7  & 50.2  & 34.6 \tabularnewline
$n=640$ &  & 100.7  & 89.5  & 66.8  & 47.1  & 37.3  &  & 100.6  & 89.6  & 66.1  & 44.1  & 28.7 \tabularnewline
$n=1280$ &  & 100.5  & 86.6  & 61.1  & 40.8  & 31.7  &  & 100.4  & 86.6  & 60.1  & 37.3  & 22.9 \tabularnewline
 &  &  &  &  &  &  &  &  &  &  &  & \tabularnewline
\bottomrule
\end{tabular}
\begin{centering}
\medskip{}
\par\end{centering}
\caption{Ratio ($\times100$) of MVR RMSE for $\beta_{4}$ over corresponding
OLS counterpart.\label{tab:ARMSE_OLS}}
\bigskip{}

\centering %
\begin{tabular}{ccccccccccccc}
\toprule 
 &  & \multicolumn{5}{c}{$\ell$-MVR} &  & \multicolumn{5}{c}{\textit{e}-MVR}\tabularnewline
\midrule 
$\alpha$ &  & $0$ & $0.5$ & $1$ & $1.5$ & $2$ &  & $0$ & $0.5$ & $1$ & $1.5$ & $2$\tabularnewline
\midrule
$n=20$ &  & 98.7  & 99.0  & 99.2  & 98.7  & 97.1  &  & 97.3  & 98.4  & 99.4  & 99.6  & 97.8 \tabularnewline
$n=40$ &  & 97.1  & 97.8  & 98.6  & 97.4  & 93.1  &  & 94.9  & 96.2  & 97.4  & 96.5  & 90.2 \tabularnewline
$n=80$ &  & 96.5  & 98.1  & 99.7  & 96.2  & 89.6  &  & 97.2  & 98.0  & 97.8  & 93.6  & 82.5 \tabularnewline
$n=160$ &  & 96.7  & 99.2  & 101.7  & 94.4  & 86.7  &  & 97.9  & 96.9  & 93.1  & 85.0  & 70.1 \tabularnewline
$n=320$ &  & 96.9  & 100.3  & 102.1  & 92.4  & 85.2  &  & 99.0  & 96.0  & 89.0  & 77.8  & 59.8 \tabularnewline
$n=640$ &  & 97.4  & 101.3  & 103.4  & 91.0  & 80.4  &  & 99.8  & 95.8  & 85.7  & 71.0  & 50.8 \tabularnewline
$n=1280$ &  & 98.1  & 102.3  & 104.1  & 88.8  & 76.2  &  & 99.8  & 94.2  & 81.6  & 64.7  & 43.5 \tabularnewline
 &  &  &  &  &  &  &  &  &  &  &  & \tabularnewline
\bottomrule
\end{tabular}
\begin{centering}
\medskip{}
\par\end{centering}
\caption{Ratio ($\times100$) of MVR RMSE for $\beta_{4}$ over corresponding
WLS counterpart.\label{tab:ARMSE_WLS}}
\end{table}

\subsubsection{Inference}

\begin{sloppy}In order to study the finite-sample performance of
MVR inference relative to heteroskedasticity-robust OLS and WLS inference,
we first compare the rejection probabilities of asymptotic $t$ tests
of the null hypothesis $\beta_{4}=1$ based on the standard normal
distribution\footnote{We also performed simulations with and tested for $\beta_{4}=0$,
and calculated rejection probabilities using a $t_{n-k}$ distribution.
The relative performance of the methods remains similar.}. We then compare the lengths of the confidence intervals constructed
for the coefficient parameter $\beta_{4}$. MVR standard errors are
calculated under mean misspecification according to (\ref{eq:AsyDist_theta}).
OLS and WLS standard errors used in the construction of confidence
intervals and tests statistics are the asymptotic heteroskedasticity-robust
standard errors. For completeness, in the Supplementary Material we
also compare rejection probabilities and confidence intervals based
on standard errors with finite-sample adjustments suggested by \citet{MKW:1985},
and we also replicate all experiments using MVR standard errors calculated
under correct mean misspecification with the simplifications in (\ref{eq:Var_miss}).\par\end{sloppy}

Figure \ref{fig:RejProbMVR1_HC0} displays rejection probability curves
of asymptotic $t$ tests of the null hypothesis $\beta_{4}=1$ for
each sample size and value of the heteroskedasticity parameter $\alpha$.
The nominal size of the tests is set to $5\%$. Figures \ref{fig:RejProbMVR1_HC0}(a)-(d)
show that MVR addresses overrejection of the OLS- and WLS-based tests
in the presence of heteroskedasticity ($\alpha>0$). A striking feature
of MVR rejection probability curves is that they flatten very quickly
across $\alpha$ as $n$ increases, a feature somewhat more pronounced
for $\ell$-MVR curves. This is in sharp contrast with OLS and $e$-WLS
rejection probability curves that are increasing with the degree of
heteroskedasticity $\alpha$. For $\ell$-WLS the rejection curves
are distorted around $\alpha=1$, for which it is efficient, and overall
the rejection probabilities are much larger than for $\ell$-MVR.
The curves for $\ell$-MVR and $\ell$-WLS coincide only for the case
where $\ell$-MVR is efficient ($\alpha=1$) at moderate sample size
and above ($n\geq320$). The $\ell$-MVR rejection probability curve
for $n=20$ (black curve) is not placed above the other curves although
it is above the nominal level for all values of $\alpha$. This feature
disappears when finite-sample corrections are implemented (Figures
2.1-2.3 in the Supplementary Material). 

\begin{figure}[!tph]
\subfloat[$\ell$-MVR vs OLS.]{\includegraphics[width=7.8cm,height=7.8cm]{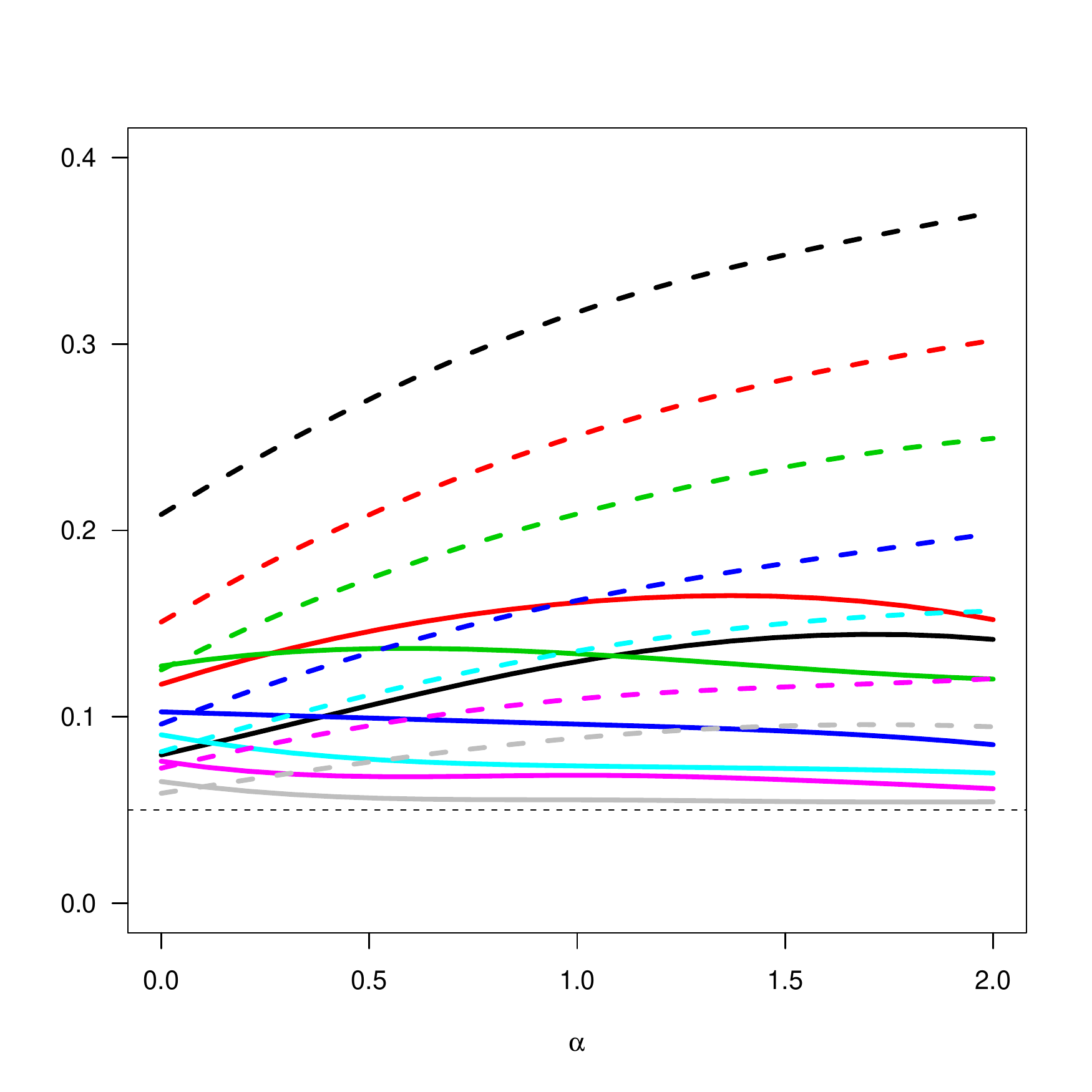}

}\subfloat[\textit{e}-MVR vs OLS.]{\includegraphics[width=7.8cm,height=7.8cm]{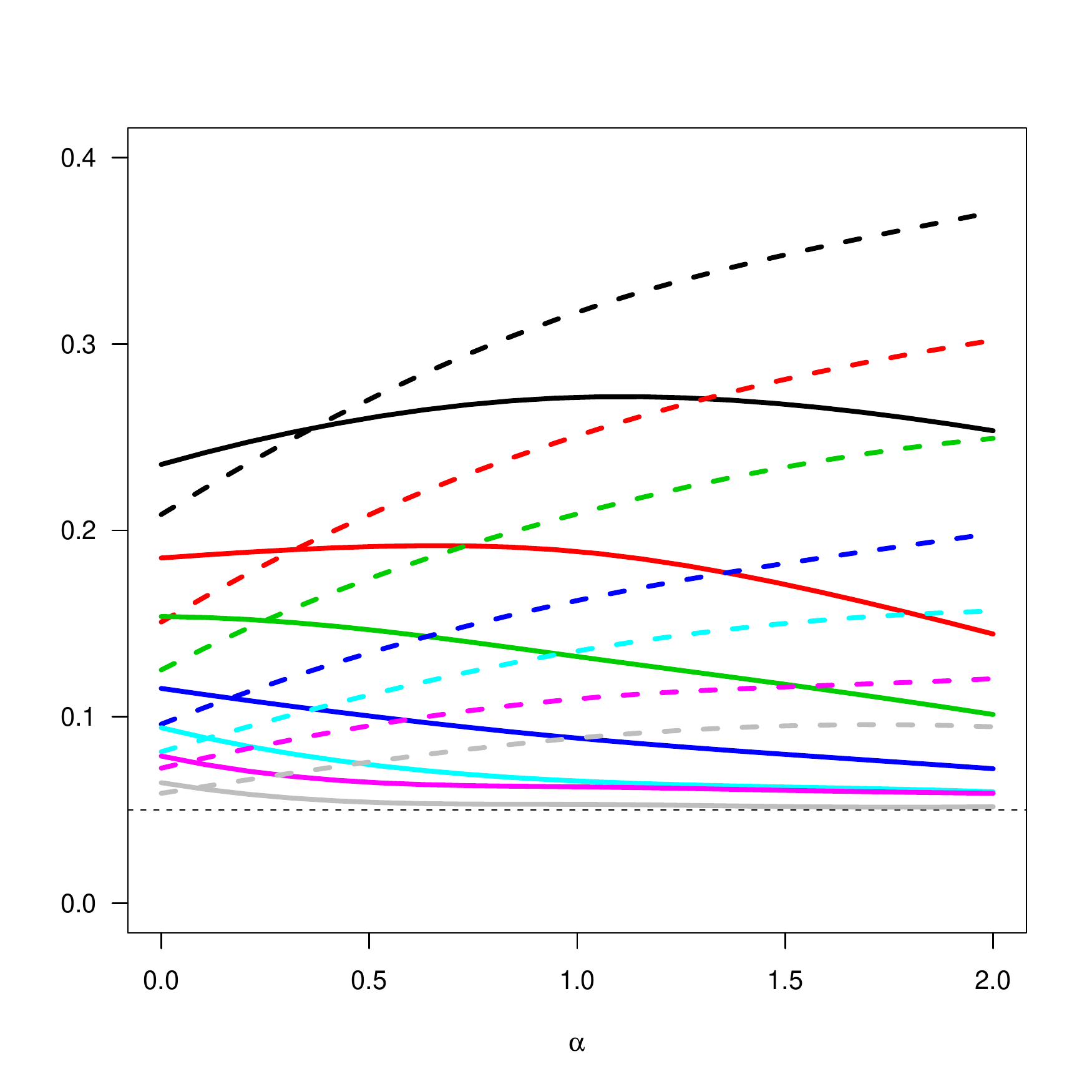}

}

\subfloat[$\ell$-MVR vs $\ell$-WLS.]{\includegraphics[width=7.8cm,height=7.8cm]{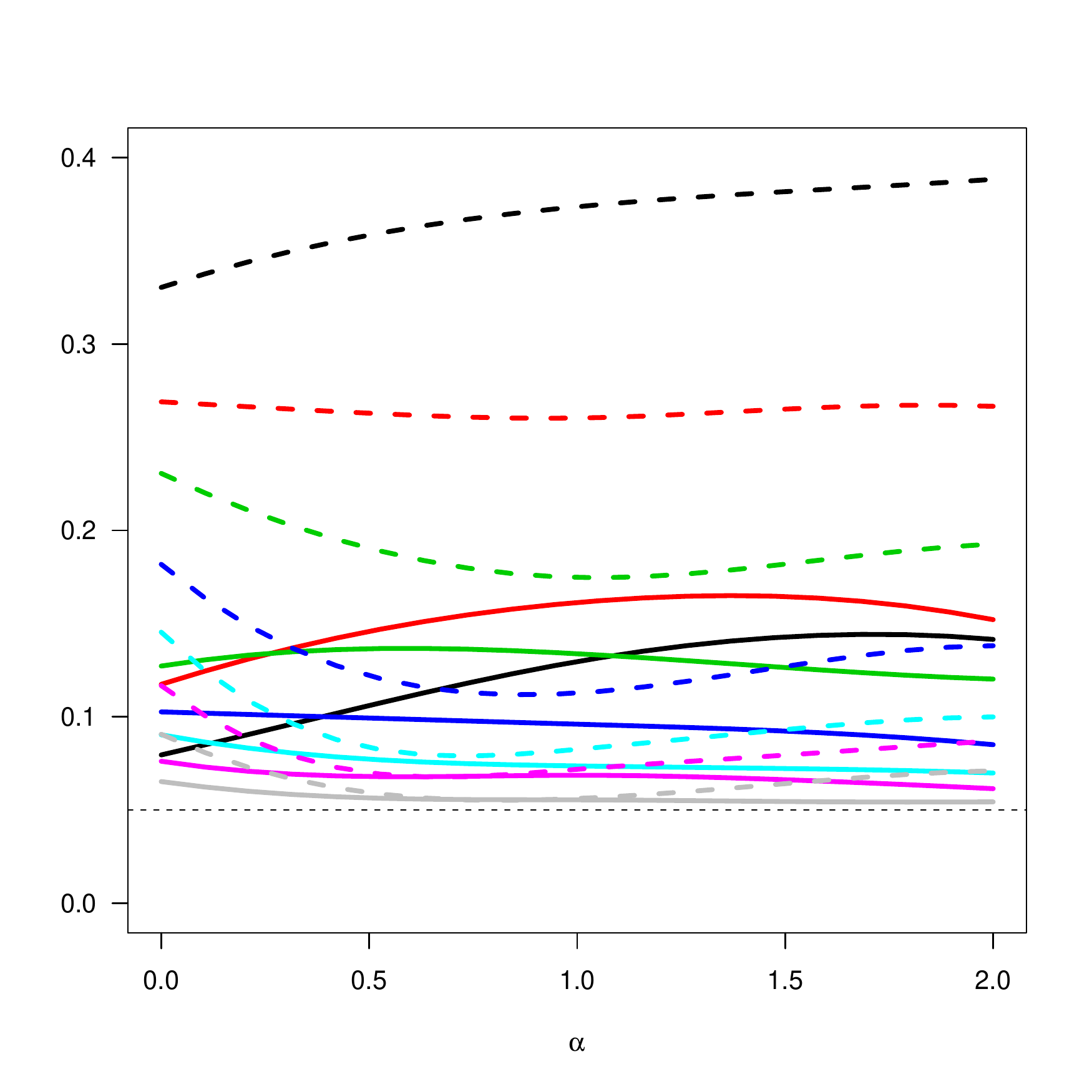}

}\subfloat[\textit{e}-MVR vs \textit{e}-WLS.]{\includegraphics[width=7.8cm,height=7.8cm]{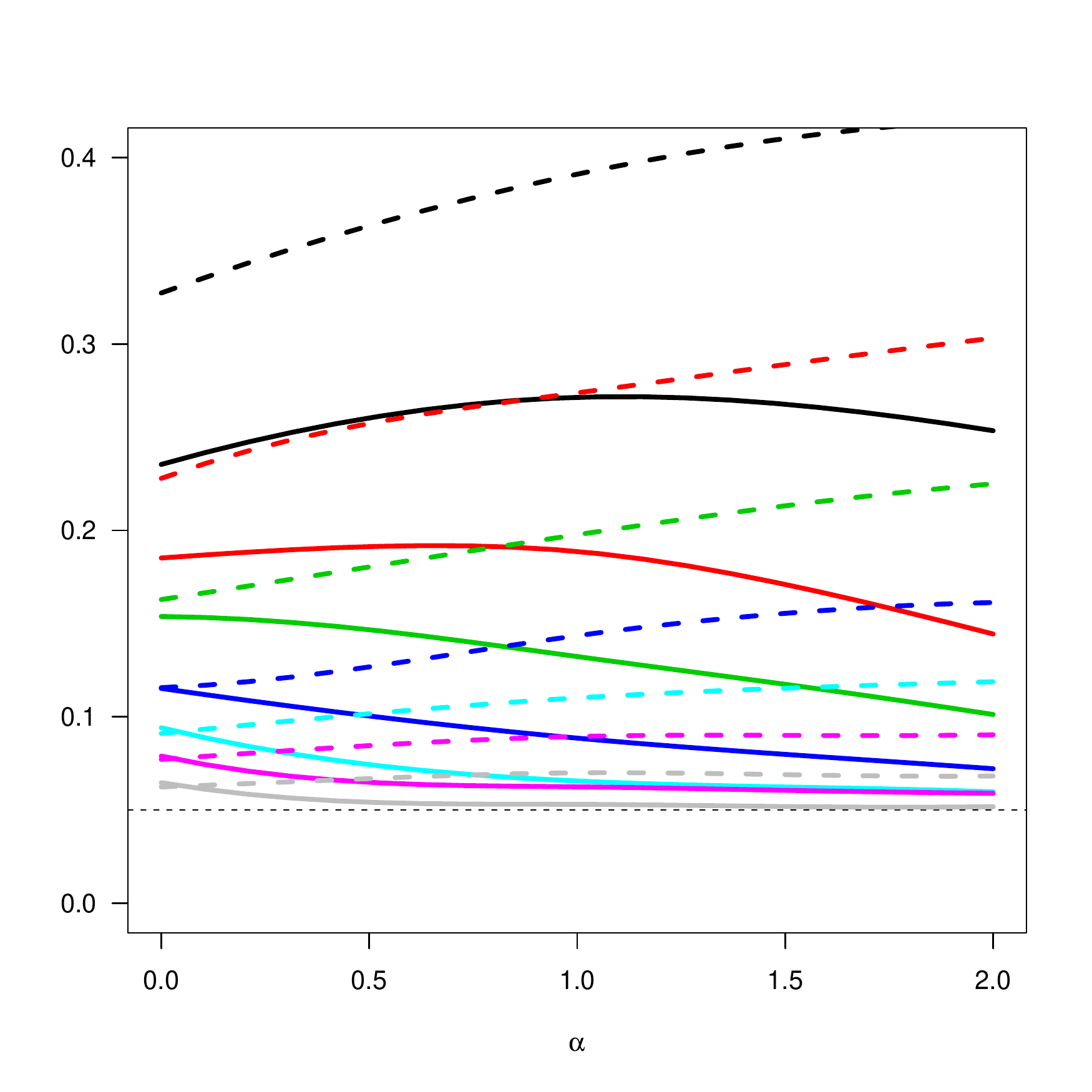}

}

\caption{Rejection frequencies for asymptotic $t$ tests calculated with asymptotic
standard errors: $\ell$-MVR and \textit{e}-MVR (solid lines), and
OLS and WLS (dashed lines). Sample sizes: 20 (black), 40 (red), 80
(green), 160 (blue), 320 (cyan), 640 (magenta), 1280 (grey). \label{fig:RejProbMVR1_HC0}}
\end{figure}

In order to further investigate the relative performance of MVR-based
inference, Tables \ref{tab:CIlength_OLS1} and \ref{tab:CIlength_WLS1}
report the ratio of average MVR confidence interval lengths across
simulations over the average OLS and WLS confidence interval lengths
for $\beta_{4}$ for each sample size and value of heteroskedasticity
index $\alpha$, in percentage terms. We find that in the presence
of heteroskedasticity, the length of MVR confidence intervals is shorter
for all designs compared to both OLS and WLS confidence intervals
for some $n$ large enough. The only exception is relative to $\ell$-WLS
for $\alpha=0.5,1$, as expected for $\alpha=1$ from $\ell$-WLS
being efficient in that case. The relatively larger average length
of the confidence intervals for $\ell$-MVR when $n=20$ in Tables
\ref{tab:CIlength_OLS1} and \ref{tab:CIlength_WLS1} is very much
reduced with finite-sample corrections (Tables 1-4 in the Supplementary
Material).

\begin{table}[t]
\centering %
\begin{tabular}{ccccccccccccc}
\toprule 
 &  & \multicolumn{5}{c}{$\ell$-MVR} &  & \multicolumn{5}{c}{\textit{e}-MVR}\tabularnewline
\midrule 
$\alpha$ &  & $0$ & $0.5$ & $1$ & $1.5$ & $2$ &  & $0$ & $0.5$ & $1$ & $1.5$ & $2$\tabularnewline
\midrule
$n=20$ &  & 204.1  & 193.8  & 179.9  & 167.2  & 158.7  &  & 114.7  & 121.8  & 123.2  & 123.4  & 121.9 \tabularnewline
$n=40$ &  & 121.0  & 117.5  & 109.8  & 100.0  & 91.0  &  & 104.2  & 109.8  & 108.8  & 103.1  & 94.1 \tabularnewline
$n=80$ &  & 106.7  & 105.9  & 97.7  & 85.5  & 75.0  &  & 101.6  & 106.0  & 100.7  & 89.1  & 74.3 \tabularnewline
$n=160$ &  & 102.9  & 101.8  & 90.1  & 74.8  & 64.0  &  & 100.9  & 103.3  & 92.9  & 76.4  & 59.1 \tabularnewline
$n=320$ &  & 101.4  & 98.5  & 83.1  & 65.6  & 55.3  &  & 100.6  & 100.0  & 84.8  & 65.1  & 47.3 \tabularnewline
$n=640$ &  & 100.6  & 95.2  & 76.2  & 57.5  & 47.8  &  & 100.2  & 96.3  & 77.0  & 55.6  & 38.2 \tabularnewline
$n=1280$ &  & 100.4  & 92.1  & 70.1  & 50.5  & 41.3  &  & 100.2  & 93.0  & 70.3  & 47.9  & 31.1 \tabularnewline
\bottomrule
\end{tabular}
\begin{centering}
\medskip{}
\par\end{centering}
\caption{Ratio ($\times100$) of MVR average confidence interval lengths for
$\beta_{4}$ over corresponding OLS counterpart. Confidence intervals
constructed with asymptotic standard errors.\label{tab:CIlength_OLS1}}
\bigskip{}
\centering %
\begin{tabular}{ccccccccccccc}
\toprule 
 &  & \multicolumn{5}{c}{$\ell$-MVR} &  & \multicolumn{5}{c}{\textit{e}-MVR}\tabularnewline
\midrule 
$\alpha$ &  & $0$ & $0.5$ & $1$ & $1.5$ & $2$ &  & $0$ & $0.5$ & $1$ & $1.5$ & $2$\tabularnewline
\midrule
$n=20$ &  & 245.2  & 226.0  & 210.7  & 201.3  & 197.4  &  & 136.1  & 143.3  & 149.8  & 158.7  & 165.9 \tabularnewline
$n=40$ &  & 140.4  & 128.8  & 123.4  & 120.1  & 117.0  &  & 113.0  & 118.8  & 126.1  & 132.7  & 133.8 \tabularnewline
$n=80$ &  & 120.5  & 111.5  & 109.8  & 107.5  & 103.2  &  & 105.7  & 110.7  & 116.1  & 119.4  & 113.5 \tabularnewline
$n=160$ &  & 112.7  & 105.7  & 106.3  & 101.9  & 96.7  &  & 102.8  & 106.8  & 108.6  & 107.2  & 96.1 \tabularnewline
$n=320$ &  & 108.7  & 103.4  & 105.4  & 98.3  & 92.7  &  & 101.4  & 103.9  & 102.0  & 96.5  & 81.8 \tabularnewline
$n=640$ &  & 105.5  & 102.7  & 105.1  & 95.3  & 89.0  &  & 100.7  & 101.2  & 95.8  & 87.0  & 70.1 \tabularnewline
$n=1280$ &  & 103.4  & 102.7  & 104.9  & 93.0  & 85.4  &  & 100.4  & 98.9  & 90.5  & 79.4  & 60.9 \tabularnewline
\bottomrule
\end{tabular}
\begin{centering}
\medskip{}
\par\end{centering}
\caption{Ratio ($\times100$) of MVR average confidence interval lengths for
$\beta_{4}$ over corresponding WLS counterpart. Confidence intervals
constructed with asymptotic standard errors.\label{tab:CIlength_WLS1}}
\end{table}

Overall these simulation results  demonstrate that MVR achieves large
improvements in terms of estimation and inference compared to OLS
in the presence of heteroskedasticity, and compared to WLS when the
conditional variance function is misspecified. Our numerical simulations
confirm MVR robustness to the specification of the scale function,
and both $\ell$-MVR and $e$-MVR perform very well in finite samples.
In the presence of heteroskedasticity rejection probabilities for
MVR are much closer to nominal level than those for OLS and for WLS
with misspecified weights. MVR achieves these improvements while simultaneously
displaying tighter confidence intervals in all designs for sample
sizes large enough. They are also shorter than their WLS counterpart
with misspecified weights for sample sizes large enough. The precision
of MVR estimates measured in RMSE is also largely superior to OLS
under heteroskedasticity and to WLS with misspecified weights, with
lower losses than WLS relative to OLS under homoskedasticity. These
results and the simulations in the Supplementary Material illustrate
the higher precision, improved finite-sample inference, and favorable
approximation properties of MVR compared to classical least-squares
methods.

\section{Conclusion}

We introduce a new loss function for the linear estimation and approximation
of CMFs. The proposed alternative generalises OLS, resulting in more
robust approximations under misspecification and large improvements
in finite samples. Given the importance of the least-squares loss
in econometrics and statistics, and the common occurence of heteroskedasticity
in empirical practice, the range of applications for simultaneous
mean-variance regression will be vast. Examples of natural avenues
for future research include the method of instrumental variables,
GARCH models, and flexible specification of the conditional variance
function for efficient estimation. These extensions will be considered
in subsequent work.

\appendix

\section{Theory for the MVR Criterion}

\subsection{Notation and Definitions}

We define
\[
L(X,Y,\theta):=\frac{1}{2}\left\{ \left(\frac{Y-X'\beta}{s(X'\gamma)}\right)^{2}+1\right\} s(X'\gamma),
\]
and
\[
\widetilde{L}(X,\theta):=\frac{1}{2}\left\{ \frac{E[(Y-X'\beta)^{2}\mid X]}{s(X'\gamma)}+s(X'\gamma)\right\} .
\]
so that by iterated expectations the objective function  can be expressed
as
\[
Q(\theta)=E[L(X,Y,\theta)]=E[\widetilde{L}(X,\theta)],\qquad\theta\in\Theta.
\]
We denote the level sets of $Q(\theta)$ by $\mathcal{B}_{c}=\{\theta\in\Theta:\;Q(\theta)\leq c\}$,
$c\in\mathbb{R}$, with boundary set $\partial\mathcal{B}_{c}$. We
also define the compact set $\mathcal{B}=\mathcal{B}_{\beta}\times\mathcal{B}_{\gamma}\subseteq\Theta$,
where $\mathcal{B}_{\beta}$ and $\mathcal{B}_{\gamma}$ are compact
subsets of $\mathbb{R}^{k}$ and $\Theta_{\gamma}$, respectively,
and the boundary set of $\Theta$
\[
\partial\Theta=\mathbb{R}^{k}\times\partial\Theta_{\gamma},\quad\partial\Theta_{\gamma}=\left\{ \gamma\in\mathbb{R}^{k}\,:\,\Pr[s(X'\gamma)=0]>0\right\} .
\]
For any two real numbers $a$ and $b$, $a\lor b=\max(a,b)$. For
two random variables $U$ and $V$, $\mathcal{U}$ denotes the support
of $U$, defined as the set of values of $U$ such that the density
$f_{U}(u)$ of $U$ is bounded away from $0$, and $\mathcal{U}_{v}$
is the conditional support of $U$ given $V=v$, $v\in\mathcal{V}$.
Throughout, $C$ is a generic constant whose value may change from
place to place.

\subsection{Preliminary Results}

This section gathers two preliminary results used in establishing
the properties of $Q(\theta)$.
\begin{lem}
Let $V$ be a random $k$ vector such that $E[VV']$ exists and is
nonsingular. Then, for every sequence $(\gamma_{n})$ in $\mathbb{R}^{k}$
such that $||\gamma_{n}||\rightarrow\infty$, there exists $v^{*}\in\mathcal{V}$
such that $\lim_{||\gamma_{n}||\rightarrow\infty}|\gamma_{n}'v^{*}|=\infty$
a.s.\label{lem:linindpdce}
\end{lem}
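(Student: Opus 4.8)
The plan is to reduce the statement to a single limiting direction by compactness, and then to use the nonsingularity of $E[VV']$ to locate a support point that is not orthogonal to that direction. Write $u_n:=\gamma_n/\|\gamma_n\|$, so that each $u_n$ lies on the unit sphere $S^{k-1}:=\{u\in\mathbb{R}^k:\|u\|=1\}$. Since $S^{k-1}$ is compact, I would pass to a subsequence $(\gamma_{n_j})$ along which $u_{n_j}\to u^*$ for some $u^*$ with $\|u^*\|=1$; this limiting direction is what governs the blow-up, and it suffices to exhibit $v^*\in\mathcal{V}$ with $|\gamma_{n_j}'v^*|\to\infty$, which is exactly what the downstream coercivity arguments require.

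The key step is to produce a support point not annihilated by $u^*$. Because $E[VV']$ is symmetric, positive semidefinite, and nonsingular, its smallest eigenvalue $\lambda_{\min}$ is strictly positive, whence
\[
E[(u^{*\prime}V)^2]=u^{*\prime}E[VV']u^*\geq\lambda_{\min}\|u^*\|^2=\lambda_{\min}>0.
\]
Thus $u^{*\prime}V$ cannot vanish almost surely, i.e. the open set $A:=\{v\in\mathbb{R}^k:u^{*\prime}v\neq0\}$, the complement of the hyperplane $H:=\{v\in\mathbb{R}^k:u^{*\prime}v=0\}$, satisfies $\Pr[V\in A]>0$. Equivalently, the support $\mathcal{V}$ is not contained in $H$: were it so, $V$ would lie in $H$ almost surely and $E[(u^{*\prime}V)^2]$ would be zero, a contradiction. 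I would then select $v^*\in\mathcal{V}\setminus H$, so that $c:=u^{*\prime}v^*\neq0$.

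Finally I would combine the two ingredients. Since $u_{n_j}\to u^*$ and $v^*$ is fixed, $u_{n_j}'v^*\to u^{*\prime}v^*=c\neq0$, so $|u_{n_j}'v^*|\geq|c|/2$ for all large $j$. Because $\|\gamma_{n_j}\|\to\infty$, it follows that
\[
|\gamma_{n_j}'v^*|=\|\gamma_{n_j}\|\,|u_{n_j}'v^*|\geq\frac{|c|}{2}\,\|\gamma_{n_j}\|\longrightarrow\infty,
\]
which delivers the claim.

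The main obstacle I anticipate is the passage from ``the hyperplane $H$ carries less than full probability'' to ``there is an actual point of $\mathcal{V}$ off $H$'', since $\mathcal{V}$ is defined through the density being bounded away from zero rather than merely positive. I would handle this by noting that $A$ is open with $\Pr[V\in A]>0$, so $A$ must meet the support in any reasonable sense: were the density essentially null throughout $A$, the mass $\Pr[V\in A]$ would itself vanish. A secondary point to flag is the gap between the displayed full-sequence limit and what compactness literally yields along the extracted subsequence of directions; this is harmless for the coercivity application, where it suffices to rule out bounded subsequences of the objective, but I would state explicitly that the conclusion is obtained along $(\gamma_{n_j})$ and note that the argument applies to every such subsequence of an arbitrary divergent $(\gamma_n)$.
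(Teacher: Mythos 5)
Your argument is correct and follows essentially the same route as the paper's own proof: normalize to the unit sphere, extract a convergent subsequence of directions by Bolzano--Weierstrass, use positive definiteness of $E[VV']$ to show the limit direction is not orthogonal to the support, pick a support point $v^*$ off the corresponding hyperplane, and conclude divergence along the subsequence. Your two flagged caveats (existence of an actual support point off $H$, and the subsequence versus full-sequence phrasing) are both present in the paper's version as well, and you handle them at least as carefully as the original.
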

\begin{proof}
Consider a sequence $(\gamma_{n})$ in $\mathbb{R}^{k}$ such that
$||\gamma_{n}||\rightarrow\infty$, and define $\delta_{n}=\frac{\gamma_{n}}{||\gamma_{n}||}$.
The sequence $(\delta_{n})$ is bounded, and by application of the
Bolzano-Weierstrass theorem there exists a convergent subsequence
$\delta_{n_{l}}$, $n_{l}\rightarrow\infty$ as $l\rightarrow\infty$,
with limit $\delta_{o}$. Moreover, $E[VV']$ nonsingular implies
that it is positive definite, so that $E[(V'\delta_{o})^{2}]=\delta_{o}'E[VV']\delta_{o}>0$.
It follows that $V'\delta_{o}\neq0$ on a set of positive probability,
and there exists a value $v^{*}\in\mathcal{V}$ such that $\delta_{o}'v^{*}\neq0$
a.s. Therefore, $\delta_{n_{l}}=\frac{\gamma_{n_{l}}}{||\gamma_{n_{l}}||}$
satisfies $\delta_{n_{l}}'v^{*}\rightarrow\delta_{o}'v^{*}\neq0$
as $l\rightarrow\infty$, which implies that $\lim_{l\rightarrow\infty}|\gamma_{n_{l}}'v^{*}|\rightarrow\infty$:
\[
\lim_{l\rightarrow\infty}|\gamma_{n_{l}}'v^{*}|=\lim_{l\rightarrow\infty}\left|(\delta_{n_{l}}'v^{*})||\gamma_{n_{l}}||\right|=\left|(\delta_{o}'v^{*})\lim_{l\rightarrow\infty}||\gamma_{n_{l}}||\right|=\infty.
\]
The stated result follows.
\end{proof}
\begin{lem}
\label{thm:posdef}Suppose that Assumptions \ref{ass:ScaleSpec},
\ref{ass:Var(Y|X)} and \ref{ass:FullRank} hold. Then the matrix
\[
\varPsi(\theta)=E\left[\begin{array}{cc}
\frac{XX'}{s(X'\gamma)} & \frac{XX'}{s(X'\gamma)}s_{1}(X'\gamma)e(Y,X,\theta)\\
\frac{XX'}{s(X'\gamma)}s_{1}(X'\gamma)e(Y,X,\theta) & \frac{XX^{\top}}{s(\gamma\cdot X)}\{s_{1}(X'\gamma)e(Y,X,\theta)\}^{2}
\end{array}\right],
\]
defined for all $\theta\in\mathcal{B}$, is positive definite.
\end{lem}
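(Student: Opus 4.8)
The plan is to establish positive definiteness by directly evaluating the quadratic form associated with $\varPsi(\theta)$. Take an arbitrary vector $(a',b')'\in\mathbb{R}^{2k}$ with $a,b\in\mathbb{R}^{k}$, partitioned conformably with the block structure, and write $e=e(Y,X,\theta)$. Since each block has the form $E[XX'\times(\text{scalar})]$, the two off-diagonal contributions combine with the diagonal ones into a perfect square, and I expect to obtain
\[
\begin{bmatrix}a' & b'\end{bmatrix}\varPsi(\theta)\begin{bmatrix}a\\ b\end{bmatrix}=E\left[\frac{1}{s(X'\gamma)}\bigl(X'a+s_{1}(X'\gamma)\,e\,X'b\bigr)^{2}\right].
\]
Because $\theta\in\mathcal{B}$ entails $\gamma\in\mathcal{B}_{\gamma}\subseteq\Theta_{\gamma}$, we have $s(X'\gamma)>0$ a.s., so the integrand is nonnegative and the form is at least positive semidefinite. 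This first step is routine.

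The substantive step is to rule out a nontrivial null vector. Suppose the displayed expectation vanishes for some $(a,b)$. Since $1/s(X'\gamma)>0$ a.s., the integrand must equal zero almost surely, that is
\[
X'a+s_{1}(X'\gamma)\,\frac{Y-X'\beta}{s(X'\gamma)}\,X'b=0\qquad\text{a.s.}
\]
The key observation is that, conditional on $X=x$, this is an affine function of $Y$ with slope $s_{1}(x'\gamma)\,x'b/s(x'\gamma)$ and an intercept depending only on $x$. The hard part is converting the vanishing of this affine function into linear restrictions on $a$ and $b$, and this is precisely where Assumption \ref{ass:Var(Y|X)} does the work: because $\sigma(x)^{2}$ is bounded away from zero uniformly on $\mathcal{X}$, the conditional law of $Y$ given $X=x$ is nondegenerate for every $x\in\mathcal{X}$, so $Y$ takes at least two distinct values on its conditional support with positive probability. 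An affine function of $Y$ that is a.s.\ zero on a nondegenerate support must have both slope and intercept equal to zero.

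Vanishing of the slope gives $s_{1}(x'\gamma)\,x'b/s(x'\gamma)=0$; here I would invoke Assumption \ref{ass:ScaleSpec}, which forces $s_{1}>0$ on the whole domain (a strictly increasing convex function cannot have a zero derivative, else its non-decreasing derivative would be zero to the left of that point, making $s$ constant there and contradicting strict monotonicity), together with $s>0$, to conclude $x'b=0$; the intercept condition then reduces to $x'a=0$. Running this over the support yields $X'a=0$ and $X'b=0$ a.s. To extract $a=b=0$ I pass back to the weighted second-moment matrix: from $X'a=0$ a.s.\ it follows that $a'E[XX'/s(X'\gamma)]a=E[(X'a)^{2}/s(X'\gamma)]=0$, and nonsingularity (hence positive definiteness) of $E[XX'/s(X'\gamma)]$ from Assumption \ref{ass:FullRank} forces $a=0$; the identical argument gives $b=0$. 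This contradicts $(a,b)\neq0$, so the quadratic form is strictly positive for every nonzero $(a',b')'$ and $\varPsi(\theta)$ is positive definite on $\mathcal{B}$. I expect the only delicate point to be the conditional-support argument, which must be phrased so that the exceptional null set is controlled uniformly in $x$ through Assumption \ref{ass:Var(Y|X)}.
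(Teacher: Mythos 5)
Your proof is correct, but it takes a genuinely different route from the paper's. The paper first establishes positive definiteness of the $(1,1)$ block $E[XX'/s(X'\gamma)]$ from Assumption \ref{ass:FullRank}, then reduces the claim to positive definiteness of the Schur complement $\varUpsilon(\theta)$, writes that complement as the expected outer product of a residual vector, and invokes the matrix Cauchy--Schwarz inequality of \citet{Tripathi:1999} to rule out the degenerate case via $\Pr[\lambda'X/s(X'\gamma)^{1/2}\neq0]>0$, $\Pr[s_{1}(X'\gamma)>0]=1$, and $\Pr[\textrm{Var}(Y\mid X)>0]=1$. You instead complete the square in the full $2k\times 2k$ quadratic form, obtaining $E\bigl[s(X'\gamma)^{-1}\{X'a+s_{1}(X'\gamma)e\,X'b\}^{2}\bigr]$, and reduce strict positivity to the impossibility of $X'a+s_{1}(X'\gamma)e\,X'b=0$ a.s.; the affine-in-$Y$ argument conditional on $X=x$, combined with nondegeneracy of $Y\mid X$ (Assumption \ref{ass:Var(Y|X)}), strict positivity of $s_{1}$ (your convexity argument for this is sound and matches what the paper asserts), and nonsingularity of $E[XX'/s(X'\gamma)]$, then yields $a=b=0$. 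The two arguments use the three assumptions in exactly the same roles, but yours is more elementary and self-contained: it avoids both the Schur-complement reduction and the external matrix Cauchy--Schwarz lemma. The only point requiring care --- the disintegration of the a.s.\ statement into ``for $P_{X}$-a.e.\ $x$, the affine function of $Y$ vanishes $P_{Y\mid X=x}$-a.s.'' --- is one you flag explicitly, and it is handled by the standard conditioning argument, so there is no gap.
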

\begin{proof}
The proof builds on the proofs of Lemma S3 in \citet{SS:2018} and
Theorem 1 in \citet{NS:2018}. Positive definiteness of $E[XX'/s(X'\gamma)]$
for $\gamma\in\mathcal{B}_{\gamma}$ under Assumption \ref{ass:FullRank}
implies that $\varPsi(\theta)$ is positive definite for all $\theta\in\mathcal{B}$
if and only if the Schur complement of $E[XX'/s(X'\gamma)]$ in $\varPsi(\theta)$
is positive definite (\citealp{BV:2004}, Appendix A.5.5) for all
$\theta\in\mathcal{B}$, i.e., if and only if
\begin{align*}
\varUpsilon(\theta) & :=E\left[\frac{XX'}{s(X'\gamma)}\{s_{1}(X'\gamma)e(Y,X,\theta)\}^{2}\right]\\
 & -E\left[\frac{XX'}{s(X'\gamma)}s_{1}(X'\gamma)e(Y,X,\theta)\right]E\left[\frac{XX'}{s(X'\gamma)}\right]^{-1}E\left[\frac{XX'}{s(X'\gamma)}s_{1}(X'\gamma)e(Y,X,\theta)\right]
\end{align*}
satisfies $\textrm{det}\{\varUpsilon(\theta)\}>0$, for all $\theta\in\mathcal{B}$.

Letting 
\[
\Xi(\theta):=E\left[\frac{XX'}{s(X'\gamma)}s_{1}(X'\gamma)e(Y,X,\theta)\right]E\left[\frac{XX'}{s(X'\gamma)}\right]^{-1},
\]
for all $\theta\in\mathcal{B}$, $\varUpsilon(\theta)$ is equal to
\[
E\left[\left\{ \frac{Xs_{1}(X'\gamma)e(Y,X,\theta)}{s(X'\gamma)^{1/2}}-\Xi(\theta)\frac{X}{s(X'\gamma)^{1/2}}\right\} \left\{ \frac{Xs_{1}(X'\gamma)e(Y,X,\theta)}{s(X'\gamma)^{1/2}}-\Xi(\theta)\frac{X}{s(X'\gamma)^{1/2}}\right\} ^{'}\right],
\]
a finite positive definite matrix, if and only if for all $\lambda\neq0$
and all $\theta\in\mathcal{B}$ there is no $d$ such that
\begin{equation}
\Pr\left[\left\{ \frac{\lambda'X}{s(X'\gamma)^{1/2}}\right\} s_{1}(X'\gamma)e(Y,X,\theta)=d'\left\{ \Xi(\theta)\frac{X}{s(X'\gamma)^{1/2}}\right\} \right]>0;\label{eq:posdefcond}
\end{equation}
this is an application of the Cauchy-Schwarz inequality for matrices
stated in \citet{Tripathi:1999}.

Positive definiteness of $E\left[XX'/s(X'\gamma)\right]$ for $\gamma\in\mathcal{B}_{\gamma}$
under Assumption \ref{ass:FullRank} implies that $E\left[\{\lambda'X\}^{2}/s(X'\gamma)\right]>0$
for all $\lambda\neq0$, which implies that $\Pr[\lambda'X/s(X'\gamma)^{1/2}\neq0]>0$
for all $\lambda\neq0$. Also, by Assumptions \ref{ass:ScaleSpec}-\ref{ass:Var(Y|X)},
we have $\Pr[s_{1}(X'\gamma)>0]=1$ and
\[
\Pr\left[\textrm{Var}(s_{1}(X'\gamma)e(Y,X,\theta)\mid X)>0\right]=\Pr\left[\left(\frac{s_{1}(X'\gamma)}{s(X'\gamma)}\right)^{2}\textrm{Var}(Y\mid X)>0\right]=1.
\]
Thus for all $\lambda\neq0$, by $\Xi(\theta)$ being a constant matrix
for all $\theta\in\mathcal{B}$, there is no $d$ such that (\ref{eq:posdefcond})
holds, and the result follows.
\end{proof}

\subsection{Main Properties of $Q(\theta)$}
\begin{lem}
\label{lem:QCont}{[}Continuity{]} Suppose that Assumptions \ref{ass:ScaleSpec}
and \ref{ass:Moments} hold. Then $\theta\mapsto Q(\theta)$ is continuous
over $\mathcal{B}$.
\end{lem}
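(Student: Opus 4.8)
The plan is to prove continuity of $Q(\theta)=E[L(X,Y,\theta)]$ on the compact set $\mathcal{B}=\mathcal{B}_{\beta}\times\mathcal{B}_{\gamma}$ by verifying the two hypotheses of the standard dominated-convergence criterion for continuity under the integral sign (the argument underlying Lemma 4.3 in \citet{Newey:McFadden:1994}): (a) for almost every $(X,Y)$ the map $\theta\mapsto L(X,Y,\theta)$ is continuous on $\mathcal{B}$, and (b) there exists an integrable envelope $D(X,Y)$ with $\sup_{\theta\in\mathcal{B}}|L(X,Y,\theta)|\le D(X,Y)$ a.s. Given (a) and (b), for any sequence $\theta_{n}\to\theta_{0}$ in $\mathcal{B}$ the dominated convergence theorem yields $Q(\theta_{n})\to Q(\theta_{0})$, which is exactly continuity of $Q$ on $\mathcal{B}$. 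Part (a) is essentially immediate: by Assumption \ref{ass:ScaleSpec} the scale $s$ is differentiable, hence continuous, and since $\mathcal{B}_{\gamma}\subset\Theta_{\gamma}$ we have $s(X'\gamma)>0$ a.s.\ for every $\gamma\in\mathcal{B}_{\gamma}$, so $\theta\mapsto L(X,Y,\theta)=\tfrac{1}{2}(Y-X'\beta)^{2}/s(X'\gamma)+\tfrac{1}{2}s(X'\gamma)$ never divides by zero and is continuous in $\theta$ for a.e.\ $(X,Y)$.

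The substance of the proof is the construction and integrability of the envelope in (b). Setting $C_{\beta}:=\max_{\beta\in\mathcal{B}_{\beta}}\|\beta\|$ and $C_{\gamma}:=\max_{\gamma\in\mathcal{B}_{\gamma}}\|\gamma\|$, I would first bound the numerator by $(Y-X'\beta)^{2}\le 2Y^{2}+2C_{\beta}^{2}\|X\|^{2}$. For the scale factor, the key is that $\gamma\mapsto s(X'\gamma)$ is continuous on the compact set $\mathcal{B}_{\gamma}$ and hence attains a finite maximum $\bar{s}(X):=\max_{\gamma\in\mathcal{B}_{\gamma}}s(X'\gamma)$ and a minimum $\underline{s}(X):=\min_{\gamma\in\mathcal{B}_{\gamma}}s(X'\gamma)$, with $\underline{s}(X)>0$ a.s.; the latter follows because positivity $s(X'\gamma)>0$ holds a.s.\ simultaneously over a countable dense subset of $\mathcal{B}_{\gamma}$ and extends to all of $\mathcal{B}_{\gamma}$ by continuity and compactness (using that a compact $\mathcal{B}_{\gamma}\subset\Theta_{\gamma}$ is separated from the boundary $\partial\Theta_{\gamma}$). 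These bounds give the envelope
\[
\sup_{\theta\in\mathcal{B}}L(X,Y,\theta)\le \frac{Y^{2}+C_{\beta}^{2}\|X\|^{2}}{\underline{s}(X)}+\frac{1}{2}\bar{s}(X)=:D(X,Y).
\]

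The main obstacle is showing $E[D(X,Y)]<\infty$. Monotonicity of $s$ (Assumption \ref{ass:ScaleSpec}) controls the upper tail via $\bar{s}(X)\le s(C_{\gamma}\|X\|)$, which is polynomial in $\|X\|$ for the linear scale and exponential for the exponential scale; integrability of the $\tfrac{1}{2}\bar{s}(X)$ term then follows from $E\|X\|^{4}<\infty$ in the linear case and from the tail control implied by Assumption \ref{ass:Moments}(ii) (which, as the Remark notes, amounts to an exponential moment bound) in the exponential case. The delicate term is the reciprocal $1/\underline{s}(X)$, since no crude deterministic lower bound on $s(X'\gamma)$ is available for scales with domain $(0,\infty)$; here I would use the a.s.\ positivity of $\underline{s}(X)$ together with the Cauchy–Schwarz inequality, e.g.\ $E[Y^{2}/\underline{s}(X)]\le E[Y^{4}]^{1/2}\,E[\underline{s}(X)^{-2}]^{1/2}$ and analogously for the $\|X\|^{2}/\underline{s}(X)$ piece, reducing finiteness of $D$ to the moment bounds of Assumption \ref{ass:Moments} and the separation of $\mathcal{B}_{\gamma}$ from $\partial\Theta_{\gamma}$. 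With (a) verified and $E[D(X,Y)]<\infty$ established, dominated convergence yields $Q(\theta_{n})\to Q(\theta_{0})$ along every convergent sequence in $\mathcal{B}$, completing the proof.
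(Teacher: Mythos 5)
Your overall strategy (pointwise continuity of $\theta\mapsto L(X,Y,\theta)$ plus an integrable envelope, then dominated convergence) is exactly the paper's, and your treatment of the numerator and of the term $\tfrac{1}{2}\sup_{\gamma}s(X'\gamma)$ is broadly workable, although the paper handles the latter for the \emph{whole} class of Assumption \ref{ass:ScaleSpec} by a mean-value expansion $|s(X'\gamma)|\le s(\kappa)+\sup_{\gamma\in\mathcal{B}_{\gamma}}\|\gamma\|\,\|X\|\,s_{1}(X'\bar{\gamma})$ feeding into Assumption \ref{ass:Moments}(ii), whereas your bound $\bar{s}(X)\le s(C_{\gamma}\|X\|)$ is then argued only case-by-case for the linear and exponential specifications. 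The genuine gap is in your treatment of the reciprocal term. You correctly observe that no crude deterministic lower bound on $s(X'\gamma)$ is available, you establish only that $\underline{s}(X):=\min_{\gamma\in\mathcal{B}_{\gamma}}s(X'\gamma)>0$ a.s., and you then invoke Cauchy--Schwarz, $E[Y^{2}/\underline{s}(X)]\le E[Y^{4}]^{1/2}E[\underline{s}(X)^{-2}]^{1/2}$. But finiteness of $E[\underline{s}(X)^{-2}]$ is a new moment condition on the reciprocal of the scale that appears nowhere in Assumption \ref{ass:Moments} and does not follow from a.s.\ positivity of $\underline{s}(X)$, nor from ``separation of $\mathcal{B}_{\gamma}$ from $\partial\Theta_{\gamma}$'': separation in parameter space does not prevent $s(x'\gamma)$ from being arbitrarily close to zero as $x$ ranges over an unbounded support $\mathcal{X}$. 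So your reduction does not close.

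The paper closes this step by reading compactness of $\mathcal{B}_{\gamma}$ in the strong sense built into the definition of $\Theta^{c}$, namely $\inf_{x\in\mathcal{X}}s(x'\gamma)\ge C_{s}>0$ uniformly over the compact set; this yields a deterministic constant $C$ with $\sup_{\gamma\in\mathcal{B}_{\gamma}}1/s(X'\gamma)\le C$ a.s., after which the reciprocal term is dominated by $C(2Y^{2}+2C_{\beta}^{2}\|X\|^{2})$ and only $E[Y^{2}]<\infty$ and $E\|X\|^{2}<\infty$ are needed. You should either adopt that uniform lower bound as part of what ``compact subset of $\Theta_{\gamma}$'' means here, or supply an explicit argument (or hypothesis) giving $E[\underline{s}(X)^{-2}]<\infty$; as written, the integrability of your envelope $D(X,Y)$ is not established from the stated assumptions.
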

\begin{proof}
We first show that $E[\sup_{\theta\in\mathcal{B}}\left|L(X,Y,\theta)\right|]<\infty$
for all $\theta\in\mathcal{B}$. By the Triangle Inequality,
\begin{equation}
2\left|L(X,Y,\theta)\right|\leq|e(Y,X,\theta)^{2}s(X'\gamma)|+|s(X'\gamma)|.\label{eq:TI}
\end{equation}
Compactness of $\mathcal{B}_{\gamma}$ implies that there exists a
constant $C$ such that $\sup_{\gamma\in\mathcal{B}_{\gamma}}1/s(X'\gamma)\leq C<\infty$
a.s. Thus for $\theta\in\mathcal{B}$, the bound
\begin{equation}
|e(Y,X,\theta)^{2}s(X'\gamma)|\leq C[2Y^{2}+2(X'\beta)^{2}]\leq2C[Y^{2}+\sup_{\beta\in\mathcal{B}_{\beta}}||\beta||^{2}||X||^{2}],\label{eq:Bound-1}
\end{equation}
and $\sup_{\beta\in\mathcal{B}_{\beta}}||\beta||<\infty$ together
imply that $E[\sup_{\beta\in\mathcal{B}}|e(Y,X,\theta)^{2}s(X'\gamma)|]<\infty$
requires $E[Y^{2}]<\infty$ and $E||X||^{2}<\infty$, which hold under
Assumption \ref{ass:Moments}(i).

It remains to show that $E[\sup_{\gamma\in\mathcal{B}_{\gamma}}|s(X'\gamma)|]<\infty$.
For $\gamma\in\mathcal{B}_{\gamma}$, some $0\leq\kappa\in(a,\infty)$
and some intermediate values $\bar{\gamma}$, a mean-value expansion
about $(\kappa,0_{k-1})'$ yields
\[
|s(X'\gamma)|=|s(\kappa)+s_{1}(X'\bar{\gamma})(X'\gamma-\kappa)|\leq s(\kappa)+\sup_{\gamma\in\mathcal{B}_{\gamma}}||\gamma||\,||X||s_{1}(X'\bar{\gamma}).
\]
With $s(\kappa),\sup_{\gamma\in\mathcal{B}_{\gamma}}||\gamma||<\infty$,
$E[\sup_{\gamma\in\mathcal{B}_{\gamma}}|s(X'\gamma)|]<\infty$ requires
$E[||X||s_{1}(X'\bar{\gamma})]<\infty$, which holds under Assumption
\ref{ass:Moments}(ii). 

Bound (\ref{eq:TI}) now implies that $E[\sup_{\theta\in\mathcal{B}}\left|L(X,Y,\theta)\right|]<\infty$,
and continuity of $Q(\theta)$ then follows from continuity of $\theta\mapsto L(X,Y,\theta)$
and dominated convergence.
\end{proof}
\begin{lem}
\label{lem:QDiff}{[}Continuous Differentiability{]} If Assumptions
\ref{ass:ScaleSpec} and \ref{ass:Moments} hold, then, for all $\theta\in\mathcal{B}$,
$Q(\theta)$ is continuously differentiable and $\partial E[L(X,Y,\theta)]/\partial\theta=E[\partial L(X,Y,\theta)/\partial\theta]$.
\end{lem}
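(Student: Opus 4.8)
The plan is to verify the two hypotheses of the standard theorem on differentiation under the expectation (see, e.g., \citet{Newey:McFadden:1994}): that $\theta\mapsto L(X,Y,\theta)$ is continuously differentiable for each fixed $(X,Y)$, and that its gradient admits an integrable envelope on $\mathcal{B}$. Granting these, continuous differentiability of $Q$ and the interchange $\partial E[L(X,Y,\theta)]/\partial\theta=E[\partial L(X,Y,\theta)/\partial\theta]$ follow at once, with the derivative inheriting continuity by dominated convergence.

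First I would compute the gradient. Differentiating $L(X,Y,\theta)=\tfrac{1}{2}(Y-X'\beta)^2/s(X'\gamma)+\tfrac{1}{2}s(X'\gamma)$ gives
\[
\frac{\partial L}{\partial\beta}=-Xe(Y,X,\theta),\qquad\frac{\partial L}{\partial\gamma}=-\frac{1}{2}Xs_1(X'\gamma)\{e(Y,X,\theta)^2-1\}.
\]
Since $s$ is three times differentiable and strictly positive by Assumption \ref{ass:ScaleSpec}, the maps $\theta\mapsto s(X'\gamma)$, $\theta\mapsto 1/s(X'\gamma)$ and $\theta\mapsto s_1(X'\gamma)$ are continuous, so $\theta\mapsto\partial L/\partial\theta$ is continuous for every fixed $(X,Y)$, which gives the first hypothesis.

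The substantive step is the envelope. Reusing the bound $\sup_{\gamma\in\mathcal{B}_\gamma}1/s(X'\gamma)\le C$ a.s. established in the proof of Lemma \ref{lem:QCont}, together with $M:=\sup_{\beta\in\mathcal{B}_\beta}\|\beta\|<\infty$, I would bound $|e(Y,X,\theta)|\le C(|Y|+M\|X\|)$ and $e(Y,X,\theta)^2\le 2C^2(Y^2+M^2\|X\|^2)$ uniformly on $\mathcal{B}$. For the $\beta$-block this immediately yields $\|\partial L/\partial\beta\|\le C(\|X\|\,|Y|+M\|X\|^2)$, which is integrable by Cauchy--Schwarz under Assumption \ref{ass:Moments}(i). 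For the $\gamma$-block the only remaining difficulty is the growth of $s_1(X'\gamma)$, which I would control exactly as in Lemma \ref{lem:QCont}: a mean-value expansion of $s_1$ about the fixed point $(\kappa,0_{k-1})'$ gives $s_1(X'\gamma)\le s_1(\kappa)+\tilde M\|X\|s_2(X'\bar\gamma)$ for an intermediate $\bar\gamma$, so that $\|\partial L/\partial\gamma\|$ is dominated by a sum of terms of the form $\|X\|^{a}Y^{b}$ and $\|X\|^{c}s_2(X'\bar\gamma)Y^{d}$ with $a,c\le 4$ and $b,d\le 2$. Each such term has finite expectation by Cauchy--Schwarz, using $E[\|X\|^{4}s_2(X'\gamma)^2]<\infty$ together with $E[Y^4],\,E\|X\|^4<\infty$ from Assumption \ref{ass:Moments}.

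I expect the main obstacle to be making the envelope genuinely free of $\theta$: the intermediate value $\bar\gamma$ and the supremum over the compact set $\mathcal{B}_\gamma$ must be absorbed into a single $X$-measurable bound, which is where convexity and monotonicity of $s$ (so that $s_1\ge 0$ is nondecreasing and $s_2\ge 0$) are used, paralleling the treatment of $s(X'\gamma)$ in Lemma \ref{lem:QCont}. It is worth noting that only Assumption \ref{ass:Moments}(i) and the first moment condition in Assumption \ref{ass:Moments}(ii) enter here; the $s_3$ and $s_1s_2$ conditions are reserved for the second-derivative analysis. With the integrable envelope in hand, the cited theorem delivers both the interchange of differentiation and expectation and the continuity of $\theta\mapsto E[\partial L(X,Y,\theta)/\partial\theta]$, completing the proof.
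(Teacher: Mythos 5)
Your proposal is correct and follows essentially the same route as the paper's proof: the same gradient computation, the same uniform bound $\sup_{\gamma\in\mathcal{B}_\gamma}1/s(X'\gamma)\leq C$ from compactness, the same mean-value expansion of $s_1$ about $(\kappa,0_{k-1})'$ to control its growth, and the same appeal to a Newey--McFadden-type interchange lemma once the integrable envelope is established. Your explicit Cauchy--Schwarz step showing that the cross-moments are covered by $E[\|X\|^4 s_2(X'\gamma)^2]<\infty$, and your observation that the $s_3$ and $s_1 s_2$ conditions of Assumption 3(ii) are only needed for the Hessian, are both accurate refinements of what the paper leaves implicit.
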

\begin{proof}
\begin{sloppy}We first show that $E[\sup_{\theta\in\mathcal{B}}||\partial L(X,Y,\theta)/\partial\theta||]<\infty$.
Computing
\[
\partial L(X,Y,\theta)/\partial\beta=-Xe(Y,X,\theta),\quad\partial L(X,Y,\theta)/\partial\gamma=-\frac{1}{2}Xs_{1}(X'\gamma)\{e(Y,X,\theta)^{2}-1\}.
\]
Compactness of $\mathcal{B}_{\gamma}$ implies that there exists a
constant $C$ such that $\sup_{\gamma\in\mathcal{B}_{\gamma}}1/s(X'\gamma)\leq C<\infty$
a.s. Thus for $\theta\in\mathcal{B}$, the bound
\[
||Xe(Y,X,\theta)||\leq C||X||\,|Y-X'\beta|\leq C[|Y|\,||X||+\sup_{\beta\in\mathcal{B}_{\beta}}||\beta||\,||X||^{2}],
\]
and $\sup_{\beta\in\mathcal{B}_{\beta}}||\beta||<\infty$, imply that
$E[\sup_{\theta\in\mathcal{B}}||\partial L(X,Y,\theta)/\partial\beta||]<\infty$
requires $E[|Y|\,||X||]<\infty$ and $E||X||^{2}<\infty$, which hold
under Assumptions \ref{ass:Moments}(i).\par\end{sloppy}

We now show that $E[\sup_{\theta\in\mathcal{B}}||\partial L(X,Y,\theta)/\partial\gamma||]<\infty$.
Since $-1\leq e(Y,X,\theta)^{2}-1$ a.s., for $\theta\in\mathcal{B}$,
\begin{eqnarray}
\left\Vert Xs_{1}(X'\gamma)\{e(Y,X,\theta)^{2}-1\}\right\Vert  & \leq & ||X||\,s_{1}(X'\gamma)\,|e(Y,X,\theta)^{2}-1|\nonumber \\
 & \leq & ||X||\,s_{1}(X'\gamma)\,[1\vee2C(Y^{2}+\sup_{\beta\in\mathcal{B}_{\beta}}||\beta||^{2}||X||^{2})].\label{eq:Bound}
\end{eqnarray}
For $\gamma\in\mathcal{B}_{\gamma}$, some $0\leq\kappa\in(a,\infty)$
and some intermediate values $\bar{\gamma}$, a mean-value expansion
about $(\kappa,0_{k-1})'$ yields
\begin{equation}
|s_{1}(X'\gamma)|=|s_{1}(\kappa)+s_{2}(X'\bar{\gamma})(X'\gamma-\kappa)|\leq s_{1}(\kappa)+\sup_{\gamma\in\mathcal{B}_{\gamma}}||\gamma||\,||X||s_{2}(X'\bar{\gamma}).\label{eq:sprimeMVexpansion}
\end{equation}
This bound and (\ref{eq:Bound}) together imply
\begin{eqnarray}
\left\Vert Xs_{1}(X'\gamma)\{e(Y,X,\theta)^{2}-1\}\right\Vert  & \leq & ||X||\,[s_{1}(\kappa)+\sup_{\gamma\in\mathcal{B}_{\gamma}}||\gamma||\,||X||s_{2}(X'\bar{\gamma})]\nonumber \\
 &  & \times[1\vee2C(Y^{2}+\sup_{\beta\in\mathcal{B}_{\beta}}||\beta||^{2}||X||^{2})].\label{eq:BoundExp}
\end{eqnarray}
Since $\mathcal{B}$ is compact and $0<s_{1}(\kappa)<\infty$, $E[\sup_{\theta\in\mathcal{B}}||\nabla_{\gamma}L(X,Y,\theta)||]<\infty$
requires $E||X||^{3}<\infty$, $E[Y^{2}\,||X||]<\infty$ and, for
all $\gamma\in\mathcal{B}_{\gamma}$, $E[\left\Vert X\right\Vert ^{4}s_{2}(X'\gamma)]<\infty$
and $E[Y^{2}||X||^{2}s_{2}(X'\gamma)]<\infty$, which hold under Assumption
\ref{ass:Moments}. 

We have shown that $E[\sup_{\theta\in\mathcal{B}}||\partial L(X,Y,\theta)/\partial\theta||]<\infty$
and it now follows by Lemma 3.6 in \citet{Newey:McFadden:1994} that
$Q(\theta)$ is continuously differentiable over $\mathcal{B}$, and
that the order of differentiation and integration can be interchanged.
\end{proof}
\begin{lem}
\label{lem:Convexity}{[}Convexity{]} Suppose that Assumptions \ref{ass:ScaleSpec},
\ref{ass:Moments} and \ref{ass:FullRank} hold. Then $\theta\mapsto Q(\theta)$
is strictly convex over $\mathcal{B}$.
\end{lem}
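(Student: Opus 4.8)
The plan is to show strict convexity of $Q(\theta)$ over $\mathcal{B}$ by establishing that its Hessian is positive definite at every interior point. By Lemma \ref{lem:QDiff}, $Q$ is continuously differentiable and differentiation passes through the expectation; the natural continuation is to verify that $Q$ is twice continuously differentiable (again interchanging $\partial^2/\partial\theta^2$ with $E[\cdot]$, which will require the additional moment bounds in Assumption \ref{ass:Moments}(ii) involving $s_2$ and $s_3$) and then to compute the Hessian matrix $\partial^2 Q(\theta)/\partial\theta\,\partial\theta'$ explicitly.

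First I would differentiate the first-order derivatives $\partial L/\partial\beta = -Xe$ and $\partial L/\partial\gamma = -\tfrac12 Xs_1(X'\gamma)\{e^2-1\}$ a second time. The $\beta\beta$ block will give $E[XX'/s(X'\gamma)]$, the cross $\beta\gamma$ block will produce a term proportional to $E[(XX'/s(X'\gamma))\,s_1(X'\gamma)e]$, and the $\gamma\gamma$ block will produce $E[(XX'/s(X'\gamma))\{s_1(X'\gamma)e\}^2]$ plus additional terms carrying $s_2$ and $s_1^2$ and the factor $(e^2-1)$. The key observation is that the leading part of this Hessian is exactly the matrix $\varPsi(\theta)$ from Lemma \ref{thm:posdef}, which has already been shown to be positive definite under Assumptions \ref{ass:ScaleSpec}, \ref{ass:Var(Y|X)} and \ref{ass:FullRank}.

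The main obstacle is the residual terms in the $\gamma\gamma$ block that are not part of $\varPsi(\theta)$ — those involving $s_2(X'\gamma)$ and $(e^2-1)$ — which need not be sign-definite pointwise because $e^2-1$ takes both signs. The plan to handle this is to invoke iterated expectations together with the correct-specification-free structure: after conditioning on $X$, any term linear in $e$ integrates against $E[e\mid X]$ and any term linear in $(e^2-1)$ against $E[e^2-1\mid X]$. Writing the Hessian in terms of $\widetilde L(X,\theta)$ rather than $L(X,Y,\theta)$ (using the reduction $Q(\theta)=E[\widetilde L(X,\theta)]$ recorded in the Notation subsection) should let me replace $Y-X'\beta$ by its conditional mean and $(Y-X'\beta)^2$ by its conditional second moment, so that the problematic $s_2$ terms are controlled by the convexity and positivity properties of $s$ from Assumption \ref{ass:ScaleSpec}. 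Once the full Hessian is shown to dominate $\varPsi(\theta)$ in the positive-definite order — or is shown directly to be positive definite by the same Schur-complement and Cauchy–Schwarz argument used in Lemma \ref{thm:posdef} — strict convexity over $\mathcal{B}$ follows, since a $C^2$ function with everywhere positive-definite Hessian on a convex set is strictly convex there.

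Finally I would note that strict convexity on the generic compact set $\mathcal{B}=\mathcal{B}_\beta\times\mathcal{B}_\gamma$ is enough to conclude strict convexity on all of $\Theta$: any line segment in $\Theta$ is contained in some such compact box, so the restriction of $Q$ to that segment is strictly convex, and strict convexity is a segment-wise property. I expect the bookkeeping of the second-derivative bounds to be routine given Assumption \ref{ass:Moments}, so the real work is isolating and signing the $s_2$–$(e^2-1)$ terms in the $\gamma\gamma$ block.
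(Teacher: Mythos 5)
Your architecture is the same as the paper's: justify the interchange of $E$ and $\partial^{2}/\partial\theta\partial\theta'$ via Assumption \ref{ass:Moments}, decompose the Hessian as $H_{1}(\theta)+H_{2}(\theta)$ with $H_{1}(\theta)=\varPsi(\theta)$ positive definite by Lemma \ref{thm:posdef}, and then handle the remainder, whose only nonzero block is $-\tfrac{1}{2}E[XX's_{2}(X'\gamma)\{e(Y,X,\theta)^{2}-1\}]$. You correctly locate the crux there, but your proposed resolution does not work. Conditioning on $X$ gives $E[e(Y,X,\theta)^{2}-1\mid X]=\{(\mu(X)-X'\beta)^{2}+\sigma(X)^{2}-s(X'\gamma)^{2}\}/s(X'\gamma)^{2}$, whose sign is indeterminate over $\mathcal{B}$; and since $s_{2}\geq0$ by convexity of $s$, the remainder block is \emph{negative} semidefinite precisely where the conditional second moment of $Y-X'\beta$ exceeds $s(X'\gamma)^{2}$, i.e., the ``convexity and positivity of $s$'' cut the wrong way. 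Nor is this a repairable bookkeeping issue: take $X\equiv1$, $s(t)=\exp(t)$, $Y=c+\epsilon$ with $\mathrm{Var}(\epsilon)=v>0$. Then $Q(\beta,\gamma)=\tfrac{1}{2}\{(c-\beta)^{2}+v\}e^{-\gamma}+\tfrac{1}{2}e^{\gamma}$ and its Hessian has determinant $\tfrac{1}{2}+\tfrac{1}{2}e^{-2\gamma}\{v-(c-\beta)^{2}\}$, which is negative once $(c-\beta)^{2}>v+e^{2\gamma}$; such points lie in any sufficiently large compact $\mathcal{B}$ (and even in bounded level sets of $Q$ containing the minimizer). So the step you defer as ``routine signing'' cannot be completed as described for a general scale in the class of Assumption \ref{ass:ScaleSpec}.

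For comparison, the paper's own proof disposes of this step by asserting that all principal minors of $H_{2}(\theta)$ vanish, hence $H_{2}(\theta)$ is positive semidefinite. That assertion holds only when $s_{2}\equiv0$, i.e., for the linear scale, where $h_{2}\equiv0$ and the argument is complete; for the exponential scale the trailing principal minors of $H_{2}(\theta)$ are those of $-\tfrac{1}{2}E[XX'\exp(X'\gamma)(e^{2}-1)]$ and do not vanish in general. Your instinct that the $s_{2}$--$(e^{2}-1)$ block is where the proof can break is therefore sound, but neither your conditioning argument nor positive definiteness of $\varPsi(\theta)$ alone closes it; a complete argument must either specialize to the linear scale or restrict $\mathcal{B}$ to a region where $E[(Y-X'\beta)^{2}\mid X]\leq s(X'\gamma)^{2}$ a.s., or else bound the $s_{2}$ block by the Schur-complement slack in $\varPsi(\theta)$. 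Your closing paragraph on passing from $\mathcal{B}$ to $\Theta$ is unnecessary for the lemma as stated, which only claims strict convexity over $\mathcal{B}$.
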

\begin{proof}
$Q(\theta)$ is differentiable for all $\theta\in\mathcal{B}$ and
the order of integration and differentiation can be interchanged by
Lemma \ref{lem:QDiff}. In order to show that $\partial Q(\theta)/\partial\theta$
is differentiable for $\theta\in\mathcal{B}$, we show that $E[\sup_{\theta\in\mathcal{B}}||\partial^{2}L(X,Y,\theta)/\partial\theta\partial\theta||]<\infty$.
Direct calculations yield
\begin{eqnarray*}
\frac{\partial^{2}L(X,Y,\theta)}{\partial\theta\partial\theta} & = & \left[\begin{array}{cc}
\frac{XX'}{s(X'\gamma)} & \frac{XX'}{s(X'\gamma)}s_{1}(X'\gamma)e(Y,X,\theta)\\
\frac{XX'}{s(X'\gamma)}s_{1}(X'\gamma)e(Y,X,\theta) & \frac{XX'}{s(X'\gamma)}\{s_{1}(X'\gamma)e(Y,X,\theta)\}^{2}
\end{array}\right]\\
 &  & +\left[\begin{array}{cc}
0_{k\times k} & 0_{k\times k}\\
0_{k\times k} & -\frac{1}{2}XX's_{2}(X'\gamma)\{e(Y,X,\theta)^{2}-1\}
\end{array}\right]\\
 & =: & h_{1}(X,Y,\theta)+h_{2}(X,Y,\theta).
\end{eqnarray*}
\begin{sloppy}We first consider $E[\sup_{\theta\in\mathcal{B}}||\partial^{2}h_{1}(X,Y,\theta)/\partial\theta\partial\theta||]$.
Steps similar to those leading to (\ref{eq:Bound-1}) imply that $E[\sup_{\theta\in\mathcal{B}}||\partial^{2}h_{1}(X,Y,\theta)/\partial\beta\partial\beta||]<\infty$
is satisfied since $E||X||^{2}<\infty$ and $\mathcal{B}$ is compact.
Moreover, $E[\sup_{\theta\in\mathcal{B}}||\partial^{2}h_{1}(X,Y,\theta)/\partial\beta\partial\gamma||]<\infty$
and $E[\sup_{\theta\in\mathcal{B}}||\partial^{2}h_{1}(X,Y,\theta)/\partial\gamma\partial\beta||]<\infty$
are implied by $E[\sup_{\theta\in\mathcal{B}}||\partial^{2}h_{1}(X,Y,\theta)/\partial\gamma\partial\gamma||]<\infty$. 

Steps similar to those leading to (\ref{eq:BoundExp}) yield, for
$\theta\in\mathcal{B}$,
\begin{align*}
\left\Vert \frac{XX'}{s(X'\gamma)}\{s_{1}(X'\gamma)e(Y,X,\theta)\}^{2}\right\Vert  & \leq C||X||^{2}s_{1}(X'\gamma)^{2}[Y^{2}+\sup_{\beta\in\mathcal{B}_{\beta}}||\beta||^{2}||X||^{2}].
\end{align*}
This bound and expansion (\ref{eq:sprimeMVexpansion}) together imply,
for some $0\leq\kappa\in(a,\infty)$ and some intermediate value $\bar{\gamma}$,
\begin{align*}
\left\Vert \frac{XX'}{s(X'\gamma)}\{s_{1}(X'\gamma)e(Y,X,\theta)\}^{2}\right\Vert  & \leq C||X||^{2}\,[s_{1}(\kappa)^{2}+2\sup_{\gamma\in\mathcal{B}_{\gamma}}||\gamma||\,||X||s_{1}(X'\bar{\gamma})s_{2}(X'\bar{\gamma})]\\
 & \quad\times[Y^{2}+\sup_{\beta\in\mathcal{B}_{\beta}}||\beta||^{2}||X||^{2}].
\end{align*}
Since $\mathcal{B}$ is compact and $0<s_{1}(\kappa)<\infty$, $E[\sup_{\theta\in\mathcal{B}}||\partial^{2}h_{1}(X,Y,\theta)/\partial\gamma\partial\gamma||]<\infty$
requires $E||X||^{4}<\infty$, $E[Y^{2}\,||X||^{2}]<\infty$ and,
for all $\gamma\in\mathcal{B}_{\gamma}$, $E[\left\Vert X\right\Vert ^{5}s_{1}(X'\gamma)s_{2}(X'\gamma)]<\infty$
and $E\left[Y^{2}||X||^{3}s_{1}(X'\gamma)s_{2}(X'\gamma)\right]<\infty$,
which hold under Assumption \ref{ass:Moments}.

We now show that $E[\sup_{\theta\in\mathcal{B}}||\partial^{2}h_{2}(X,Y,\theta)/\partial\theta\partial\theta||]<\infty$.
It suffices to show that $E[\sup_{\theta\in\mathcal{B}}||\partial^{2}h_{2}(X,Y,\theta)/\partial\gamma\partial\gamma||]<\infty$.
Steps similar to those leading to (\ref{eq:BoundExp}), yield, for
$\theta\in\mathcal{B}$,
\begin{align}
\left\Vert XX's_{2}(X'\gamma)\{e(Y,X,\theta)^{2}-1\}\right\Vert  & \leq||X||^{2}s_{2}(X'\gamma)[1\vee2C(Y^{2}+\sup_{\beta\in\mathcal{B}_{\beta}}||\beta||^{2}||X||^{2})]].\label{eq:Bound2}
\end{align}
For $\gamma\in\mathcal{B}_{\gamma}$ a mean-value expansion about
$(\kappa,0_{k-1})'$ yields
\[
|s_{2}(X'\gamma)|=|s_{2}(\kappa)+s_{3}(X'\bar{\gamma})(X'\gamma-\kappa)|\leq s_{2}(\kappa)+\sup_{\gamma\in\mathcal{B}_{\gamma}}||\gamma||\,||X||s_{3}(X'\bar{\gamma})
\]
This bound and (\ref{eq:Bound2}) together imply
\begin{eqnarray*}
\left\Vert XX's_{2}(X'\gamma)\{e(Y,X,\theta)^{2}-1\}\right\Vert  & \leq & ||X||^{2}\,[s_{2}(\kappa)+\sup_{\gamma\in\mathcal{B}_{\gamma}}||\gamma||\,||X||s_{3}(X'\bar{\gamma})]\\
 &  & \times[1\vee2C(Y^{2}+\sup_{\beta\in\mathcal{B}_{\beta}}||\beta||^{2}||X||^{2})].
\end{eqnarray*}
Since $\mathcal{B}$ is compact and $0\leq s_{2}(\kappa)<\infty$,
$E[\sup_{\theta\in\mathcal{B}}||\partial^{2}h_{2}(X,Y,\theta)/\partial\gamma\partial\gamma||]<\infty$
requires $E||X||^{4}<\infty$, $E[Y^{2}\,||X||^{2}]<\infty$ and,
for all $\gamma\in\mathcal{B}_{\gamma}$, $E[\left\Vert X\right\Vert ^{5}s_{3}(X'\gamma)]<\infty$
and $E[Y^{2}||X||^{3}s_{3}(X'\gamma)]<\infty$, which hold under Assumption
\ref{ass:Moments}. 

We have shown that $E[\sup_{\theta\in\mathcal{B}}||\partial^{2}L(X,Y,\theta)/\partial\theta\partial\theta||]<\infty$
and it follows by Lemma 3.6 in \citet{Newey:McFadden:1994} that $\partial Q(\theta)/\partial\theta$
is continuously differentiable over $\mathcal{B}$, and that the order
of differentiation and integration can be interchanged.\par\end{sloppy}

Letting $H_{1}(\theta):=E[h_{1}(X,Y,\theta)]$ and $H_{2}(\theta):=E[h_{2}(X,Y,\theta)]$,
for all $\theta\in\mathcal{B}$, the Hessian matrix of $Q(\theta)$
is $H(\theta):=H_{1}(\theta)+H_{2}(\theta)$, which is positive semidefinite
if $H_{1}(\theta)$ and $H_{2}(\theta)$ are positive semidefinite
(\citealp{HJ:2012}, p.398, 7.1.3. observation). And if either one
of $H_{1}(\theta)$ and $H_{2}(\theta)$ is positive definite (while
the other is positive semidefinite), then $H(\theta)$ is positive
definite. All principal minors of $H_{2}(\theta)$ have determinant
$0$ for all $\theta\in\mathcal{B}$, and $H_{2}(\theta)$ is thus
positive semidefinite. Applying Lemma \ref{thm:posdef} with $\varPsi(\theta)=H_{1}(\theta)$,
we have that $H_{1}(\theta)$ is positive definite for all $\theta\in\mathcal{B}$.
We conclude that $H(\theta)$ is positive definite for all $\theta\in\mathcal{B}$,
and the result follows.
\end{proof}
\begin{lem}
\label{lem:CompactLevSets}{[}Level Sets Compactness{]} If Assumptions
\ref{ass:ScaleSpec}, \ref{ass:Moments} and \ref{ass:FullRank} hold
then the level sets of $\theta\mapsto Q(\theta)$ are compact.
\end{lem}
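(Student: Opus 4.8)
The plan is to deduce compactness of each level set $\mathcal{B}_{c}=\{\theta\in\Theta:Q(\theta)\leq c\}$ from continuity of $Q$ together with its coercivity, i.e. from the two limits $\lim_{\|\theta\|\to\infty}Q(\theta)=\infty$ and $\lim_{\theta\to\partial\Theta}Q(\theta)=\infty$. Continuity of $\theta\mapsto Q(\theta)$ over $\mathcal{B}$ is already available from Lemma~\ref{lem:QCont}. Granting the two coercivity limits, I would argue as follows: if $(\theta_{n})\subset\mathcal{B}_{c}$ were unbounded, the first limit forces $Q(\theta_{n})\to\infty>c$, a contradiction, so $\mathcal{B}_{c}$ is bounded; and if $\theta_{n}\to\theta_{\infty}$ with $\theta_{\infty}\in\partial\Theta$, the second limit again forces $Q(\theta_{n})\to\infty$, so no limit point of $\mathcal{B}_{c}$ lies on $\partial\Theta$ and, by continuity, every limit point lies in $\Theta$ with value at most $c$. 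Hence $\mathcal{B}_{c}$ is closed and bounded in $\mathbb{R}^{2k}$ and contained in $\Theta$, therefore compact. The entire content of the lemma is thus the verification of coercivity.

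For the coercivity bounds I would work from the iterated-expectation form $Q(\theta)=E[\widetilde{L}(X,\theta)]$ and the identity $E[(Y-X'\beta)^{2}\mid X]=\sigma(X)^{2}+(\mu(X)-X'\beta)^{2}$, which yields the three nonnegative lower bounds
\[
Q(\theta)\geq\tfrac{1}{2}E[s(X'\gamma)],\qquad Q(\theta)\geq\tfrac{1}{2}E\!\left[\frac{\sigma(X)^{2}}{s(X'\gamma)}\right],
\]
and, applying the pointwise inequality $a/s+s\geq2\sqrt{a}$ inside $\widetilde{L}$, the $\gamma$-free bound $Q(\theta)\geq E[\,|\mu(X)-X'\beta|\,]$. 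I would first record that Assumption~\ref{ass:FullRank} forces $E[XX']$ to be nonsingular, hence positive definite: if $\lambda'X=0$ a.s. then $\lambda'E[XX'/s(X'\gamma)]\lambda=0$, contradicting nonsingularity of the weighted matrix. This licenses Lemma~\ref{lem:linindpdce} with $V=X$, and all divergences below are obtained by Fatou's lemma applied to these nonnegative integrands on a fixed positive-probability set on which the relevant integrand tends to $+\infty$.

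The case analysis splits according to which block of $\theta$ escapes. If $\|\beta_{n}\|\to\infty$ along a subsequence, writing $\delta_{n}=\beta_{n}/\|\beta_{n}\|\to\delta_{o}\neq0$ and using $E[(X'\delta_{o})^{2}]>0$, the bound $Q(\theta_{n})\geq E|\mu(X)-X'\beta_{n}|\geq\|\beta_{n}\|\,E|X'\delta_{n}|-E|\mu(X)|$ diverges, since $\liminf_{n}E|X'\delta_{n}|\geq E|X'\delta_{o}|>0$ by Fatou's lemma and $E|\mu(X)|<\infty$ by Assumption~\ref{ass:Moments}(i). If instead $\|\beta_{n}\|$ stays bounded and $\|\gamma_{n}\|\to\infty$, I set $\gamma_{n}/\|\gamma_{n}\|\to\delta_{o}\neq0$; whenever $\Pr[X'\delta_{o}>0]>0$ one has $X'\gamma_{n}\to+\infty$ and $s(X'\gamma_{n})\to\infty$ on that set, so the first bound diverges. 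The remaining possibility, $X'\delta_{o}\leq0$ a.s. with $X'\delta_{o}<0$ on a positive-probability set (which can only occur when $a=-\infty$), and the boundary regime $\theta\to\partial\Theta$ with $\gamma_{n}\to\gamma_{\infty}$ satisfying $\Pr[X'\gamma_{\infty}=0]>0$ (which can only occur when $a=0$), both produce a positive-probability set on which $s(X'\gamma_{n})\to0$; there I would use the second bound together with the conditional variance floor of Assumption~\ref{ass:Var(Y|X)}, so that $\tfrac{1}{2}E[\sigma(X)^{2}/s(X'\gamma_{n})]\to\infty$ by Fatou's lemma applied on that set.

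The main obstacle is precisely this last, $s\to0$, regime: unlike the $s\to\infty$ and $\|\beta\|\to\infty$ directions, where the penalty term $E[s(X'\gamma)]$ or the mean-fit term alone explodes, here the divergence must come from the reciprocal-weight term $E[\sigma(X)^{2}/s(X'\gamma)]$, which forces a uniform positive lower bound on the conditional second moment $E[(Y-X'\beta)^{2}\mid X]$ on the offending set. I would supply this lower bound through the conditional variance being bounded away from zero (Assumption~\ref{ass:Var(Y|X)}), which rules out the degenerate possibility that $Y$ is an exact linear function of $X$ there; every other step is a routine Fatou argument once the escaping direction has been identified through the limiting unit vector $\delta_{o}$ furnished by Lemma~\ref{lem:linindpdce}.
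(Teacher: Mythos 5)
Your proposal is correct and follows essentially the same route as the paper: continuity plus coercivity of $Q$ (via Lemma \ref{lem:linindpdce}, Bolzano--Weierstrass direction vectors, and Fatou's lemma applied to the nonnegative integrand $\widetilde{L}$) yields boundedness and closedness of the level sets, with the variance floor of Assumption \ref{ass:Var(Y|X)} supplying the divergence in the $s\to0$ regime exactly as in the paper's Steps 1.1 and 2. The only substantive (and welcome) refinement is your $\gamma$-free bound $Q(\theta)\geq E|\mu(X)-X'\beta|$ for the $\|\beta_{n}\|\to\infty$ direction, which handles simultaneous divergence of both blocks more cleanly than the paper's argument at a fixed $\gamma$, and your working with positive-probability sets rather than a single support point $x^{*}$; neither changes the substance of the argument.
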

\begin{proof}
We show that the level sets $\mathcal{B}_{c}$, $c\in\mathbb{R}$,
of $\theta\mapsto Q(\theta)$ are closed and bounded. The result then
follows by the Heine-Borel theorem.

Step 1. \textit{{[}$\mathcal{B}_{c}$ is bounded{]}}. We show that
every sequence in $\mathcal{B}_{c}$ is bounded. Suppose the contrary.
Then there exists an unbounded sequence $(\theta_{n})$ in $\mathcal{B}_{c}$,
and a subsequence $(\theta_{n_{l}})$, $n_{l}\rightarrow\infty$ as
$l\rightarrow\infty$, such that either $||\beta_{n_{l}}||\rightarrow\infty$
or $||\gamma_{n_{l}}||\rightarrow\infty$.

Step 1.1. If $||\gamma_{n_{l}}||\rightarrow\infty$, then $E[XX']$
nonsingular implies that there exists a value $x^{*}\in\mathcal{X}$
such that $|\gamma_{n_{l}}'x^{*}|\rightarrow\infty$ as $l\rightarrow\infty$,
a.s., by Lemma \ref{lem:linindpdce}, which implies $s(\gamma_{n_{l}}'x^{*})\rightarrow0$
or $\infty$ by definition of $t\mapsto s(t)$ in Assumption \ref{ass:ScaleSpec}.

Moreover, for $x^{*}\in\mathcal{X}$ such that $|\gamma_{n_{l}}'x^{*}|\rightarrow\infty$
as $l\rightarrow\infty$, we have that $E[(Y-X'\beta)^{2}\mid X=x^{*}]<\infty$
for all $\beta\in\mathbb{R}^{k}$ under Assumption \ref{ass:Moments}(i).
It follows that for all $\beta\in\mathbb{R}^{k}$,
\begin{equation}
\lim_{l\rightarrow\infty}\tilde{L}(x^{*},\beta,\gamma_{n_{l}})=\frac{1}{2}\lim_{l\rightarrow\infty}\frac{E[(Y-X'\beta)^{2}\mid X=x^{*}]}{s(\gamma_{n_{l}}'x^{*})}+\frac{1}{2}\lim_{l\rightarrow\infty}s(\gamma_{n_{l}}'x^{*})=\infty.\label{eq:lim}
\end{equation}
Since $\tilde{L}(X,\theta)$ is positive a.s. for all $\theta\in\Theta$
and the density $f_{X}(x)$ is bounded away from 0 for all $x\in\mathcal{X}$
by definition of $\mathcal{X}$, (\ref{eq:lim}) implies that $E[\lim_{l\rightarrow\infty}\tilde{L}(X,\beta,\gamma_{n_{l}})]=\infty$.
Since $E[\tilde{L}(X,\theta)]=Q(\theta)$, Fatou's lemma then implies
that $\lim_{l\rightarrow\infty}Q(\beta,\gamma_{n_{l}})=\infty$, for
all $\beta\in\mathbb{R}^{k}$. Therefore $\gamma$ is bounded.

Step 1.2. If $||\beta_{n_{l}}||\rightarrow\infty$, then $E[XX']$
nonsingular implies that there exists a value $x^{**}\in\mathcal{X}$
such that $|\beta_{n_{l}}'x^{**}|\rightarrow\infty$ a.s., by a second
application of Lemma \ref{lem:linindpdce}. Thus $(Y-\beta_{n_{l}}'x^{**})^{2}\rightarrow\infty$
as $l\rightarrow\infty$ a.s.

Moreover, for $x^{**}\in\mathcal{X}$ such that $|\beta_{n_{l}}'x^{**}|\rightarrow\infty$
as $l\rightarrow\infty$, we have that $E[\lim_{l\rightarrow\infty}(Y-X'\beta_{n_{l}})^{2}\mid X=x^{**}]=\infty$,
and Fatou's lemma then implies that $\lim_{l\rightarrow\infty}E[(Y-X'\beta_{n_{l}})^{2}\mid X=x^{**}]=\infty$.
Also, $s(\gamma'x^{**})$ is finite and positive for any $\gamma\in\Theta_{\gamma}$.
It follows that for all $\gamma\in\Theta_{\gamma}$, 
\begin{eqnarray}
\lim_{l\rightarrow\infty}\tilde{L}(x^{**},\beta_{n_{l}},\gamma) & \geq & \frac{1}{2}\lim_{l\rightarrow\infty}\frac{E[(Y-X'\beta_{n_{l}})^{2}\mid X=x^{**}]}{s(\gamma'x^{**})}=\infty.\label{eq:lim-1}
\end{eqnarray}
Since $\tilde{L}(X,\theta)$ is positive a.s. for all $\theta\in\Theta$
and the density $f_{X}(x)$ is bounded away from 0 for all $x\in\mathcal{X}$
by definition of $\mathcal{X}$, (\ref{eq:lim-1}) implies that $E[\lim_{l\rightarrow\infty}\tilde{L}(X,\beta_{n_{l}},\gamma)]=\infty$.
Fatou's lemma then implies that $\lim_{l\rightarrow\infty}E[\tilde{L}(X,\beta_{n_{l}},\gamma)]=\lim_{l\rightarrow\infty}Q(\beta_{n_{l}},\gamma)=\infty$
for all $\gamma\in\Theta_{\gamma}$. Therefore $\beta$ is bounded.

Step 2.\textbf{ }\textit{{[}$\mathcal{B}_{c}$ is closed{]}}. We examine
the behaviour of $\theta\mapsto Q(\theta)$ on the boundary set $\partial\Theta$
in order to determine whether $\mathcal{B}_{c}$ is closed. We show
that for every sequence in $\mathcal{B}_{c}$ converging to a boundary
point in $\partial\Theta$, $\theta\mapsto Q(\theta)$ is unbounded.
Continuity of $\theta\mapsto Q(\theta)$ established in Lemma \ref{lem:QCont}
then implies that $\mathcal{B}_{c}$ is closed.

Consider a sequence $\theta_{n}$ in $\mathcal{B}_{c}$ such that
$\theta_{n}\rightarrow t_{o}\in\partial\Theta$ as $n\rightarrow\infty$.
Then, there exists $x^{*}\in\mathcal{X}$ such that $\gamma_{n}'x^{*}\rightarrow0$
as $n\rightarrow\infty$ a.s., by definition of $\partial\Theta$.
Moreover, $E[(Y-X'\beta)^{2}|X]>0$ a.s. for all $\beta\in\mathbb{R}^{k}$
under Assumption \ref{ass:Var(Y|X)}. Thus
\begin{equation}
\lim_{n\rightarrow\infty}\tilde{L}(x^{*},\theta_{n})=\frac{1}{2}\lim_{n\rightarrow\infty}\frac{E[(Y-X'\beta_{n})^{2}\mid X=x^{*}]}{\gamma_{n}'x^{*}}=\infty.\label{eq:lim-2}
\end{equation}
Since $\tilde{L}(X,\theta)$ is positive a.s. for all $\theta\in\Theta$
and the density $f_{X}(x)$ is bounded away from 0 for all $x\in\mathcal{X}$
by definition of $\mathcal{X}$, (\ref{eq:lim-2}) implies that $E[\lim_{n\rightarrow\infty}\tilde{L}(X,\theta_{n})]=\infty$.
Fatou's lemma then implies that $\lim_{n\rightarrow\infty}E[\tilde{L}(X,\theta_{n})]=\lim_{n\rightarrow\infty}Q(\theta_{n})=\infty$.
This yields a contradiction since $Q(\theta_{n})\leq c$ for $\theta_{n}\in\mathcal{B}_{c}$.
Moreover, continuity of $\theta\mapsto Q(\theta)$ implies $Q(t_{o})=\lim_{n\rightarrow\infty}Q(\theta_{n})\leq c$.
Therefore, $t_{o}\in\mathcal{B}_{c}$ and $\mathcal{B}_{c}$ is closed.
\end{proof}

\section{Proofs for Sections \ref{sec:SMVR} and \ref{sec:Approximation-Properties}}

\subsection{Proof of Theorem \ref{thm:Uniqueness}}

Under Assumptions \ref{ass:ScaleSpec}-\ref{ass:FullRank}, the order
of integration and differentiation for the MVR population problem
(\ref{eq:MVR}) can be interchanged by Lemma \ref{lem:QDiff}. Therefore
the first-order conditions of problem (\ref{eq:MVR}) are (\ref{eq:FOC1})-(\ref{eq:FOC2}),
which are satisfied by $\theta_{0}$. Uniqueness follows from strict
convexity of $Q(\theta)$ over compact subsets of $\Theta$ established
in Lemma \ref{lem:Convexity}, and compactness of the level sets of
the objective function $Q(\theta)$ established in Lemma \ref{lem:CompactLevSets}.\qed

\subsection{Proof of Proposition \ref{prop:Equivalence}}

The first-order conditions (\ref{eq:FOC1})-(\ref{eq:FOC2}) of the
MVR population problem (\ref{eq:MVR}) can be written as 
\begin{eqnarray}
E\left[\frac{X}{s(X'\gamma)}(Y-X'\beta)\right] & = & 0\label{eq:FOC1_2}\\
E\left[X\frac{s_{1}(X'\gamma)}{s(X'\gamma)^{2}}\left\{ (Y-X'\beta)^{2}-s(X'\gamma)^{2}\right\} \right] & = & 0,\label{eq:FOC2_2}
\end{eqnarray}
with unique solutions $\beta_{0}$ and $\gamma_{0}$, by Theorem \ref{thm:Uniqueness}.

Under the stated assumptions, the first-order conditions of problem
(\ref{eq:InfeasibleWLS})-(\ref{eq:InfeasibleWNLS}) are
\begin{eqnarray}
E\left[\frac{X}{\sigma(X)}(Y-X'\beta)\right] & = & 0\label{eq:FOC_IWLS}\\
-4E\left[X\frac{s(X'\gamma)s_{1}(X'\gamma)}{\sigma(X)^{3}}\left\{ (Y-X'\beta_{0})^{2}-s(X'\gamma)^{2}\right\} \right] & = & 0.\label{eq:FOC_IWNLS}
\end{eqnarray}
By assumption the variance of $Y$ conditional on $X$ is correctly
specified and $\sigma(X)^{2}=s(X'\gamma_{0})^{2}$ a.s. Conditions
(\ref{eq:FOC_IWLS})-(\ref{eq:FOC_IWNLS}) are therefore satisfied
for $(\beta,\gamma)=(\beta_{0},\gamma_{0})$, and are then equivalent
to the MVR first-order conditions (\ref{eq:FOC1_2})-(\ref{eq:FOC2_2})
evaluated at the solution $(\beta,\gamma)=(\beta_{0},\gamma_{0})$.\qed

\subsection{Proof of Theorem \ref{thm:Existence}}

\textit{Step 1: Existence}. Pick $c\in\mathbb{R}$ such that the level
set $\mathcal{B}_{c}=\{\theta\in\Theta:\;Q(\theta)\leq c\}$ is nonempty.
By Lemma \ref{lem:CompactLevSets}, $\mathcal{B}_{c}$ is compact.
Continuity of $\theta\mapsto Q(\theta)$ over compact subsets of $\Theta$
established in Lemma \ref{lem:QCont} then implies that there exists
at least one minimizer to $Q(\theta)$ in $\mathcal{B}_{c}$ by the
Weierstrass extreme value theorem. Minimizing $Q(\theta)$ over $\Theta$
is equivalent to minimizing $Q(\theta)$ over any of its nonempty
level sets, which establishes existence of a minimizer $\theta^{*}\in\Theta$.

\textit{Step 2: Uniqueness}. By Lemma \ref{lem:Convexity}, we have
that $\theta\mapsto Q(\theta)$ is strictly convex over compact subsets
of $\Theta$, and thus over $\mathcal{B}_{c}$, so that $Q(\theta)$
admits at most one minimizer in $\mathcal{B}_{c}$. This establishes
uniqueness of a minimizer $\theta^{*}\in\Theta$. \qed

\subsection{Proof of Theorem \ref{thm:Interpretation}}

\textit{Proof of part (i)}. For $\theta\in\Theta$, define the function
\[
\widetilde{Q}(\theta):=\frac{1}{2}\int\left\{ \left(\frac{\mu(x)-x'\beta}{s(x'\gamma)}\right)^{2}+\frac{\sigma(x)^{2}}{s(x'\gamma){}^{2}}+1\right\} s(x'\gamma)dF_{X}(x).
\]
We show that $\widetilde{Q}(\theta)$ is equal to $Q(\theta)$ for
all $\theta\in\Theta$. 

The location-scale representation (\ref{eq:loc-scale rep}) for $Y\mid X$
 implies that, for $\theta\in\Theta$,
\begin{eqnarray}
\left(\frac{Y-X'\beta}{s(X'\gamma)}\right)^{2} & = & \left(\frac{[\mu(X)-X'\beta]+\sigma(X)\varepsilon}{s(X'\gamma)}\right)^{2}\nonumber \\
 & = & \left(\frac{\mu(X)-X'\beta}{s(X'\gamma)}\right)^{2}+2\left(\frac{\mu(X)-X'\beta}{s(X'\gamma)}\right)\frac{\sigma(X)}{s(X'\gamma)}\varepsilon+\frac{\sigma(X)^{2}}{s(X'\gamma)^{2}}\varepsilon^{2},\label{eq:substitution}
\end{eqnarray}
and the change of variable formula 
\begin{equation}
f_{Y\mid X}(Y\mid X)=f_{\varepsilon\mid X}\left(\frac{Y-\mu(X)}{\sigma(X)}\mid X\right)\left(\frac{1}{\sigma(X)}\right),\label{eq:chov}
\end{equation}
hold a.s. 

The definition of $Q(\theta)$ and expressions (\ref{eq:substitution})-(\ref{eq:chov})
together imply, for $\theta\in\Theta$,
\begin{eqnarray*}
Q(\theta) & = & \frac{1}{2}\int\left\{ \left(\frac{y-x'\beta}{s(x'\gamma)}\right)^{2}+1\right\} s(x'\gamma)f_{Y\mid X}(y\mid x)dydF_{X}(x)\\
 & = & \frac{1}{2}\int\left\{ \left(\frac{y-x'\beta}{s(x'\gamma)}\right)^{2}+1\right\} s(x'\gamma)f_{\varepsilon\mid X}\left(\frac{y-\mu(x)}{\sigma(x)}\mid x\right)\left(\frac{1}{\sigma(x)}\right)dydF_{X}(x)\\
 & = & \frac{1}{2}\int\left\{ \left(\frac{\mu(x)-x'\beta}{s(x'\gamma)}\right)^{2}+2\left(\frac{\mu(x)-x'\beta}{s(x'\gamma)}\right)\frac{\sigma(x)}{s(x'\gamma)}e+\frac{\sigma(x)^{2}}{s(x'\gamma)^{2}}e^{2}+1\right\} \\
 &  & \quad\times s(x'\gamma)f_{\varepsilon\mid X}\left(e\mid x\right)dedF_{X}(x)\\
 & = & \frac{1}{2}\int\left\{ \left(\frac{\mu(x)-x'\beta}{s(x'\gamma)}\right)^{2}+\frac{\sigma(x)^{2}}{s(x'\gamma)^{2}}+1\right\} s(x'\gamma)dF_{X}(x)=\widetilde{Q}(\theta),
\end{eqnarray*}
where the final step uses the law of iterated expectations and the
mean zero and unit variance property of $\varepsilon$ conditional
on $X$. Since $\theta^{*}$ is the unique minimizer of $Q(\theta)$
in $\Theta$, it is also the unique minimizer of $\widetilde{Q}(\theta)$
in $\Theta$.\qed

\textit{Proof of part (ii)}. Since $\Theta_{\textrm{LS}}\subset\Theta$,
and $\theta^{*}$ and $\theta_{\textrm{LS}}$ are the unique minimizers
of $Q(\theta)$ over $\Theta$ and $\Theta_{\textrm{LS}}$, respectively,
it follows that $Q(\theta^{*})\leq Q(\theta_{\textrm{LS}})$.\qed

\subsection{Proof of Corollary \ref{cor:Approx Bound}}

For the linear scale specification $s(t)=t$, conditions (\ref{eq:FOC2})
imply $E[(X'\gamma^{*})e(Y,X,\theta^{*})^{2}]=E[X'\gamma^{*}]$. For
the exponential scale specification $s(t)=\exp(t)$, because $X$
includes an intercept, conditions (\ref{eq:FOC2}) imply that $E[\exp(X'\gamma^{*})e(Y,X,\theta^{*})^{2}]=E[\exp(X'\gamma^{*})]$.
It follows that for the linear and exponential scale specifications,
$E[s(X'\gamma^{*})e(Y,X,\theta^{*})^{2}]=E[s(X'\gamma^{*})]$, and
\[
Q(\theta^{*})=\frac{1}{2}E[\{e(Y,X,\theta^{*})^{2}+1\}s(X'\gamma^{*})]=E[s(X'\gamma^{*})].
\]
We have shown that for the linear and exponential scale specifications,
$Q(\theta^{*})=E[e(Y,X,\theta^{*})^{2}s(X'\gamma^{*})]$. For OLS,
conditions (\ref{eq:FOC2}) simplify to $E[e(Y,X,\theta_{\textrm{LS}})^{2}-1]=0$,
which implies that $s(\gamma_{\textrm{LS}})=E[(Y-X'\beta_{\textrm{LS}})^{2}]^{1/2}$,
and
\[
Q(\theta_{\textrm{LS}})=\frac{1}{2}E[\{e(Y,X,\theta_{\textrm{LS}})^{2}+1\}s(\gamma_{\textrm{LS}})]=s(\gamma_{\textrm{LS}}).
\]
We have shown that for the constant scale specification $Q(\theta_{\textrm{LS}})=E[(Y-X'\beta_{\textrm{LS}})^{2}]^{1/2}$.
The result then follows by Theorem \ref{thm:Interpretation}(ii).\qed

\subsection{Proof of Theorem \ref{thm:MLequivalence}\label{subsec:Proof-of-TheoremKLIC}}

By Theorem \ref{thm:Interpretation}(ii) we have that $E\left[s(X'\gamma^{*})\right]=Q(\theta^{*})\leq Q(\theta_{\textrm{LS}})=\sigma_{\textrm{LS}}$.
This and Jensen's inequality for concave functions imply that
\begin{equation}
E\left[\log s(X'\gamma^{*})\right]\leq\log E\left[s(X'\gamma^{*})\right]\leq\log\left(\sigma_{\textrm{LS}}\right),\label{eq:ObjveIneq}
\end{equation}
by monotonicity of the logarithmic function. Thus
\[
\epsilon=2\left\{ \log\left(\sigma_{\textrm{LS}}\right)-E\left[\log s(X'\gamma^{*})\right]\right\} \geq0.
\]
By definition (\ref{eq:Gausdens}) of the Gaussian density, inequality
(\ref{eq:ObjveIneq}), and $E[e\left(Y,X,\theta_{\textrm{LS}}\right)^{2}]=1$,
the bound $E\left[e(Y,X,\theta^{*})^{2}\right]\leq1+\epsilon$, implies
\begin{eqnarray}
-E\left[\log f_{\theta^{*}}(Y,X)\right] & = & \frac{1}{2}\log\left(2\pi\right)+E\left[\log s(X'\gamma^{*})+\frac{1}{2}e(Y,X,\theta^{*})^{2}\right]\nonumber \\
 & \leq & \frac{1}{2}\left[\log\left(2\pi\right)+1\right]+\log\left(\sigma_{\textrm{LS}}\right)\nonumber \\
 & = & -E\left[\log f_{\theta_{\textrm{LS}}}(Y,X)\right].\label{eq:KLineq1}
\end{eqnarray}
Therefore, upon adding $E[\log f_{Y\mid X}(Y\mid X)]$ to each side
of the inequality, we obtain inequality (\ref{eq:-1}).\qed

\subsection{Proof of Corollary \ref{cor:CLM}}

By assumption $\mu(X)=X'\beta_{0}$ a.s., and by definition $\theta^{*}$
satisfies conditions (\ref{eq:FOCmiss1})-(\ref{eq:FOCmiss2}). Then
the first-order conditions (\ref{eq:FOCmiss1}) are satisfied by $\beta^{*}=\beta_{0}$,
and $\gamma^{*}$ must satisfy
\begin{equation}
E\left[Xs_{1}(X'\gamma^{*})\left\{ \frac{\sigma(X)^{2}}{s(X'\gamma^{*})^{2}}-1\right\} \right]=0.\label{eq:FOCgamma*}
\end{equation}
If there exists a pair $(\beta,\gamma)\in\Theta$ satisfying all $2\times k$
conditions (\ref{eq:FOCmiss1})-(\ref{eq:FOCmiss2}) simultaneously,
then this pair is unique, by strict convexity of $\theta\mapsto Q(\theta)$.
It follows from the existence proof of Theorem \ref{thm:Existence}
that the restriction $q(\gamma):=Q(\theta)|_{\beta=\beta_{0}}$ has
a minimizer in $\Theta_{\gamma}$, i.e., there exists $\gamma^{*}$
such that (\ref{eq:FOCgamma*}) holds. Since $\gamma\mapsto q(\gamma)$
is also strictly convex, $q(\gamma)$ admits a unique minimizer $\gamma^{*}(\beta_{0})$
in $\Theta_{\gamma}$. Therefore, the pair $(\beta_{0},\gamma^{*}(\beta_{0}))$
is the unique minimizer of $Q(\theta)$ when $\mu(X)=X'\beta_{0}$
a.s.\qed

\subsection{Proof of Theorem \ref{thm:MLequivalence2}\label{subsec:Proof-of-TheoremKLIC-2}}

Letting $L^{1}(\mathcal{X})=\{\psi:\int\left|\psi(x)\right|f_{X}(x)dx<\infty\}$,
define the set of admissible conditional standard deviation functions
as
\[
\mathcal{S}=\left\{ \psi\in L^{1}(\mathcal{X})\,:\,\Pr[\psi(X)>0]=1\right\} .
\]

We first show that $\sigma(X)$ is a minimizer of the MVR criterion
\[
Q_{0}(\psi):=E\left[\frac{1}{2}\left\{ \left(\frac{Y-X'\beta_{0}}{\psi(X)}\right)^{2}+1\right\} \psi(X)\right],\qquad\psi\in\mathcal{S}.
\]
Assuming that the order of differentiation and integration can be
interchanged, as in Lemma \ref{lem:QDiff}, and setting $\psi_{\alpha}(X):=\sigma(X)+\alpha\psi(X)$
and $e_{\alpha}(Y,X):=(Y-X'\beta_{0})/\psi_{\alpha}(X)$, we have
\begin{align*}
\left.\frac{\partial Q_{0}(\psi_{\alpha})}{\partial\alpha}\right|_{\alpha=0} & =\frac{1}{2}\left.E\left[2e_{\alpha}(Y,X)\frac{\partial e_{\alpha}(Y,X)}{\partial\alpha}\psi_{\alpha}(X)+\frac{\partial\psi_{\alpha}(X)}{\partial\alpha}e_{\alpha}(Y,X)^{2}+\frac{\partial\psi_{\alpha}(X)}{\partial\alpha}\right]\right|_{\alpha=0}\\
 & =\frac{1}{2}\left.E\left[-\frac{\partial\psi_{\alpha}(X)}{\partial\alpha}\left\{ e_{\alpha}(Y,X)^{2}-1\right\} \right]\right|_{\alpha=0},
\end{align*}
upon using that $\partial e_{\alpha}(Y,X)/\partial\alpha=-e_{\alpha}(Y,X)\{\partial\psi_{\alpha}(X)/\partial\alpha\}\{\psi_{\alpha}(X)\}^{-1}$.
Therefore
\[
\left.\frac{\partial Q_{0}(\psi_{\alpha})}{\partial\alpha}\right|_{\alpha=0}=\frac{1}{2}E\left[-\psi(X)\{e(Y,X)^{2}-1\}\right]=0,\qquad e(Y,X):=\frac{Y-X'\beta_{0}}{\sigma(X)},
\]
for each $\psi\in\mathcal{S}$ since $E\left[e(Y,X)^{2}\mid X\right]=1$
implies that $E\left[h(X)\left\{ e(Y,X)^{2}-1\right\} \right]=0$
for any measurable function of $X$, and in particular each $\psi\in\mathcal{S}$.

We next show that $Q_{0}(\psi)$ is strictly convex and thus admits
at most one minimizer. Assuming that the order of differentiation
and integration can be interchanged, as in Lemma \ref{lem:Convexity},
we have
\begin{align*}
\left.\frac{\partial^{2}Q_{0}(\psi_{\alpha})}{\partial\alpha^{2}}\right|_{\alpha=0} & =\frac{1}{2}\left.E\left[-2\psi(X)e_{\alpha}(Y,X)\frac{\partial e_{\alpha}(Y,X)}{\partial\alpha}\right]\right|_{\alpha=0}\\
 & =\left.E\left[\frac{\psi(X)^{2}}{\psi_{\alpha}(X)}e_{\alpha}(Y,X)^{2}\right]\right|_{\alpha=0}\\
 & =E\left[\frac{\psi(X)^{2}}{\sigma(X)}e(Y,X)^{2}\right]\\
 & =E\left[\frac{\psi(X)^{2}}{\sigma(X)}\right]>0,
\end{align*}
for each $\psi\in\mathcal{S}$. Therefore $\sigma(X)$ is the unique
minimizer of $Q_{0}(\psi)$ in $\mathcal{S}$.

Under correct specification of the CMF, we have that $\beta^{*}=\beta_{0}$,
by Corollary \ref{cor:CLM}. By $s(X'\gamma^{*})\in\mathcal{S}$ and
$\sigma(X)$ being the unique minimizer of $Q_{0}(\psi)$ in $\mathcal{S}$
we have that $Q_{0}(\sigma)\leq Q(\theta^{*})$. By $Q_{0}(\sigma)=E\left[\sigma(X)\right]$
and $Q(\theta^{*})=E\left[s(X'\gamma^{*})\right]$, we have shown
that $E\left[\sigma(X)\right]\leq E\left[s(X'\gamma^{*})\right]$.
This and Jensen's inequality for concave functions imply
\begin{equation}
E\left[\log\sigma(X)\right]\leq\log E\left[\sigma(X)\right]\leq\log E\left[s(X'\gamma^{*})\right],\label{eq:ObjveIneq-1}
\end{equation}
by monotonicity of the logarithmic function.

By iterated expectations and using that $z\mapsto z-1-\log z$ is
a positive-valued strictly convex function for $z\geq0$ gives the
bound
\begin{equation}
E\left[e(Y,X,\theta^{*})^{2}\right]=E\left[\left(\frac{\sigma(X)}{s(X'\gamma^{*})}\right)^{2}\right]\geq1+2\left\{ E\left[\log\sigma(X)\right]-E\left[\log s(X'\gamma^{*})\right]\right\} .\label{eq:Bound-2}
\end{equation}
By definition of the Gaussian density, inequality (\ref{eq:ObjveIneq-1}),
and $E[e\left(Y,X\right)^{2}]=1$, bound (\ref{eq:Bound-2}) implies
\begin{eqnarray}
-E\left[\log f_{\theta^{*}}(Y,X)\right] & = & \frac{1}{2}\log\left(2\pi\right)+E\left[\log s(X'\gamma^{*})+\frac{1}{2}e(Y,X,\theta^{*})^{2}\right]\nonumber \\
 & \geq & \frac{1}{2}\left[\log\left(2\pi\right)+1\right]+E\left[\log\sigma(X)\right]\nonumber \\
 & = & -E\left[\log f_{\beta_{0}}^{\dagger}(Y,X)\right].\label{eq:KLineq2}
\end{eqnarray}
Upon combining (\ref{eq:KLineq1}) and (\ref{eq:KLineq2}) we obtain
inequality (\ref{eq:-2}).\qed

\section{Asymptotic Theory}
\begin{lem}
\label{thm:posdefn}Suppose that Assumptions \ref{ass:ScaleSpec},
\ref{ass:Moments} and \ref{ass:Sample} holds. Then $Q_{n}(\theta)$
is strictly convex over $\mathcal{B}$.
\end{lem}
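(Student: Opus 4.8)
The plan is to mirror the proof of Lemma \ref{lem:Convexity}, replacing population expectations by sample averages and Assumption \ref{ass:FullRank} by its finite-sample counterpart, Assumption \ref{ass:Sample}(ii). Since $Q_{n}(\theta)=n^{-1}\sum_{i=1}^{n}L(x_{i},y_{i},\theta)$ is a \emph{finite} sum of the per-observation losses whose pointwise second derivatives were already computed in Lemma \ref{lem:Convexity}, no dominated-convergence or interchange-of-integration step is needed here. On the convex set $\mathcal{B}$ each map $\theta\mapsto L(x_{i},y_{i},\theta)$ is three times differentiable, because $s(x_{i}'\gamma)$ is bounded away from $0$ for $\gamma\in\mathcal{B}_{\gamma}$; hence $Q_{n}$ is $C^{2}$ on $\mathcal{B}$ and it suffices to show that the Hessian $\nabla^{2}Q_{n}(\theta)$ is positive definite for every $\theta\in\mathcal{B}$.

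First I would differentiate termwise and adopt the same decomposition as in Lemma \ref{lem:Convexity}, writing $\nabla^{2}Q_{n}(\theta)=H_{1,n}(\theta)+H_{2,n}(\theta)$ with $H_{j,n}(\theta):=n^{-1}\sum_{i=1}^{n}h_{j}(x_{i},y_{i},\theta)$ for $j=1,2$. The term $H_{2,n}(\theta)$ has a zero $\beta$-block and a $\gamma$-block equal to $-\tfrac{1}{2n}\sum_{i}x_{i}x_{i}'s_{2}(x_{i}'\gamma)\{e(y_{i},x_{i},\theta)^{2}-1\}$; by the same block argument used for $H_{2}(\theta)$ in Lemma \ref{lem:Convexity} this matrix is positive semidefinite, so it cannot destroy definiteness and only $H_{1,n}(\theta)$ must be shown to be positive definite.

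The key step is a finite-sample version of Lemma \ref{thm:posdef} applied with $\varPsi(\theta)=H_{1,n}(\theta)$. Observe that $h_{1}(x_{i},y_{i},\theta)=s(x_{i}'\gamma)^{-1}w_{i}w_{i}'$ with $w_{i}:=(x_{i}',\,x_{i}'s_{1}(x_{i}'\gamma)e(y_{i},x_{i},\theta))'$, so $H_{1,n}(\theta)$ is automatically positive semidefinite. Assumption \ref{ass:Sample}(ii) states that $X_{n}'\Omega_{n}^{-1}(\gamma)X_{n}=\sum_{i}x_{i}x_{i}'/s(x_{i}'\gamma)$ is positive definite, which is exactly the finite-sample analog of the nonsingularity of $E[XX'/s(X'\gamma)]$ invoked in Lemma \ref{thm:posdef}. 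I would then take the Schur complement of this $\beta$-block and reproduce, with sample averages in place of expectations, the Cauchy--Schwarz-for-matrices argument of Lemma \ref{thm:posdef}.

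The main obstacle is precisely this positive definiteness of $H_{1,n}(\theta)$: in the population the strictness of the Cauchy--Schwarz inequality followed from $\Pr[\mathrm{Var}(s_{1}(X'\gamma)e\mid X)>0]=1$, whereas in the sample one must rule out, for some $\lambda\neq0$ and some $d$, an exact affine relation between $\{\lambda'x_{i}/s(x_{i}'\gamma)^{1/2}\}s_{1}(x_{i}'\gamma)e(y_{i},x_{i},\theta)$ and $d'\{x_{i}/s(x_{i}'\gamma)^{1/2}\}$ holding simultaneously at \emph{every} observation. I would argue that no such relation can hold across all $i$ once $X_{n}'\Omega_{n}^{-1}(\gamma)X_{n}$ is nonsingular and the weighted residuals $e(y_{i},x_{i},\theta)$ are not collinear with the weighted design, so that the sample Schur complement is positive definite; combined with the positive semidefiniteness of $H_{2,n}(\theta)$ this yields $\nabla^{2}Q_{n}(\theta)$ positive definite for all $\theta\in\mathcal{B}$, hence strict convexity of $Q_{n}$ over $\mathcal{B}$.
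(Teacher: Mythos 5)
Your proposal follows essentially the same route as the paper's proof: decompose the sample Hessian as $H_{1n}(\theta)+H_{2n}(\theta)$, note that $H_{2n}(\theta)$ is positive semidefinite because its principal minors vanish, and establish positive definiteness of $H_{1n}(\theta)$ by repeating the Schur-complement and matrix Cauchy--Schwarz argument of Lemma \ref{thm:posdef} with sample averages and Assumption \ref{ass:Sample}(ii) in place of Assumption \ref{ass:FullRank}. The non-collinearity caveat you flag as the ``main obstacle'' (ruling out an exact affine relation between the weighted residual terms and the weighted design across all observations) is real, but the paper's proof relies on the same implicit non-degeneracy when it asserts that ``steps similar to the proof of Lemma \ref{thm:posdef}'' deliver positive definiteness of $H_{1n}(\theta)$, so your treatment is no less complete than the original.
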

\begin{proof}
For $e_{i}=e(y_{i},x_{i},\theta)$, the Hessian matrix $H_{n}(\theta)$
of $Q_{n}(\theta)$,
\begin{eqnarray*}
H_{n}(\theta) & = & \frac{1}{n}\sum_{i=1}^{n}\left[\begin{array}{cc}
\frac{x_{i}x_{i}'}{s(x_{i}'\gamma)} & \frac{x_{i}x_{i}'}{s(x_{i}'\gamma)}s_{1}(x_{i}'\gamma)e_{i}\\
\frac{x_{i}x_{i}'}{s(x_{i}'\gamma)}s_{1}(x_{i}'\gamma)e_{i} & \frac{x_{i}x_{i}'}{s(x_{i}'\gamma)}\{s_{1}(x_{i}'\gamma)e_{i}\}^{2}
\end{array}\right]\\
 &  & +\frac{1}{n}\sum_{i=1}^{n}\left[\begin{array}{cc}
0_{k\times k} & 0_{k\times k}\\
0_{k\times k} & -\frac{1}{2}x_{i}x_{i}'s_{2}(x_{i}'\gamma)(e_{i}^{2}-1)
\end{array}\right]:=H_{1n}(\theta)+H_{2n}(\theta),
\end{eqnarray*}
defined for $\theta\in\mathcal{B}$, is positive definite. Steps similar
to the proof of Lemma \ref{thm:posdef} show that $H_{1n}(\theta)$
is positive definite for all $\theta\in\mathcal{B}$. Moreover, all
principal minors of $H_{2n}(\theta)$ have determinant $0$ for all
$\theta\in\mathcal{B}$, and $H_{2n}(\theta)$ is thus positive semidefinite.
Since $H_{n}(\theta)=H_{1n}(\theta)+H_{2n}(\theta)$, we conclude
that $H_{n}$ is positive definite for all $\theta\in\mathcal{B}$
(\citealp{HJ:2012}, p.398, 7.1.3. observation), and the result follows.
\end{proof}

\subsection{Proof of Theorem \ref{thm:ANorm}}

\textit{Proof of part (i)-(ii). }By Theorem \ref{thm:Existence},
$\theta^{*}\in\Theta$ is the unique minimizer of $Q(\theta)$, and
the identification condition (i) in Theorem 2.7 in \citet{Newey:McFadden:1994}
is thus verified. Since $\Theta$ is convex and open, existence of
$\theta^{*}\in\Theta$ established in Theorem \ref{thm:Existence},
as well as strict convexity of $Q_{n}(\theta)$ established in Lemma
\ref{thm:posdefn} imply that their condition (ii) is satisfied. Finally,
since the sample is i.i.d. by Assumption \ref{ass:Sample}, pointwise
convergence of $Q_{n}(\theta)$ to $Q_{0}(\theta)$ follows from $Q_{0}(\theta)$
bounded (established in the proof of Lemma \ref{lem:QCont}) and application
of Khinchine's law of large numbers. Hence, all conditions of Newey
and McFadden's Theorem 2.7 are satisfied. Therefore, there exists
$\hat{\theta}\in\Theta$ with probability approaching one, and $\hat{\theta}\rightarrow^{p}\theta^{*}$.\qed

\textit{Proof of part (iii)}. The sample MVR solution $\hat{\theta}$
can be equivalently formulated as the Method-of-Moments estimator
\[
\hat{\theta}=\arg\min_{\theta\in\varTheta}\left[\frac{1}{n}\sum_{i=1}^{n}m(y_{i},x_{i},\theta)\right]'\left[\frac{1}{n}\sum_{i=1}^{n}m(y_{i},x_{i},\theta)\right],
\]
The asymptotic normality result $n^{1/2}(\hat{\theta}-\theta^{*})\overset{d}{\rightarrow}N(0,G^{-1}S(G^{-1})')$
then follows from this characterization upon verifying the assumptions
of Theorem 3.4 in \citet{Newey:McFadden:1994}, for instance. Block
symmetry of $G$ then implies that $V=G^{-1}SG^{-1}$.

By Theorem \ref{thm:Existence}, $\theta^{*}$ is in the interior
of $\Theta$ so that their Condition (i) is satisfied. The mapping
$\theta\mapsto m(Y,X,\theta)$ is continuously differentiable, by
inspection, so that their Condition (ii) is satisfied. By definition,
$\theta^{*}$ satisfies $E[m(Y,X,\theta^{*})]=0$, hence the first
part of their condition (iii) is satisfied. Moreover, bound (\ref{eq:BoundExp})
in the proof of Lemma \ref{lem:QDiff} and the steps below show that
$E[||m(Y,X,\theta^{*})||^{2}]$ is finite under Assumption \ref{ass:ANorm},
verifying their Condition (iii). Finally, under our assumptions, from
the proof of Lemma \ref{lem:Convexity}, $E[\sup_{\theta\in\Theta}||\partial m(Y,X,\theta)/\partial\theta||]=E[\sup_{\theta\in\Theta}||\partial^{2}L(X,Y,\theta)/\partial\theta\partial\theta||]$
is finite and $G=E[\partial m(Y,X,\theta^{*})/\partial\theta]$ is
nonsingular. Their Conditions (iv) and (v) are satisfied.

If $\mu(X)=X'\beta^{*}$ a.s., then $E[(Y-X'\beta^{*})/s(X'\gamma^{*})\mid X]=0$.
Therefore, by iterated expectations, the off-diagonal blocks of $G$
and $S$ simplify to (\ref{eq:Var_miss}). If $\mu(X)=X'\beta^{*}$
a.s. and $\sigma^{2}(X)=s(X'\gamma^{*})^{2}$ a.s., then $E[e^{2}-1\mid X]=0$.
Therefore, repeated use of iterated expectations imply
\[
E\left[\frac{XX'}{s(X'\gamma^{*})}\{s_{1}(X'\gamma^{*})e\}^{2}\right]-E\left[\frac{1}{2}XX's_{2}(X'\gamma^{*})(e{}^{2}-1)\right]=E\left[\frac{XX'}{s(X'\gamma^{*})}s_{1}(X'\gamma^{*})^{2}\right],
\]
which yields $G_{22}$ in (\ref{eq:Correct_spec}), and $S_{11}$
and $S_{22}$ in (\ref{eq:Correct_spec}).

\begin{sloppy}Application of Theorem 4.5 in \citet{Newey:McFadden:1994}
implies that $\hat{G}^{-1}\hat{S}\hat{G}^{-1}\rightarrow^{p}G^{-1}SG^{-1}$.
Under Assumption \ref{ass:ANorm}, steps similar to the proof of Lemma
\ref{lem:QDiff} show that for a neighborhood $\mathcal{N}$ of $\theta^{*}$
we have $E[\sup_{\theta\in\mathcal{N}}||m(Y,X,\theta)||^{2}]<\infty$:
bound (\ref{eq:BoundExp}) implies, for all $\theta\in\mathcal{N}$,
\begin{align*}
\left\Vert Xs_{1}(X'\gamma)\{e(Y,X,\theta)^{2}-1\}\right\Vert ^{2} & \leq\left\{ C[||X||+||X||s_{2}(X'\bar{\gamma})][Y^{2}+||X||^{2}]\right\} ^{2}\\
 & \leq2C[||X||^{2}+||X||^{2}s_{2}(X'\bar{\gamma})^{2}][Y^{4}+||X||^{4}],
\end{align*}
and $E[\sup_{\theta\in\mathcal{N}}||m(Y,X,\theta)||^{2}]<\infty$
requires $E[Y^{4}||X||^{2}]<\infty$, $E[||X||^{6}]<\infty$, and,
for all $\gamma\in\Theta_{\gamma}$, $E[Y^{4}||X||^{2}s_{2}(X'\gamma)^{2}]<\infty$,
and $E[||X||^{6}s_{2}(X'\gamma)^{2}]<\infty$. By Holder's inequality
$E[Y^{4}||X||^{2}]<\infty$ and $E[Y^{4}||X||^{2}s_{2}(X'\gamma)^{2}]<\infty$
if $E[Y^{6}]<\infty$, $E[||X||^{6}]<\infty$, and $E[||X||^{6}s_{2}(X'\gamma)^{6}]<\infty$,
which hold under Assumption \ref{ass:ANorm}. Moreover, $\theta\mapsto m(Y,X,\theta)$
is continuous at $\theta^{*}$ a.s.  The result follows.\qed\par\end{sloppy}

\end{document}